\journal{Journal of Multivariate Analysis}
\newenvironment{reminderbox}{
  \begin{mdframed}[
    backgroundcolor=gray!15,
    linecolor=gray!60,
    shadow=true,
    shadowsize=4pt,
    roundcorner=5pt,
    innertopmargin=1em,
    innerbottommargin=1em,
    innerleftmargin=1em,
    innerrightmargin=1em
  ]
  \small
}{
  \end{mdframed}
}
  \def\,{}
\newcommand{\dd}{\mathop{}\!\mathrm{d}}
\newcommand{\SetAlgoItemize}{\setlength{\leftmargini}{1em}
}
\newtheorem{theorem}{Theorem}\newtheorem{definition}{Definition}\newtheorem{proposition}{Proposition}\newtheorem{remark}{Remark}
\patchcmd{\abstract}{\vskip\abs@vskip}{\vskip0pt}{}{}
\let\@date\relax
\patchcmd{\maketitle}{\par\vskip 0em \relax}{\par\vskip 0em \relax}{}{}
\long\def\pprintMaketitle{\clearpage
  \iflongmktitle\if@twocolumn\let\columnwidth=\textwidth\fi\fi
  \resetTitleCounters
  \def\baselinestretch{1}\printFirstPageNotes
  \begin{\elsarticletitlealign}\thispagestyle{pprintTitle}\def\baselinestretch{1}\Large\@title\par\vskip6pt\ifx\@elsarticlenewpageafter\newpage@after@title \newpage
    \fi \ifdoubleblind
      \vspace*{2pc}
    \else
      \normalsize\elsauthors\par\vskip10pt
      \footnotesize\itshape\elsaddress\par\vskip1em \fi
    \ifx\@elsarticlenewpageafter\newpage@after@author \newpage
    \fi \hrule\vskip12pt
    \ifvoid\absbox\else\unvbox\absbox\par\vskip10pt\fi
    \ifvoid\keybox\else\unvbox\keybox\par\vskip10pt\fi
    \hrule\vskip12pt
    \ifx\@elsarticlenewpageafter\newpage@after@abstract \newpage
    \fi \end{\elsarticletitlealign}\gdef\thefootnote{\arabic{footnote}}}
  \patchcmd{\pprintMaketitle}{\clearpage}{}{}{}
\begin{document}

\vspace*{-6em}
\begin{center}
\begin{reminderbox}
  \centering
  \textbf{\textsf{This is a preprint. The revised version of this paper is published as}}\\\smallskip
  Andr{\'{e}} F. B. Menezes, Andrew C. Parnell, and Keefe Murphy (2025),
  ``Finite mixture representations of zero-and-$N$-inflated distributions for count-compositional data'',
  \textit{Journal of Multivariate Analysis}, 210:105492.\\ \href{https://doi.org/10.1016/j.jmva.2025.105492}{doi: 10.1016/j.jmva.2025.105492}
\end{reminderbox}
\end{center}

\begin{frontmatter}

\author[1]{Andr{\'e} F. B. Menezes\,\orcidlink{0000-0002-3320-9834}\corref{mycorrespondingauthor}}
\author[2]{Andrew C. Parnell\,\orcidlink{0000-0001-7956-7939}}
\author[1]{Keefe Murphy\,\orcidlink{0000-0002-7709-3159}}

\address[1]{Hamilton Institute and Department of Mathematics and Statistics, Maynooth University, Ireland}
\address[2]{School of Mathematics and Statistics, Insight Centre for Data Analytics, University College Dublin, Ireland}

\title{\texorpdfstring{Finite mixture representations of zero-and-$N$-inflated distributions for count-compositional data}{Finite mixture representations of zero-and-N-inflated distributions for count-compositional data}}

\cortext[mycorrespondingauthor]{Corresponding author. Email address: \url{andrefelipemaringa@gmail.com} (A.F.B. Menezes).}

\begin{abstract}
We provide novel probabilistic portrayals of two multivariate models designed to handle zero-inflation in count-compositional data.
We develop a new unifying framework that represents both as finite mixture distributions.
One of these distributions, based on Dirichlet-multinomial components, has been studied before, but has not yet been properly characterised as a sampling distribution of the counts.
The other, based on multinomial components, is a new contribution.
Using our finite mixture representations enables us to derive key statistical properties, including moments, marginal distributions, and special cases for both distributions.
We develop enhanced Bayesian inference schemes with efficient Gibbs sampling updates, wherever possible, for parameters and auxiliary variables, demonstrating improvements over existing methods in the literature.
We conduct simulation studies to evaluate the efficiency of the Bayesian inference procedures and
present applications to a human gut microbiome dataset
to illustrate the practical utility of the proposed distributions.
\end{abstract}

\begin{keyword}
Count-compositional data \sep Dirichlet-multinomial distribution \sep finite mixture distribution \sep multinomial distribution \sep $N$-inflation \sep zero-inflation.
\MSC[2020] \sep 62H05 \sep 62F15 \sep 62H30.
\end{keyword}

\end{frontmatter}

\section{Introduction}\label{sec:introduction}

The excess of zeros in count-compositional data occurs when one or more categories have a larger number of observed
zeros than expected under common statistical distributions, such as the multinomial and Dirichlet-multinomial.
The primary complication in multivariate count-compositional settings, compared to univariate cases,
is that the excess zeros can occur in a single category or across multiple categories. In extreme cases where zeros co-occur in all but one category, the count for the remaining category will coincide with the number of trials, $N$. This phenomenon has been referred to in the univariate zero-inflation literature as `endpoint-inflation' \citep{Tian2015,Dupuy2017}, in relation to a zero-inflated extension of the binomial distribution first proposed by \citet{Deng2015} However, we stress throughout this paper that zero-inflation in multivariate settings can also induce another type of inflation. Specifically, when zeros co-occur while more than one category is non-zero, the counts in the non-zero categories will also inflate, as the excess zeros in some categories effectively redistribute probability mass to the others.

We discuss two multivariate probability distributions designed to address the prevalence of excess zeros
when modelling count-compositional data. We introduce a unifying framework that represents these distributions as finite mixtures.
Specifically, we derive a novel zero-and-$N$-inflated multinomial (ZANIM) distribution,
which is based on multinomial mixture components, and
then show that our framework also incorporates
the zero-and-$N$-inflated Dirichlet-multinomial (ZANIDM) distribution.
Although ZANIDM was first introduced by \citet{Koslovsky2023}, under the name ZIDM, it was described only through a stochastic representation via a mixture distribution on the count probabilities. We fully characterise ZANIDM as a sampling distribution on the counts capable of simultaneously modelling both zero-and-$N$-inflation and overdispersion using Dirichlet-multinomial mixture components.

Our paper is structured as follows. We derive the finite mixture representations of both distributions in Section \ref{sec:model_derivation}.
These representations facilitate the derivation of some key theoretical properties of the distributions in Section \ref{sec:properties}. We propose Bayesian frameworks for model inference in Section \ref{sec:inference}. In particular, for ZANIDM, we note that it is possible to improve the efficiency of the MCMC algorithm by
marginalising out a latent variable, which was not done by \citet{Koslovsky2023}. We present two simulation studies in Section \ref{sec:simulation_study} which i) compare different MCMC
algorithms for inferring the parameters of ZANIDM and ii) illustrate the practical utility of both distributions
when dealing with zero-inflation in count-compositional data. Applications to a human gut microbiome dataset follow in Section \ref{sec:microbiome}. We conclude with a brief discussion in Section \ref{sec:discussion}. Additional details on associated derivations and inference schemes for both distributions are deferred to the Appendices. We also provide further results in the Supplementary Material. For now, we begin by describing some related proposals.

\subsection{Related work}
\label{sec:lit_review}

Many extensions of the multinomial distribution have been proposed in the literature,
most of which aim to address extra variation relative to its inherent limitations,
particularly its negative covariance structure. The
Dirichlet-multinomial \citep{Mosimann1962}, finite mixture of multinomials
\citep{Morel1993}, and Conway-Maxwell-multinomial \citep{Kadane2018,Morris2020} distributions are notable examples.
In contrast, there are relatively few extensions of the multinomial or other
count-compositional distributions which address the issue of excess zeros.
Key contributions in this area include:
\citet{Diallo2018}, who studied a specific case of zero-inflation in the multinomial distribution; \citet{Tang2019}, who
introduced the zero-inflated generalised Dirichlet-multinomial by modifying the beta stick-breaking representation
of the generalised Dirichlet distribution \citep{Connor1969} to incorporate a zero-augmented beta distribution; and
\citet{Tuyl2019}, who proposed a spike-and-slab prior for the multinomial probability parameter,
assigning positive probability mass to zero.

More recently, \citet{Zeng2023} proposed the zero-inflated logistic normal multinomial,
while \citet{Koslovsky2023} introduced a zero-inflated extension of the Dirichlet-multinomial,
leveraging its gamma representation to incorporate zero-inflation into the count probabilities.
Notably, both of these recent works focus on modifying the latent space rather than the sampling
distribution of the counts, and we note some similarities in their derivations. However, \citet{Zeng2023} and \citet{Koslovsky2023} primarily focus on modelling data from human microbiome studies, without
emphasising the theoretical properties of their models.
Our paper addresses this gap by: (i) presenting a unified framework for both models,
reformulating them in terms of their unconditional representation without latent variables,
(ii) leveraging this framework to derive key statistical properties for both models, and
(iii) discussing Bayesian inference using their latent structures while proposing improvements to the sampling scheme for the ZANIDM distribution. \section{Derivation of the distributions}\label{sec:model_derivation}
In what follows, we shall assume that
$\mathbf{Y} = (Y_{1}, \ldots, Y_{d})$ denotes a $d$-dimensional random vector of
count compositions, where $\mathbf{y} = (y_{1}, \ldots, y_{d})$ represents the observed
data.

\subsection{\texorpdfstring{Zero-and-$N$-inflated multinomial distribution}{Zero-and-N-inflated multinomial distribution}}\label{sec:zanim}

A well-known probability distribution for describing count-compositional data is
the multinomial distribution, whose probability mass function (PMF) is given by
\begin{equation}\label{eq:pmf_multinomial}
\Pr \lbrack \mathbf{Y} = \mathbf{y}; \bm{\theta} \rbrack =
\binom{N}{y_{1} \dots y_{d}}\prod_{j=1}^d\theta_j^{y_{j}}, \quad \mathbf{y} \in \Omega_{d, N},
\end{equation}
where the sample space $\bm{\Omega}_{d,N} = \left\{\mathbf{y} \in (0, \ldots, N)^d; \sum_{j=1}^d y_{j} = N\right\}$ is a $d$-dimensional discrete simplex and $\bm{\theta} = (\theta_1, \ldots, \theta_d)$
is a vector of category-specific success probability parameters, with each $\theta_j \geq 0$,
such that $\sum_{j=1}^d\theta_j=1$.
Here, $N$ is a known constant denoting the number of trials.
Within this paradigm, we shall consider the parameterisation
$\theta_j = \lambda_j/\sum_{k=1}^{d}\lambda_k,$
where $\lambda_j$ is a measure of the relative importance of category $j$;
when it is normalised, it gives the occurrence probability of category $j$, i.e., $\theta_j$.

We begin by expressing the multinomial likelihood as a product of
Poisson likelihoods. To obtain this formulation, we introduce the auxiliary random variable $\phi$, defined as
\begin{equation}\label{eq:gamma_trick}
(\phi \mid \bm{\lambda}, \mathbf{y}) \sim \operatorname{Gamma}\left\lbrack N,
\sum_{j=1}^{d}\lambda_j\right\rbrack,
\end{equation}
which leads to the following joint distribution for ($\mathbf{Y}$, $\phi$):
\begin{equation}\label{eq:pmf_multinomial_augmented}
p(\mathbf{y}, \phi; \bm{\lambda}) = \dfrac{N!\phi^{N - 1}}{\Gamma(N)}
\prod_{j=1}^d\left\lbrack\dfrac{\lambda_j^{y_{j}}e^{-\lambda_j\phi}}{y_{j}!}\right\rbrack,
\end{equation}
where the marginal distribution of $\mathbf{Y}$ is given by \eqref{eq:pmf_multinomial}.
We note that the augmented likelihood in \eqref{eq:pmf_multinomial_augmented} factors into independent terms for each $\lambda_j$ and that the same likelihood, up to a multiplicative constant, can be obtained via the
multinomial-Poisson transformation of \citet{Baker1994}.
For an arbitrary category $j$,
the likelihood contribution of an observation takes a `Poisson-type' form, i.e.,
\begin{equation}\label{eq:likelihood_contribution_category_j}
p(y_{j}, \phi \mid \lambda_j) \propto \dfrac{\lambda_j^{y_{j}}e^{-\lambda_j\phi}}{y_{j}!}, \quad y_{j} \in \{0, \ldots, N\}.
\end{equation}

The excess of zeros in multinomial count data may be structural in nature.
A reasonable way to address this using the augmented likelihood given in
\eqref{eq:pmf_multinomial_augmented} is to introduce additional parameters to account for
zero-inflation with respect to each category.
Specifically, we consider a mixture-type approach by adjusting the `Poisson-type' form in
\eqref{eq:likelihood_contribution_category_j} to include a zero-inflation parameter for each category, in the spirit of \citet{Lambert1992}.
Thus, the modified likelihood contribution for category $j$ adopts a `ZI-Poisson-type' form, i.e.,
\begin{equation}\label{eq:ZI_likelihood_contribution_category_j}
p(y_j, \phi \mid \lambda_j, \zeta_j) \propto \left(\zeta_j\mathds{1}_0(y_{j}) + (1 - \zeta_j)\dfrac{\lambda_j^{y_{j}}e^{-\lambda_j\phi}}{y_{j}!}\right),\quad y_{j} \in \{0, \ldots, N\},
\end{equation}
where $\zeta_j \in \lbrack 0, 1\rbrack$ denotes the probability of zero-inflation
of category $j$, which we henceforth refer to as the `excess-of-zero parameter',
and $\mathds{1}_0(y_{j})$ is the usual indicator function $\mathds{1}(y_{j}=0)$,
which evaluates to $1$ if $y_{j}=0$ or $0$ otherwise. It is important to note
that $Y_j$ may still be $0$ even if $\zeta_j=0$. We refer to such zeros as
`sampling zeros', in contrast to `structural zeros'.

Replacing the product of `Poisson-type' terms in \eqref{eq:pmf_multinomial_augmented}
with a product of independent `ZI-Poisson-type' terms from \eqref{eq:ZI_likelihood_contribution_category_j} yields
a new joint distribution for $(\mathbf{Y}, \phi)$ given by

\begin{equation}\label{eq:pmf_zanim_augmented}
p(\mathbf{y}, \phi; \bm{\lambda}, \bm{\zeta}) =
\dfrac{N!\phi^{N - 1}}{\Gamma(N)}\prod_{j=1}^d \left\lbrack
\zeta_j\mathds{1}_0(y_{j}) + (1 - \zeta_j)\dfrac{\lambda_j^{y_{j}}e^{-\lambda_j\phi}}{y_{j}!}
\right\rbrack.
\end{equation}
We now aim to marginalise out the latent variable $\phi$ from
\eqref{eq:pmf_zanim_augmented} and ensure that the function
$\int p(\mathbf{y}, \phi; \bm{\lambda}, \bm{\zeta}) \dd \phi$ will be a proper PMF.
We state our main result concerning the PMF of this new distribution,
named the zero-and-$N$-inflated multinomial (ZANIM) distribution, after first
introducing some notation that will help to represent it in the form of a finite mixture.

\begin{definition}\label{def:set_K}
Let $\mathfrak{K} = \{\mathcal{K} \subseteq \{1,\ldots,d\}; 1 \leq \lvert\mathcal{K}\rvert \leq d-2\}$
represent the set of all subsets $\mathcal{K}$ of $\{1,\ldots,d\}$ with cardinality $\lvert\mathcal{K}\rvert$ between $1$ and $d - 2$, such that the subsets $\mathcal{K}$ gather categories with counts of zero, excluding the cases where exactly $d$ and $d-1$ categories are zero-inflated.
\end{definition}

\begin{definition}\label{def:etas}
Let $\bm{\zeta} = (\zeta_1, \ldots, \zeta_d)$, with $\zeta_j \in \lbrack 0, 1\rbrack$, and let
$\eta_d = \prod_{j=1}^d(1-\zeta_j)$,
$\eta_0 = \prod_{j=1}^d\zeta_j$, and
$\eta_N^{(j)} = (1 - \zeta_{j})\prod_{k \colon k \neq j} \zeta_{k}$, for $j \in \{1,\ldots, d\}$.
We define
$\eta_{\mathcal{K}} = \prod_{k \in \mathcal{K}} \zeta_k\prod_{j \notin \mathcal{K}} (1 - \zeta_j)$ for each $\mathcal{K} \in \mathfrak{K}$ and let $\bm{\eta}_{\mathfrak{K}} = \{\eta_\mathcal{K}; \mathcal{K} \in \mathfrak{K}\}$ denote the set of such terms.
The full set of mixture weights
$\bm{\eta} = \{\eta_d, \eta_0, \eta_{N}^{(1)},\ldots,\eta_{N}^{(d)}, \bm{\eta}_{\mathfrak{K}}\}$ are functions of the $\bm{\zeta}$ parameters and sum to one as required.
\end{definition}
\begin{definition}\label{def:dirac}
Let $\delta_c(x)$ denote the Dirac measure with unit mass at $c$ for $x \in \mathbb{R}$ and denote the random variable with such a measure by $X \sim \delta_c(\cdot)$.
For $\mathbf{x} \in \mathbb{R}^d$, we define the multivariate Dirac measure as
$\delta_{\mathbf{0}_d}(\mathbf{x}) = \prod_{j=1}^d\delta_0(x_j)$ and denote the
corresponding random vector with unit mass at $\mathbf{0}_d$ by $\mathbf{X} \sim \delta_{\mathbf{0}_d}(\cdot)$.
\end{definition}
\begin{theorem}\label{theo:zanim_pmf}
The zero-and-$N$-inflated multinomial distribution, which we denote by
$\mathbf{Y} \sim \operatorname{ZANIM}_d\lbrack N, \bm{\theta}, \bm{\zeta} \rbrack$, is a finite mixture
distribution with $2^d$ components and PMF given by
\begin{align}
\label{eq:zanim_pmf__no_zero__mixture}
\Pr\lbrack
\mathbf{Y} = \mathbf{y}; \bm{\theta}, \bm{\zeta}\rbrack &= \eta_d\,\binom{N}{y_{1} \dots y_{d}}\,
\prod_{j=1}^d
\theta_j^{y_{j}} \\
\label{eq:zanim_pmf__only_one_no_zero__mixture}
&\phantom{=}~+
\sum_{j=1}^{d}\,
\eta_{N}^{(j)}
\left\lbrack
\mathds{1}_0\left(\sum_{k\colon k \neq j} y_{k} \right)\,
\right\rbrack \\
\label{eq:zanim_pmf__sets_of_zi_non_zi__mixture}
&\phantom{=}~+
\sum_{\mathcal{K} \in \mathfrak{K}}
\eta_{\mathcal{K}}
\left\lbrack
\mathds{1}_0\left(\sum_{k \in \mathcal{K}} y_k\right)
\binom{N}{\{y_j\}_{j \notin \mathcal{K}}}
\prod_{j \notin \mathcal{K}} \left(
\theta_j^{\mathcal{K}}
\right)^{y_j}
\right\rbrack  \\
\label{eq:zanim_pmf__all_zeros__mixture}
&\phantom{=}~+\eta_0\,\prod_{j=1}^{d}\,\mathds{1}_0(y_j)
\quad \mathbf{y} \in \bm{\Omega}_{d,N}^0,
\end{align}where
$\bm{\theta} = (\theta_1, \ldots, \theta_d)$, with $\theta_j \geq 0$ and $\sum_{j=1}^d\theta_j=1$,
$\theta_j^{\mathcal{K}} = \theta_j/(1 - \sum_{\ell \in \mathcal{K}}\,\theta_{\ell})$, and
$\bm{\zeta} = (\zeta_1, \ldots, \zeta_d)$, with
$\zeta_j \in \lbrack 0, 1\rbrack$.
The mixture weights $\bm{\eta}$ are functions of $\bm{\zeta}$ (see \autoref{def:etas}).
\end{theorem}
\begin{proof}[\textbf{\upshape Proof:}]
See \ref{app:zanim_pmf} for details.
\end{proof}

\begin{remark}
    In \autoref{theo:zanim_pmf}, $\bm{\Omega}_{d,N}^0 = \bm{\Omega}_{d,N} \cup \mathbf{0}_d$ is an expansion of the multinomial support, which accounts for the case where
    $y_j=0$ for all $j\in\{1,\ldots,d\}$. Although simultaneously observing zero counts for all categories may be a rarity in practice, the addition of the associated component \eqref{eq:zanim_pmf__all_zeros__mixture} is necessary to ensure the validity of the PMF.
\end{remark}

It is clear that the ZANIM distribution is a finite mixture of multinomials, along with two degenerate distributions for the cases where all counts are zero, corresponding to \eqref{eq:zanim_pmf__all_zeros__mixture}, and where all but one category are zero, corresponding to \eqref{eq:zanim_pmf__only_one_no_zero__mixture}. Furthermore, \eqref{eq:zanim_pmf__no_zero__mixture} represents a multinomial distribution with $d$ categories, $N$ trials, and probabilities $\bm{\theta} = (\theta_1, \ldots, \theta_d)$, while \eqref{eq:zanim_pmf__sets_of_zi_non_zi__mixture} represents multinomials of reduced dimension with $N$ trials and probabilities $\theta_j^{\mathcal{K}}$ for all categories $j \notin \mathcal{K}$. These components reflect the fact that counts in the remaining categories will inflate when zeros co-occur in fewer than $d-1$ categories, while the $N$-inflation components \eqref{eq:zanim_pmf__only_one_no_zero__mixture} capture the extreme cases in which exactly $d-1$ categories are zero-inflated.

We note, however, that a given $\mathbf{y}_i=\{y_{i1},\ldots,y_{id}\}$ can belong to as few as one and at most $2^{d-1}$ components, given the presence of indicator functions in the above PMF. Obviously, if there are no zeros in the given $\mathbf{y}_i$, we need only evaluate the purely multinomial component \eqref{eq:zanim_pmf__no_zero__mixture}. In the special case where $\mathbf{y}_i$ consists entirely of zeros, we evaluate only the purely degenerate component \eqref{eq:zanim_pmf__all_zeros__mixture}.
Otherwise, when $\mathbf{y}_i$ contains both zero and non-zero counts, we require the evaluation of $2^q$ components, where $q\in\lbrack 1,d-1\rbrack$ denotes the number of observed zeros, including the purely multinomial component, the reduced multinomial components, and (when $q=d-1$ exactly) the corresponding $N$-inflated component. This simplifies likelihood calculations by obviating the need to evaluate all $2^d$ components and highlights how relatively few components need to be evaluated when the number of zeros is low.

\begin{proposition}\label{prop:zanim_finite_mixture}
If $\mathbf{Y} \sim \operatorname{ZANIM}_d\lbrack N, \bm{\theta}, \bm{\zeta} \rbrack$, then $\mathbf{Y}$ has
the stochastic representation:
\begin{equation}\label{eq:zanim_finite_mixture}
\mathbf{Y} \sim \eta_0
\delta_{\mathbf{0}_d}(\cdot) + \sum_{j=1}^d\eta^{(j)}_N\left(N\mathbf{e}_d^{(j)}\right) +
\eta_d\operatorname{Multinomial}_d\lbrack N, \bm{\theta} \rbrack +
\sum_{\mathcal{K} \in \mathfrak{K}}\eta_{\mathcal{K}}\operatorname{Multinomial}_d\lbrack N, \bm{\theta}_0^{\mathcal{K}}\rbrack,
\end{equation}
where $\smash{\mathbf{e}_d^{(j)}=(\delta_{0}(y_1), \ldots, \delta_{0}(y_{j-1}), \delta_{1}(y_j), \delta_{0}(y_{j+1}),\ldots,\delta_{0}(y_{d}))}$
denotes the canonical basis vector of length $d$, with Dirac mass at $1$ in the $j$-th entry and Dirac masses at zero elsewhere, and
$\bm{\theta}_0^{\mathcal{K}}$ reflects the fact that the entries of $\{\theta_1^\mathcal{K},\ldots,\theta_d^\mathcal{K}\}$
are zero for all $j \in \mathcal{K}$.
\begin{proof}[\textbf{\upshape Proof:}]
Follows by identifying each mixture component in the ZANIM PMF.
\end{proof}
\end{proposition}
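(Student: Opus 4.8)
The plan is to exploit the standard equivalence between a finite-mixture PMF and a hierarchical (stochastic) representation: whenever a PMF can be written as $\sum_k w_k f_k(\mathbf{y})$ with nonnegative weights $w_k$ summing to one and each $f_k$ a valid PMF, a random vector with that PMF admits the representation in which, with probability $w_k$, it is drawn from the distribution whose PMF is $f_k$. Since Definition \ref{def:etas} already guarantees that the weights $\bm{\eta}$ are nonnegative and sum to one, it suffices to recognise each $f_k$ in turn and match it to one of the distributions appearing in Equation \eqref{eq:zanim_finite_mixture}. I would therefore proceed term by term through the four lines of the PMF established in Proposition \ref{prop:zanim_pmf}.

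First I would handle the two degenerate lines. The term in Equation \eqref{eq:zanim_pmf__all_zeros__mixture}, namely $\eta_0\prod_{j=1}^d\mathds{1}_0(y_j)$, is precisely $\eta_0$ times the PMF of the point mass at $\mathbf{0}_d$, i.e. $\eta_0\,\delta_{\mathbf{0}_d}(\cdot)$ from Definition \ref{def:dirac}. For the $N$-inflated line in Equation \eqref{eq:zanim_pmf__only_one_no_zero__mixture}, I would invoke the support constraint $\sum_{j=1}^d y_j = N$ on $\bm{\Omega}_{d,N}^0$: the indicator $\mathds{1}_0(\sum_{k\neq j} y_k)$ forces $y_k=0$ for every $k\neq j$, which together with the sum constraint forces $y_j=N$. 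Hence this indicator is the PMF of the degenerate vector $\mathbf{E}_d^{(j)}$ with mass $N$ in position $j$ and mass $0$ elsewhere, and the corresponding term matches $\sum_{j=1}^d \eta_N^{(j)}\,\mathbf{E}_d^{(j)}$.

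Next I would match the two genuinely multinomial lines. Line \eqref{eq:zanim_pmf__no_zero__mixture} is, by inspection, $\eta_d$ times the ordinary multinomial PMF of Equation \eqref{eq:pmf_multinomial} with probabilities $\bm{\theta}$, giving $\eta_d\,\operatorname{Multinomial}_d[N,\bm{\theta}]$. The key identification is line \eqref{eq:zanim_pmf__sets_of_zi_non_zi__mixture}: here I would verify that the bracketed expression is exactly the PMF of a multinomial $\operatorname{Multinomial}_d[N,\bm{\theta}_0^{\mathcal{K}}]$ in which the probabilities for the categories in $\mathcal{K}$ are set to zero. A multinomial assigning zero probability to the categories in $\mathcal{K}$ places all its mass on outcomes with $y_k=0$ for every $k\in\mathcal{K}$, reproducing the factor $\mathds{1}_0(\sum_{k\in\mathcal{K}} y_k)$; on the remaining categories it reduces to a multinomial on the $\{j\notin\mathcal{K}\}$ coordinates with $N$ trials and renormalised probabilities $\theta_j^{\mathcal{K}} = \theta_j/(1-\sum_{\ell\in\mathcal{K}}\theta_\ell)$, which sum to one over $j\notin\mathcal{K}$ and yield the remaining factors $\binom{N}{\{y_j\}_{j\notin\mathcal{K}}}\prod_{j\notin\mathcal{K}}(\theta_j^{\mathcal{K}})^{y_j}$.

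I expect this last identification to be the only real obstacle, and it is more a matter of bookkeeping than of genuine difficulty: one must confirm that the renormalisation $\theta_j^{\mathcal{K}}$ is well defined and properly normalised over the surviving categories, and that the notation $\bm{\theta}_0^{\mathcal{K}}$, which zeroes out the entries indexed by $\mathcal{K}$, is consistent with the zero-probability interpretation above. Once each of the four lines is matched to its component distribution, the stochastic representation \eqref{eq:zanim_finite_mixture} follows immediately from the mixture equivalence, completing the proof.
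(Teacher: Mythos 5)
Your proposal is correct and takes essentially the same approach as the paper, whose proof is exactly the one-line identification of each mixture component in the ZANIM PMF of Proposition \ref{prop:zanim_pmf} with the corresponding term in Equation \eqref{eq:zanim_finite_mixture}. You simply make the bookkeeping explicit---the degenerate components via the support constraint, and the reduced multinomials via the renormalised probabilities $\theta_j^{\mathcal{K}}$---which is a faithful elaboration rather than a different route.
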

Our ZANIM naming convention comes from the observation that \eqref{eq:zanim_finite_mixture} contains degenerate components, which capture the case where $\mathbf{y}=\mathbf{0}_d$ and the cases where $y_j=N$. We also note that the mixture components in the ZANIM PMF are generated from a set of independent Bernoulli random variables, which leads to an alternative stochastic representation.
\begin{proposition}\label{prop:stochastic_representation_zanim}
If $\mathbf{Y} \sim \operatorname{ZANIM}_d\lbrack N, \bm{\theta}, \bm{\zeta} \rbrack$, then $\mathbf{Y}$ has
the stochastic representation:
\begin{align*}
(z_j \mid \zeta_j) &\sim \operatorname{Bernoulli}\lbrack 1 - \zeta_j \rbrack, \quad j\in\{1,\ldots, d\},\\
(\mathbf{Y} \mid N, \bm{\theta}, \mathbf{z}) &\sim
\begin{cases} \delta_{\mathbf{0}_d}(\cdot), & \textrm{if} \: z_{j} = 0 \: \forall j,\\
\operatorname{Multinomial}_d\left\lbrack N, z_1\dfrac{\theta_1}{1 - s}, \ldots, z_d\dfrac{\theta_d}{1 - s}\right\rbrack,
& \textrm{otherwise},
\end{cases}
\end{align*}where $\smash{s=\sum_{k=1}^d (1 - z_k)\theta_k}$.
\begin{proof}[\textbf{\upshape Proof:}]
Follows from the fact that the mixture weights from \autoref{def:etas} are products of Bernoulli probabilities.
\end{proof}
\end{proposition}

From the two stochastic representations above, we can easily generate values from the ZANIM distribution.
Obviously, the representation in \autoref{prop:stochastic_representation_zanim}
is more efficient.
We note that this representation has similarities to the zero-inflated logistic normal multinomial model of \citet{Zeng2023}. However, in their model, they do not consider the case where $z_{j}=0 \: \forall \: j$,
which has been studied in a specific space-time application by \citet{DouwesSchultz2024}.

\subsection{\texorpdfstring{Zero-and-$N$-inflated Dirichlet-multinomial distribution}{Zero-and-N-inflated Dirichlet-multinomial distribution}}\label{sec:zanidm}

Leveraging the hierarchical representation of the Dirichlet-multinomial distribution (henceforth DM)
through the compounding of the multinomial and Dirichlet distributions, \citet{Koslovsky2023} introduced
the zero-inflated Dirichlet-multinomial (ZIDM) distribution and provided a latent stochastic representation thereof (see \autoref{def:zanidm_stochastic_representation}).
Through our derivation of a novel probabilistic representation of ZIDM as a finite mixture
distribution, we suggest that the zero-and-$N$-inflated Dirichlet-multinomial (ZANIDM) distribution is a more appropriate name.

\begin{definition}\label{def:zanidm_stochastic_representation}
A random vector $\mathbf{Y}$ is said to follow a ZANIDM distribution if
it has the following stochastic representation:
\begin{align*}
(z_{j} \mid \zeta_{j}) & \sim \operatorname{Bernoulli}\lbrack 1 - \zeta_{j}\rbrack, \quad j \in \{1,\ldots, d\},\\
(\lambda_{j} \mid z_{j}, \alpha_{j}) &\sim (1 - z_{j})\delta_0(\cdot) + z_{j} \operatorname{Gamma}\lbrack \alpha_{j}, 1\rbrack, \\
(\mathbf{Y} \mid N, \bm{\theta}, \mathbf{z}) &\sim
\begin{cases} \delta_{\mathbf{0}_d}(\cdot), & \textrm{if} \: z_{j} = 0 \: \forall j,\\
\operatorname{Multinomial}_d\left\lbrack N, \theta_1, \ldots, \theta_d\right\rbrack,
& \textrm{otherwise},
\end{cases}\end{align*}
where $\theta_{j} = \lambda_{j} / \sum_{k=1}^{d}\lambda_{k}$.
In brief, we write $\mathbf{Y} \sim \operatorname{ZANIDM}_d\lbrack N, \bm{\alpha}, \bm{\zeta} \rbrack$,
where the parameters are: $N$, the number of trials;
$\bm{\zeta}=(\zeta_1, \ldots, \zeta_d)$, s.t. $\zeta_j\in \lbrack 0 , 1\rbrack$ is the
excess-of-zero parameter of category $j$; and
$\bm{\alpha} = (\alpha_1, \ldots, \alpha_d)$, the concentration parameters, with $\alpha_j > 0$.
\end{definition}

\begin{theorem}\label{theo:zanidm_pmf}
If $\mathbf{Y} \sim \operatorname{ZANIDM}_d\lbrack N, \bm{\alpha}, \bm{\zeta} \rbrack$, then $\mathbf{Y}$
is a finite mixture distribution with $2^d$ components and PMF given by
\begin{align}
\label{eq:zanidm_pmf__no_zero__mixture}
\Pr\lbrack \mathbf{Y} = \mathbf{y}; \bm{\alpha}, \bm{\zeta}\rbrack &= \eta_d
\dfrac{\Gamma\left(\alpha_s\right)\Gamma\left(N + 1\right)}{\Gamma\left(N + \alpha_s\right)}
\prod_{j=1}^d\left \lbrack \dfrac{\Gamma\left(y_j + \alpha_j\right)}{\Gamma\left(\alpha_j\right)\Gamma\left(y_j + 1\right)}
\right \rbrack\\
\label{eq:zanidm_pmf__only_one_no_zero__mixture}
&\phantom{=}~+
\sum_{j=1}^{d}
\eta_{N}^{(j)}
\left\lbrack
\mathds{1}_0\left(\sum_{k\colon k \neq j} y_{k} \right)
\right\rbrack \\
\label{eq:zanidm_pmf__sets_of_zi_non_zi__mixture}
&\phantom{=}~+
\sum_{\mathcal{K} \in \mathfrak{K}}
\eta_{\mathcal{K}}
\left\lbrack
\mathds{1}_0\left(\sum_{i \in \mathcal{K}} y_i\right)
\right\rbrack
\dfrac{\Gamma\alpha^{\mathcal{K}}_s)\Gamma(N + 1)}{\Gamma(N + \alpha^{\mathcal{K}}_s)}
\prod_{j \notin \mathcal{K}}\left\lbrack
\dfrac{\Gamma(y_j + \alpha_j)}{\Gamma(\alpha_j)\Gamma(y_j + 1)}
\right \rbrack\\
\label{eq:zanidm_pmf__all_zeros__mixture}
&\phantom{=}~+\eta_0\prod_{j=1}^{d}\mathds{1}_0(y_j),
\quad \mathbf{y} \in \bm{\Omega}_{d,N}^0,
\end{align}
where $\alpha_s = \sum_{j=1}^d\alpha_j$,
$\alpha^{\mathcal{K}}_s = \sum_{j \notin \mathcal{K}}\alpha_j$, and the mixture weights
$\bm{\eta}$ are as given in \autoref{def:etas}.
\begin{proof}[\textbf{\upshape Proof:}]
Marginalising out the latent variables $z_{j}$ and $\lambda_j$ in \autoref{def:zanidm_stochastic_representation}
and accounting for the fact that $y_j = 0$ when $\lambda_j=0(z_{j}=1)$ yields the desired result.
See \ref{app:zanidm_pmf} for details.
\end{proof}
\end{theorem}

Recasting ZIDM as ZANIDM under our finite mixture framework enables a fuller characterisation of the distribution which highlights, in particular, that the PMF also incorporates degenerate components.
Notably, the $N$-inflation components in \eqref{eq:zanidm_pmf__only_one_no_zero__mixture}, as well as the case where $\mathbf{y}=\mathbf{0}_d$ in \eqref{eq:zanidm_pmf__all_zeros__mixture}, are identical to their ZANIM counterparts in
\eqref{eq:zanim_pmf__only_one_no_zero__mixture} and \eqref{eq:zanim_pmf__all_zeros__mixture}, respectively.
However, the remaining component distributions differ from the ZANIM distribution in \eqref{eq:zanidm_pmf__no_zero__mixture} is a DM distribution and \eqref{eq:zanidm_pmf__sets_of_zi_non_zi__mixture} represents DM distributions of reduced dimension, in contrast to the multinomial distribution in \eqref{eq:zanim_pmf__no_zero__mixture} and the sets of reduced multinomial distributions in \eqref{eq:zanim_pmf__sets_of_zi_non_zi__mixture} under the ZANIM distribution.
Similarly, the finite mixture stochastic representation for ZANIDM is obtained by appropriately
replacing the multinomial distributions in \autoref{prop:zanim_finite_mixture} with DM distributions.
\begin{proposition}\label{prop:zanidm_finite_mixture}
If $\mathbf{Y} \sim \operatorname{ZANIDM}_d\lbrack N, \bm{\alpha}, \bm{\zeta} \rbrack$, then $\mathbf{Y}$ has
the stochastic representation:
\begin{equation}\label{eq:zanidm_finite_mixture}
\mathbf{Y} \sim \eta_0\delta_{\mathbf{0}_d}(\cdot) +
\sum_{j=1}^d\eta^{(j)}_N\left(N\mathbf{e}_d^{(j)}\right) + \eta_d\operatorname{DM}\lbrack N, \bm{\alpha} \rbrack +
\sum_{\mathcal{K} \in \mathfrak{K}}\eta_{\mathcal{K}}\operatorname{DM}\lbrack N, \bm{\alpha}_0^{\mathcal{K}}\rbrack,
\end{equation}where $\bm{\alpha}_0^{\mathcal{K}}$ denotes a concentration parameter vector of reduced dimension which reflects the fact that entries of $\bm{\alpha}$ are irrelevant for all $j \in \mathcal{K}$.
\begin{proof}[\textbf{\upshape Proof:}]
Follows by identifying each mixture component in the ZANIDM PMF.
\end{proof}
\end{proposition}

\begin{remark}
    Although ZANIM and ZANIDM are finite mixture distributions with $2^d$ components, there are only $2d$
    parameters in each case, since the component weights are functions of $\bm{\zeta}$ and their respective $\bm{\theta}$ and $\bm{\alpha}$
    parameters fully determine the non-degenerate components.
\end{remark}

\begin{remark}\label{remark:zanim_zanidm_relationship}
The DM distribution arises from compounding the multinomial distribution with a probability vector which follows a Dirichlet distribution.
The concentration of its random success probabilities around their mean is governed by $\alpha_s=\sum_{j=1}^d\alpha_j$.
This variability diminishes as $\alpha_s \rightarrow \infty$ while keeping the
proportions $\alpha_j/\alpha_s$ constant. Consequently, the Dirichlet distribution collapses to a degenerate distribution and the DM tends toward a multinomial distribution with fixed probabilities.
Hence, we can relate the ZANIM and ZANIDM distributions by noting that, subject to keeping the corresponding parameters constant,
the DM component in the ZANIDM distribution approaches a multinomial distribution with parameters $\alpha_j/\alpha_s$ as $\alpha_s\rightarrow \infty$, while the reduced DM components tend toward their multinomial counterparts with parameters $\alpha_j/\alpha_s^{\mathcal{K}}$ as $\alpha_s^{\mathcal{K}}\rightarrow\infty$. The remaining degenerate components are already common to both distributions.
\end{remark}

\begin{remark}\label{remark:special_cases}
When $d=2$, the two-dimensional vector $\mathbf{Y}$
can be represented by a single random variable.
In light of the above derivations, we note that two special cases arise from the ZANIM and ZANIDM distributions in such instances; namely,
the zero-and-endpoint-inflated binomial (ZANIB) \citep{Deng2015,Tian2015} and zero-and-endpoint-inflated beta-binomial (ZANIBB) distributions, respectively. To the best of our knowledge, the ZANIBB distribution was first proposed by \citet{Sweeney2016}. The ZANIB and ZANIBB PMFs can be easily deduced by setting $d=2$ in \autoref{theo:zanim_pmf} and \autoref{theo:zanidm_pmf}, respectively, and discarding the corresponding reduced multinomial and reduced Dirichlet-multinomial components in \eqref{eq:zanim_pmf__sets_of_zi_non_zi__mixture} and \eqref{eq:zanidm_pmf__sets_of_zi_non_zi__mixture}, which are exclusive to $d>2$ settings.
\end{remark}

 \section{Properties of ZANIM and ZANIDM}
\label{sec:properties}

We outline some basic probabilistic properties of the ZANIM and ZANIDM distributions.
Notably, the properties that follow are consequences of the fact that both the
ZANIM and ZANIDM distributions can be seen as finite mixture distributions.
As described in \autoref{theo:zanim_pmf} and \autoref{theo:zanidm_pmf}, both models have $K=2^d$ mixture components and the corresponding mixture weights $\bm{\eta}$ are functions of the excess-of-zero parameters $\bm{\zeta}$
(see \autoref{def:etas}).

\subsection{\texorpdfstring{Marginal distribution of $Y_j$}{Marginal distribution of Yj}}\label{sec:marginal_Y_j}
When $\mathbf{Y}$ follows either the ZANIM or ZANIDM distribution, whose respective PMFs are given in
\eqref{eq:zanim_pmf__no_zero__mixture}--\eqref{eq:zanim_pmf__all_zeros__mixture} and
\eqref{eq:zanidm_pmf__no_zero__mixture}--\eqref{eq:zanidm_pmf__all_zeros__mixture}, the marginal PMF of the $j$-th element is itself a mixture.
Under both distributions, obtaining the marginal PMF of $Y_j$ involves summing over the support of all other elements of the random vector $\mathbf{Y}$ while fixing $Y_j=k$.
We use the well-known results that the multinomial and DM distributions have binomial and beta-binomial marginals, respectively, and introduce the following set notation to help write explicit formulas for the marginal PMFs.
\begin{definition}\label{def:set_Sj}
Let $\mathfrak{S}_j = \{\mathcal{S}_j \subseteq \{1,\ldots j-1, j+1 \ldots,d\}; 1 \leq \lvert\mathcal{S}_j\rvert \leq d-2\}$ represent the set of all subsets $\mathcal{S}_j$ of $\{1,\ldots,d\}\setminus \{j\}$
with cardinality $\lvert\mathcal{S}_j\rvert$ between $1$ and $d - 2$ for a given category $j$, such that the subsets $\mathcal{S}_j$ are obtained by excluding the cases which contain the $j$-th index from the subsets $\mathcal{K}$ described in \autoref{def:set_K}.
Note that $\lvert \mathfrak{S}_j \rvert = 2^{d-1} - 2$.
\end{definition}

Performing the required summation over each mixture component in the ZANIM and ZANIDM PMFs is
reasonably straightforward.
Under ZANIM, the `non-zero' component in \eqref{eq:zanim_pmf__no_zero__mixture}
and the `reduced dimension' components in \eqref{eq:zanim_pmf__sets_of_zi_non_zi__mixture}
yield weighted binomial distributions, with weights given by $\eta_d$ and $\eta_{\mathcal{S}_j} = \prod_{k\in\mathcal{S}_j}\zeta_k\prod_{j\notin\mathcal{S}_j}(1-\zeta_j)$, respectively.
The corresponding ZANIDM components in \eqref{eq:zanidm_pmf__no_zero__mixture} and
\eqref{eq:zanidm_pmf__sets_of_zi_non_zi__mixture}, yield similarly weighted beta-binomial distributions.
Recall that the remaining components are common to both distributions.
Firstly, regarding the $N$-inflated components in \eqref{eq:zanim_pmf__only_one_no_zero__mixture}
and \eqref{eq:zanidm_pmf__only_one_no_zero__mixture},
we note that marginalising the $N$-inflation component corresponding to the $j$-th category contributes a
degenerate mass at $N$ with probability $\eta_N^{(j)}$, while the remaining $N$-inflated components each
contribute a degenerate mass at $0$ with probability $\eta_N^{(k)}$, for $k \neq j$.
Secondly, the purely degenerate components in \eqref{eq:zanim_pmf__all_zeros__mixture} and
\eqref{eq:zanidm_pmf__all_zeros__mixture} also contribute a degenerate mass at $0$ with probability $\eta_0$. Therefore, the marginal distribution of $Y_j$ is degenerate at $0$ with probability $\eta_0 + \sum_{k\neq j} \eta_N^{(k)}$, which simplifies to $\zeta_j$. Finally, we obtain the marginal PMF of $Y_j$ under both distributions by combining the contributions of the associated marginal mixture components.

\begin{proposition}\label{prop:marginal_yj_zanim}
If $\mathbf{Y} \sim \operatorname{ZANIM}_d\lbrack N, \bm{\theta}, \bm{\zeta}\rbrack$, then the marginal distribution of $Y_j$ is
\begin{align}
\Pr\lbrack Y_j = k\rbrack =
\begin{cases}
\zeta_j + \eta_d\left(1 - \theta_j\right)^N +
\sum_{\mathcal{S}_j\in\mathfrak{S}_j}\eta_{\mathcal{S}_j}
\left(1 - \theta_{j}^{\mathcal{S}_j}\right)^N, &
\textrm{if} \quad k=0,\\
\eta_N^{(j)} + \eta_d\theta_j^N +
\sum_{\mathcal{S}_j\in\mathfrak{S}_j}\eta_{\mathcal{S}_j}\left(\theta_{j}^{\mathcal{S}_j}\right)^N, &
\textrm{if} \quad k=N,\\
\eta_d p_B\left(k; N, \theta_j\right) +
\sum_{\mathcal{S}_j\in\mathfrak{S}_j}\eta_{\mathcal{S}_j}p_B\left(k; N, \theta_{j}^{\mathcal{S}_j}\right),
&\textrm{if} \quad k\in\{1,\ldots, N-1\},\nonumber
\end{cases}
\end{align}
where $p_B(k; N, \theta)$ denotes the binomial PMF
and $\theta_{j}^{\mathcal{S}_j} = \theta_j/(1 - \sum_{\ell \in \mathcal{S}_j} \theta_\ell)$.
\begin{proof}[\textbf{\upshape Proof:}]
Follows from \autoref{def:set_Sj} and appropriate summation over the ZANIM PMF.
\end{proof}
\end{proposition}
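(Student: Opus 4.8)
The plan is to compute the marginal PMF of $Y_j$ by summing the ZANIM PMF over all coordinates $Y_\ell$, $\ell\neq j$, with $Y_j=k$ held fixed, using the elementary fact that marginalisation commutes with finite mixing: the marginal of $Y_j$ is the $\bm{\eta}$-weighted sum of the marginals of the individual mixture components. I would therefore marginalise each of the four blocks in Equations \eqref{eq:zanim_pmf__no_zero__mixture}--\eqref{eq:zanim_pmf__all_zeros__mixture} in turn, and regroup the contributions according to whether $k=0$, $k=N$, or $1\le k\le N-1$.

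First I would treat the two non-degenerate blocks. The pure multinomial in Equation \eqref{eq:zanim_pmf__no_zero__mixture} has binomial marginals, so summing over the remaining coordinates leaves $\eta_d\,p_B(k;N,\theta_j)$ for every $k$. For the reduced multinomials in Equation \eqref{eq:zanim_pmf__sets_of_zi_non_zi__mixture}, I would split the sum over $\mathcal{K}\in\mathfrak{K}$ according to whether $j\notin\mathcal{K}$ or $j\in\mathcal{K}$. When $j\notin\mathcal{K}$, category $j$ is an active coordinate of a reduced multinomial with probability $\theta_j^{\mathcal{K}}$, so its marginal is again binomial and contributes $\eta_{\mathcal{K}}\,p_B(k;N,\theta_j^{\mathcal{K}})$; by Definition \ref{def:set_Sj} these sets are exactly the $\mathcal{S}_j\in\mathfrak{S}_j$, which fixes the summation index and the parameter $\theta_j^{\mathcal{S}_j}$ appearing in the statement. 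When $j\in\mathcal{K}$, the indicator $\mathds{1}_0(\sum_{k\in\mathcal{K}}y_k)$ forces $Y_j=0$, so the component contributes its full weight $\eta_{\mathcal{K}}$ to the point mass at $k=0$.

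Next I would dispatch the degenerate blocks. The all-zeros component in Equation \eqref{eq:zanim_pmf__all_zeros__mixture} contributes $\eta_0$ to $k=0$. Among the $N$-inflated components in Equation \eqref{eq:zanim_pmf__only_one_no_zero__mixture}, the one indexed by $j$ places a point mass at $Y_j=N$ and contributes $\eta_N^{(j)}$ to $k=N$, while each component indexed by $m\neq j$ forces $Y_j=0$ and contributes $\eta_N^{(m)}$ to $k=0$. Evaluating the binomial pieces at the endpoints via $p_B(0;N,\theta)=(1-\theta)^N$ and $p_B(N;N,\theta)=\theta^N$ then reproduces the $(1-\theta_j)^N$, $(1-\theta_j^{\mathcal{S}_j})^N$, $\theta_j^N$ and $(\theta_j^{\mathcal{S}_j})^N$ terms of the statement, with the only remaining task being to show that the aggregated structural-zero weight $W_j := \eta_0 + \sum_{m\neq j}\eta_N^{(m)} + \sum_{\mathcal{K}\ni j}\eta_{\mathcal{K}}$ equals $\zeta_j$.

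The main obstacle is this collapse $W_j=\zeta_j$, a combinatorial identity on the Bernoulli-product weights of Definition \ref{def:etas}. Rather than expanding the products directly, I would appeal to the Bernoulli representation in Proposition \ref{prop:sthocastic_representation_zanim}: each term in $W_j$ corresponds precisely to a configuration of the latent indicators in which $z_j=0$ --- namely all indicators zero ($\eta_0$), a single other indicator $z_m=1$ ($\eta_N^{(m)}$, $m\neq j$), or $j$ belonging to the zero-inflated set $\mathcal{K}$ ($\eta_{\mathcal{K}}$, $\mathcal{K}\ni j$). Since $z_j=0$ deterministically forces the structural zero $Y_j=0$, and $z_j\sim\operatorname{Bernoulli}(1-\zeta_j)$ is independent of the other indicators, summing these weights is simply marginalising out the remaining $z_\ell$, which yields $\Pr[z_j=0]=\zeta_j$ with no explicit expansion required. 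Substituting $W_j=\zeta_j$ closes the $k=0$ case and completes the proof.
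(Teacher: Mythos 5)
Your proof is correct, and at the level of the overall decomposition it follows the same route as the paper: fix $Y_j=k$, marginalise each of the four blocks of the mixture separately (binomial marginals for the multinomial blocks, point masses for the degenerate blocks), and regroup the contributions by $k=0$, $k=N$, and $1\le k\le N-1$. Where you genuinely depart from the paper is in the treatment of the aggregated structural-zero weight, and your version is the more careful one. The paper's own discussion attributes the degenerate mass at $0$ only to the all-zeros component and the $N$-inflated components for $k\neq j$, asserting that $\eta_0+\sum_{k\neq j}\eta_N^{(k)}$ simplifies to $\zeta_j$; that identity holds only when $d=2$, since for $d\ge 3$ one must also count the reduced-multinomial components with $j\in\mathcal{K}$, whose indicator $\mathds{1}_0\big(\sum_{k\in\mathcal{K}}y_k\big)$ forces $Y_j=0$ and which therefore contribute their full weights $\eta_{\mathcal{K}}$ to the point mass at zero. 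You include exactly these terms in $W_j=\eta_0+\sum_{m\neq j}\eta_N^{(m)}+\sum_{\mathcal{K}\ni j}\eta_{\mathcal{K}}$, and your proof of the collapse $W_j=\zeta_j$ --- identifying these components as precisely the latent configurations with $z_j=0$ in Proposition \ref{prop:sthocastic_representation_zanim}, so that summing their weights amounts to marginalising the independent Bernoulli indicators $z_\ell$, $\ell\neq j$ --- is a clean and fully rigorous route to an identity the paper leaves implicit; one can also verify it by direct expansion, since $\sum_{A\subseteq\{1,\ldots,d\}\setminus\{j\}}\zeta_j\prod_{i\in A}(1-\zeta_i)\prod_{k\notin A,\,k\neq j}\zeta_k=\zeta_j\prod_{i\neq j}\left\lbrack(1-\zeta_i)+\zeta_i\right\rbrack=\zeta_j$. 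In short: same mixture-by-mixture decomposition, but your key lemma differs from the paper's prose and repairs it --- it yields a $k=0$ case valid for all $d$, whereas the paper's stated accounting of the degenerate mass is incomplete for $d\ge3$ even though its final formula is correct.
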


\begin{proposition}\label{prop:marginal_yj_zanidm}
If $\mathbf{Y} \sim \operatorname{ZANIDM}_d\lbrack N, \bm{\alpha}, \bm{\zeta}\rbrack$, then the marginal distribution of $Y_j$ is
\begin{align}
\Pr\lbrack Y_j = k\rbrack =
\begin{cases}
\zeta_j + \eta_d\dfrac{\operatorname{B}\left(\alpha_j, N + \alpha^{(j)}_s\right)}{\operatorname{B}\left(\alpha_j, \alpha^{(j)}_s\right)} +
\sum_{\mathcal{S}_j\in\mathfrak{S}_j}\eta_{\mathcal{S}_j}
\dfrac{\operatorname{B}\left(\alpha_j, N + \alpha_s^{\mathcal{S}_j}\right)}{\operatorname{B}\left(\alpha_j, \alpha_s^{\mathcal{S}_j}\right)},&
\textrm{if} \quad k=0,\\
\eta_N^{(j)} +
\eta_d\dfrac{\operatorname{B}\left(\alpha_j, N + \alpha^{(j)}_s\right)}{\operatorname{B}\left(\alpha_j, \alpha^{(j)}_s\right)}+
\sum_{\mathcal{S}_j\in\mathfrak{S}_j}
\eta_{\mathcal{S}_j}\dfrac{\operatorname{B}\left(\alpha_j + N, \alpha_s^{\mathcal{S}_j}\right)}{\operatorname{B}\left(\alpha_j, \alpha_s^{\mathcal{S}_j}\right)},&
\textrm{if} \quad k=N,\\
\eta_d
p_{BB}\left(k; N, \alpha_{j}, \alpha^{(j)}_s\right) +
\sum_{\mathcal{S}_j\in \mathfrak{S}_j}\eta_{\mathcal{S}_j}p_{BB}\left(k; N, \alpha_{j}, \alpha_s^{\mathcal{S}_j}\right),&
\textrm{if} \quad k\in\{1,\ldots, N-1\},\nonumber
\end{cases}
\end{align}
where $p_{BB}(k; N, \alpha_j, \alpha^{(j)}_s)$ denotes the beta-binomial PMF with
$\alpha^{(j)}_s = \sum_{\ell \neq j} \alpha_\ell$ and $\operatorname{B}(a, b) = \Gamma(a)\Gamma(b)/\Gamma(a+b)$ is the Beta function.
Similarly, $\alpha_s^{\mathcal{S}_j} = \sum_{\ell \notin \mathcal{S}_j} \alpha_\ell$, where $\ell \notin \mathcal{S}_j$ implies the indices in $\{1, \ldots, d\}\setminus\{j\}$ that are not in $\mathcal{S}_j$.
\begin{proof}[\textbf{\upshape Proof:}]
Follows from \autoref{def:set_Sj} and appropriate summation over the ZANIDM PMF.
\end{proof}
\end{proposition}

\begin{remark}\label{remark:prob0_probN}
For the marginal probability distributions presented in \autoref{prop:marginal_yj_zanim} and
\autoref{prop:marginal_yj_zanidm}, both $\Pr\lbrack Y_j = 0\rbrack \rightarrow \zeta_j$ and $\Pr\lbrack Y_j = N\rbrack \rightarrow \eta_N^{(j)}$ as $N \rightarrow \infty$.
\end{remark}

From \autoref{prop:marginal_yj_zanim} and \autoref{prop:marginal_yj_zanidm}, we identify that
the marginal distribution of $Y_j$ under ZANIM (or ZANIDM) is a finite mixture containing $2^{d-1} + 1$ components.
These mixture components are either degenerate at zero, degenerate at $N$, or follow binomial (or beta-binomial)
distributions.
\autoref{fig:comparison_zanim_zanidm_pmf} shows some of the types of behaviour the marginal PMFs of ZANIM and ZANIDM can have in a three-dimensional setting, where the marginals of the random vector $\mathbf{Y}$ each have category-specific parameters. Each marginal distribution is specified such that the expectation is identical under both distributions (see Section \ref{sec:moments} for details on their moments).
\begin{figure}[H]
    \centering
    \includegraphics[width=1\linewidth]{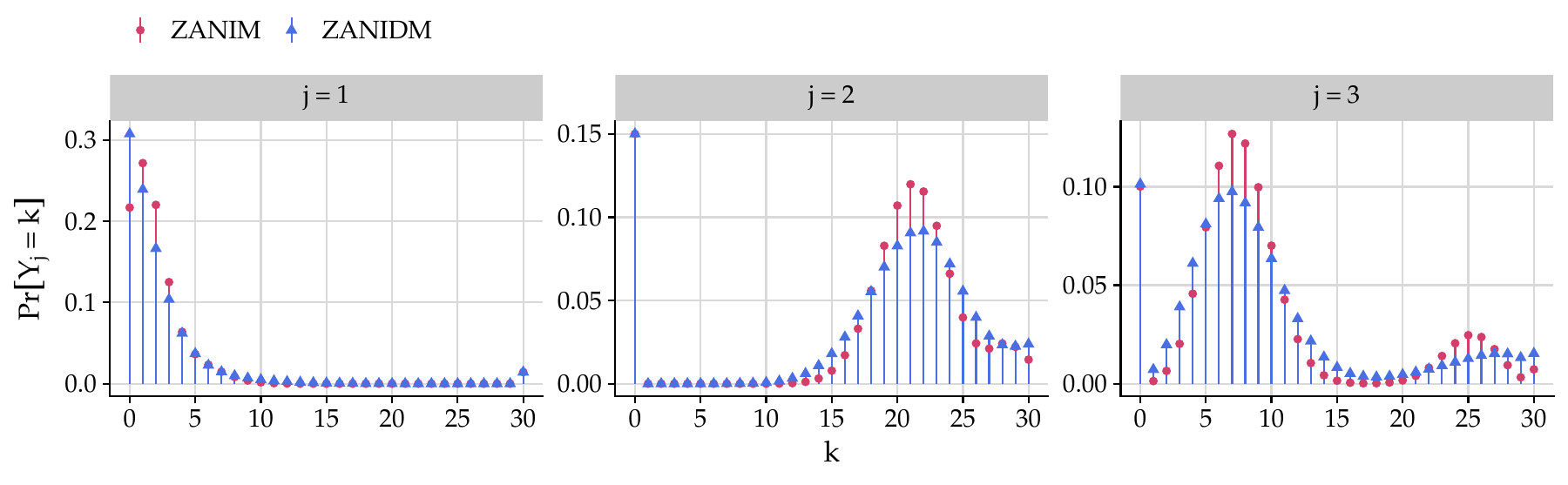}
    \caption{Marginal PMFs of ZANIM (red circles) and ZANIDM (blue triangles) with respective parameters $\bm{\theta} \in \{0.05, 0.70, 0.25\}$ for ZANIM and $\bm{\alpha} \in \{2.0, 28.0, 10.0\}$ for ZANIDM, along with $\bm{\zeta} \in \{0.05, 0.15, 0.10\}$ and $N=30$ trials in each case.}
    \label{fig:comparison_zanim_zanidm_pmf}
\end{figure}The first marginal, $Y_1$, has a large spike at $k=0$ under both distributions, although this consists not only of structural zeros, but also many sampling zeros. This is understandable, given that the $\theta_1$ and $\alpha_1$ parameters take their lowest values among the three categories when $j=1$, and the corresponding zero-inflation parameter is also low $(\zeta_1=0.05)$.
Although most of the probability mass is concentrated at lower values of $k$, there is also slight but nonetheless visible $N$-inflation at $k=N=30$, since the corresponding mixture weight, $\eta_N^{(1)} = (1 - \zeta_1)\zeta_2\zeta_3 = 0.00675$, is non-zero. Interestingly, the ZANIDM marginal is more right-skewed and overdispersed than that of ZANIM, which also results in a larger spike at zero.
The second marginal, $Y_2$, has a noticeable spike at $k=0$ also, given the higher $\zeta_2=0.15$, but peaks at higher $k$ values given
that the $\theta_2$ and $\alpha_2$ parameters take their highest values among the three categories when $j=2$.
As there is little mass assigned to values of $1 \le k < 15$, this reflects a scenario where counts of zero reflect failure to detect a phenomenon which is more common.
Finally, the nature of ZANIM as a finite mixture is most apparent for the third marginal, $Y_3$, given that both zero-inflation and moderate $N$-inflation are evident in addition to two other modes. Under ZANIM, the larger mode at $k=6$ is attributable to the purely multinomial component of the mixture, while the smaller mode at $k=24$ relates to the set of
multinomial distributions of reduced dimension which capture `sets of zero-inflation' (in this case pairs) in the other categories.
Regarding ZANIDM, we can clearly see a heavy right tail, which can be explained by the finite mixture components and the corresponding overdispersed nature of the beta-binomial distributions.

\subsection{Moments}\label{sec:moments}

We briefly review a generic result that will be used for the derivations.
Let $\bm{\tau}=\{\tau_1,\ldots,\tau_K\}$ be a set of mixture weights, such that $\tau_k$ reflects a generic indexed mixture component
without regard to whether that component relates to any particular case of zero-inflation, $N$-inflation, or otherwise.
Let $W$ denote a discrete random variable indicating the mixture component, i.e.,
taking values in $\{1,\ldots, K\}$ with corresponding probabilities $\bm{\tau}$.
Then, the expected value of $g(\mathbf{Y})$ can be expressed as
$\mathbb{E}\lbrack g(\mathbf{Y})\rbrack = \sum_{k=1}^{K}\tau_k\mathbb{E}\lbrack g(\mathbf{Y}) \mid W = k \rbrack$. Thus, since both distributions can be expressed as multivariate finite mixture distributions
with $K = 2^d$ components, we can easily obtain their moments. We recall that the component distributions are either degenerate random vectors (at $\delta_{\mathbf{0}_d}$ or $N\mathbf{e}_d^{(j)}$), or multinomial/DM
random vectors (including some of reduced dimension); see the stochastic representations in
\eqref{eq:zanim_finite_mixture} and \eqref{eq:zanidm_finite_mixture}.

\begin{definition}
By analogy with the sets $\mathfrak{K}$ and $\mathfrak{S}_j$ described in \autoref{def:set_K} and \autoref{def:set_Sj}, respectively, let $\mathfrak{R}_{jh} = \{\mathcal{R}_{jh} \subseteq \{1,\ldots, j-1, j+1, \ldots, h-1, h+1, \ldots, d\}; 1 \leq \lvert\mathcal{R}_{jh}\rvert \leq d-2\}$ represent the set of all subsets $\mathcal{R}_{jh}$ of $\{1,\ldots,d\}\setminus \{j,h\}$ for a given pair of categories $j \ne h$.
\end{definition}

\begin{proposition}\label{prop:zanim_moments}
Let $\mathbf{Y} \sim \operatorname{ZANIM}_d\lbrack N, \bm{\theta}, \bm{\zeta} \rbrack$,
then the mean and variance of the $j$-th entry of the random vector $\mathbf{Y}$, i.e.,
the random variable $Y_j$, are given by
\begin{align*}
\mathbb{E}\lbrack Y_j\rbrack &= N\left(\eta^{(j)}_N + \eta_d\theta_j +
\sum_{\mathcal{S}_j \in \mathfrak{S}_j}\eta_{\mathcal{S}_j}\theta_{j}^{\mathcal{S}_j}\right),
\\
\operatorname{Var}\lbrack Y_j \rbrack &= \eta_N^{(j)}N^2 +
\eta_dN \theta_j \left( 1 + \theta_j\left(N - 1\right)\right)
+
\sum_{\mathcal{S}_j \in \mathfrak{S}_j}\eta_{\mathcal{S}_j}N\theta_{j}^{\mathcal{S}_j}
\left(1 + \theta_{j}^{\mathcal{S}_j} \left(N - 1\right) \right)- N^2\left(\eta^{(j)}_N + \eta_d\theta_j +
\sum_{\mathcal{S}_j \in \mathfrak{S}_j}\eta_{\mathcal{S}_j}\theta_{j}^{\mathcal{S}_j}\right)^2.
\end{align*}
The covariance between the random variables $(Y_j, Y_h), j \ne h$, of the random vector $\mathbf{Y}$ is
\begin{align*}
\operatorname{Cov}\lbrack Y_j, Y_h\rbrack
&=
N(N - 1)\left(\eta_d\theta_j\theta_h + \sum_{\mathcal{R}_{jh}\in\mathfrak{R}_{jh}}\eta_{\mathcal{R}_{jh}}\theta_{j}^{\mathcal{R}_{jh}}\theta_{h}^{\mathcal{R}_{jh}}\right) -N^2\left(\eta^{(j)}_N + \eta_d\theta_j +
\sum_{\mathcal{S}_j \in \mathfrak{S}_j}\eta_{\mathcal{S}_j}\theta_{j}^{\mathcal{S}_j}\right)
\left(\eta^{(h)}_N + \eta_d\theta_h +
\sum_{\mathcal{S}_h \in \mathfrak{S}_h}\eta_{\mathcal{S}_h}\theta_{h}^{\mathcal{S}_h}\right),
\end{align*}
where $\theta_{j}^{\mathcal{R}_{jh}} = \theta_j / (1 - \sum_{\ell \in \mathcal{R}_{jh}} \theta_\ell)$.
\begin{proof}[\textbf{\upshape Proof:}]
Follows from the moment properties of finite mixture distributions.
\end{proof}
\end{proposition}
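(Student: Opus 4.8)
The plan is to exploit the finite mixture stochastic representation of ZANIM given in Proposition~\ref{prop:zanim_finite_mixture}, combined with the law of total expectation for mixtures, namely $\mathbb{E}[g(\mathbf{Y})] = \sum_{k}\tau_k\,\mathbb{E}[g(\mathbf{Y}) \mid W = k]$. Taking in turn $g(\mathbf{Y}) = Y_j$, then $g(\mathbf{Y}) = Y_j^2$, and finally $g(\mathbf{Y}) = Y_j Y_h$, each requisite moment reduces to a weighted sum of the corresponding moments of the $2^d$ component distributions. Since those components are either degenerate (at $\mathbf{0}_d$ or $\mathbf{E}_d^{(k)}$) or (possibly reduced) multinomials, I would simply substitute the standard multinomial moments $\mathbb{E}[Y_j]=N\theta_j$, $\operatorname{Var}[Y_j]=N\theta_j(1-\theta_j)$, and $\operatorname{Cov}[Y_j,Y_h]=-N\theta_j\theta_h$, using the reduced probabilities $\theta_j^{\mathcal{K}}$ for the components indexed by $\mathcal{K}\in\mathfrak{K}$.

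For the mean, I would note that the purely degenerate component contributes $0$; among the $N$-inflation components only $\mathbf{E}_d^{(j)}$ contributes, giving $N$ weighted by $\eta_N^{(j)}$, whereas $\mathbf{E}_d^{(k)}$ for $k\neq j$ forces $Y_j=0$; the full multinomial contributes $\eta_d\,N\theta_j$; and among the reduced multinomials only those $\mathcal{K}$ with $j\notin\mathcal{K}$ survive. Collecting these terms yields the stated mean. For the variance, the same component-by-component accounting applied to $\mathbb{E}[Y_j^2]$ produces $\eta_N^{(j)}N^2$ from the $j$-th $N$-inflation component, the second moment $N\theta_j(1+\theta_j(N-1))$ of the multinomial weighted by $\eta_d$, and the analogous reduced-multinomial terms, after which I subtract $(\mathbb{E}[Y_j])^2$.

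For the covariance I would compute $\mathbb{E}[Y_jY_h]$ and subtract $\mathbb{E}[Y_j]\,\mathbb{E}[Y_h]$. Here every degenerate component vanishes, since $\delta_{\mathbf{0}_d}$ gives $0$ and each $\mathbf{E}_d^{(k)}$ has at most one coordinate equal to $N$, so the product $Y_jY_h$ is identically $0$. The surviving contributions come only from the full multinomial, with $\mathbb{E}[Y_jY_h]=N(N-1)\theta_j\theta_h$ weighted by $\eta_d$, and from the reduced multinomials whose index set $\mathcal{K}$ excludes both $j$ and $h$, which produces the $\mathfrak{R}_{jh}$ sum.

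The main obstacle, and really the only nonroutine point, is the careful combinatorial bookkeeping of which mixture components survive for each statistic, together with the accompanying re-indexing of the family $\mathfrak{K}$. For the marginal moments of $Y_j$ one must discard every $\mathcal{K}$ containing $j$, so that the surviving sets are exactly $\mathfrak{S}_j$ of Definition~\ref{def:set_Sj}; for the covariance one must further discard every $\mathcal{K}$ containing $h$, leaving precisely $\mathfrak{R}_{jh}$. Verifying these set correspondences, and confirming that the reduced probabilities $\theta_j^{\mathcal{K}}$ specialise correctly to $\theta_j^{\mathcal{S}_j}$ and $\theta_j^{\mathcal{R}_{jh}}$, is the crux of the argument; everything else is direct substitution of the standard multinomial moments.
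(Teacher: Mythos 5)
Your proposal is correct and is precisely the argument the paper intends: the paper's proof is the one-line appeal to ``moment properties of finite mixture distributions,'' and your component-by-component accounting (degenerate components, the $\mathbf{E}_d^{(k)}$ terms, the full multinomial, and the reduced multinomials re-indexed by $\mathfrak{S}_j$ and $\mathfrak{R}_{jh}$) is exactly the bookkeeping that appeal suppresses. Your substituted multinomial moments and set correspondences all check out, so this is the same approach, just written out in full.
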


\begin{proposition}
Let $\mathbf{Y} \sim \operatorname{ZANIDM}_d\lbrack N, \bm{\alpha}, \bm{\zeta} \rbrack$,
then the mean and variance of the $j$-th entry of the random vector $\mathbf{Y}$, i.e.,
the random variable $Y_j$, are given by
\begin{align*}
\mathbb{E}\lbrack Y_j \rbrack &= N\left(\eta^{(j)}_N + \eta_d\dfrac{\alpha_j}{\alpha_s} +
\sum_{\mathcal{S}_j \in \mathfrak{S}_j}\eta_{\mathcal{S}_j}
\dfrac{\alpha_j}{\alpha_j + \alpha_s^{\mathcal{S}_j}} \right),\\
\mathrm{Var}\lbrack Y_j \rbrack &=
\eta_N^{(j)}N^2 +
\eta_d\left(
\dfrac{N\alpha_j\left(N \left(1 + \alpha_j\right) + \alpha_s^{(j)}\right)}{\left(\alpha_j + \alpha_s^{(j)}\right)\left(1 + \alpha_j + \alpha_s^{(j)}\right)}\right)+
\sum_{\mathcal{S}_j \in \mathfrak{S}_j}\eta_{\mathcal{S}_j}
\left(
\dfrac{N\alpha_j\left(N \left(1 + \alpha_j\right) + \alpha_s^{{\mathcal{S}_j}}\right)}{
\left(\alpha_j + \alpha_s^{{\mathcal{S}_j}}\right)\left(1 + \alpha_j + \alpha_s^{{\mathcal{S}_j}}\right)}\right) \\
&\phantom{=}~-
N^2\left(\eta^{(j)}_N + \eta_d\dfrac{\alpha_j}{\alpha_s} +
\sum_{\mathcal{S}_j \in \mathfrak{S}_j}\eta_{\mathcal{S}_j}
\dfrac{\alpha_j}{\alpha_j + \alpha_s^{\mathcal{S}_j}} \right)^2.
\end{align*}
The covariance between the random variables $(Y_j, Y_h), j \ne h$, of the random vector $\mathbf{Y}$ is
\begin{align*}
\operatorname{Cov}\lbrack Y_j, Y_h\rbrack &=
N\eta_d\dfrac{\alpha_j\alpha_h}{\alpha_s}\left(N - \dfrac{N + \alpha_s}{1 + \alpha_s}\right)+N\sum_{\mathcal{R}_{jh} \in \mathfrak{R}_{jh}}\eta_{\mathcal{R}_{jh}}
\dfrac{\alpha_j\alpha_h}{\left(\alpha_j + \alpha_h + \alpha_s^{\mathcal{R}_{jh}}\right)^2}
\left(N -
\dfrac{N + \alpha_j+\alpha_h+\alpha_s^{\mathcal{R}_{jh}}}{1 + \alpha_j+\alpha_h+\alpha_s^{\mathcal{R}_{jh}}}
\right) \\
&\phantom{=}~-N^2
\left(\eta^{(j)}_N + \eta_d\dfrac{\alpha_j}{\alpha_s} +
\sum_{\mathcal{S}_j \in \mathfrak{S}_j}\eta_{\mathcal{S}_j}
\dfrac{\alpha_j}{\alpha_j + \alpha_s^{\mathcal{S}_j}} \right) \left(\eta^{(h)}_N + \eta_d\dfrac{\alpha_h}{\alpha_s} +
\sum_{\mathcal{S}_h \in \mathfrak{S}_h}\eta_{\mathcal{S}_h}
\dfrac{\alpha_h}{\alpha_h + \alpha_s^{\mathcal{S}_h}} \right),
\end{align*}
where $\alpha_s^{\mathcal{R}_{jh}} = \sum_{\ell \notin \mathcal{R}_{jh}} \alpha_\ell$.
\begin{proof}[\textbf{\upshape Proof:}]
Follows from the moment properties of finite mixture distributions.
\end{proof}
\end{proposition}
\autoref{tab:theoretical_moments} compares the theoretical moments of the ZANIM
and ZANIDM distributions. We also report the dispersion index $\operatorname{DI}\lbrack Y_j \rbrack = \operatorname{Var}\lbrack Y_j \rbrack/\mathbb{E}\lbrack Y_j \rbrack$ and the zero-inflation index
$\operatorname{ZI}\lbrack Y_j \rbrack = 1 + \log\left(\Pr\lbrack Y_j = 0\rbrack\right) / \mathbb{E}\lbrack Y_j \rbrack$,
for each category $j$ under both distributions. See \citet{Puig2006} for details of these indices.
In this setting, both distributions yield identical means by construction.
The variances under ZANIDM are higher than those under ZANIM, which highlights ZANIDM's greater
flexibility in modelling overdispersion.
However, the $\operatorname{DI}\lbrack Y_j \rbrack$ indices imply that both distributions can handle overdispersion, which can arise due to
zero-inflation. Although ZANIDM's values for this index are greater, it is notable that the ZANIM distribution can still capture some degree of overdispersion.
The $\operatorname{ZI}\lbrack Y_j \rbrack$ index, reflecting the degree of zero-inflation, is slightly
higher under ZANIDM when $j=1$, but otherwise the values match for both distributions. However, we stress that excess zeros may not always be structural in nature; they can also arise due to overdispersion, as the two phenomena are linked.
\begin{table}[H]
\centering
\captionsetup{width=0.6\textwidth}
\caption{Comparison of the theoretical moments of ZANIM and ZANIDM, with
$\bm{\theta} \in \{0.05, 0.70, 0.25\}$ for ZANIM and $\bm{\alpha} \in \{2.0, 28.0, 10.0\}$ for ZANIDM, along with $\bm{\zeta} \in \{0.05, 0.15, 0.10\}$ and $N=30$ trials in each case.}
\vskip-0.3cm \noindent\rule{.6\textwidth}{0.4pt}\smallskip
\label{tab:theoretical_moments}
\begin{tabular*}{.6\textwidth}{@{\hspace{0.01\textwidth}\extracolsep{\fill}}lrrrrr@{\hspace{0.01\textwidth}}}
& Distribution & $\mathbb{E}\lbrack Y_j\rbrack$ & $\mathrm{Var}\lbrack Y_j \rbrack$ & $\mathrm{DI}\lbrack Y_j \rbrack$ & $\mathrm{ZI}\lbrack Y_j \rbrack$ \\
\midrule
\multirow{2}{*}{$j = 1$} & ZANIM & 2.320 & 14.326 & 6.174 & 0.341 \\
   & ZANIDM & 2.320 & 16.392 & 7.064 & 0.492 \\ \hline
\multirow{2}{*}{$j = 2$} & ZANIM & 18.496 & 69.178 & 3.740 & 0.897 \\
   & ZANIDM & 18.496 & 72.723 & 3.932 & 0.897 \\ \hline
\multirow{2}{*}{$j = 3$} & ZANIM & 9.161 & 50.409 & 5.502 & 0.749 \\
   & ZANIDM & 9.161 & 54.658 & 5.966 & 0.750
\end{tabular*}
\rule{.6\textwidth}{0.4pt}
\end{table}\autoref{tab:theoretical_covariance} gives the theoretical covariances between different categories
under the ZANIM and ZANIDM distributions, with the same parameter settings.
Notably, both distributions are capable of accommodating both negative and positive dependence, unlike the standard multinomial and DM distributions under which the covariances between two elements of the random vector $\mathbf{Y}$ are strictly non-positive by definition. The usual covariances of the multinomial and DM distributions can be recovered from the expressions derived above when $\bm{\zeta}=\mathbf{0}_d$.
\begin{table}[H]
\centering
\captionsetup{width=0.6\textwidth}
\caption{Comparison of the theoretical covariances of ZANIM and ZANIDM, with
$\bm{\theta} \in \{0.05, 0.70, 0.25\}$ for ZANIM and $\bm{\alpha} \in \{2.0, 28.0, 10.0\}$ for ZANIDM, along with $\bm{\zeta} \in \{0.05, 0.15, 0.10\}$ and $N=30$ trials in each case.}
\label{tab:theoretical_covariance}
\vskip-0.3cm\rule{.6\textwidth}{0.4pt}\smallskip

\begin{tabular*}{.6\textwidth}{@{\hspace{0.01\textwidth}\extracolsep{\fill}}lrr@{\hspace{0.01\textwidth}}}
$\operatorname{Cov}\lbrack Y_j, Y_h\rbrack$ & ZANIM & ZANIDM \\
\midrule
$\operatorname{Cov}\lbrack Y_1, Y_2\rbrack$ & $-16.416$ & $-17.097$ \\
  $\operatorname{Cov}\lbrack Y_1, Y_3\rbrack$ & 2.143 & 0.758 \\
  $\operatorname{Cov}\lbrack Y_2, Y_3\rbrack$ & $-52.346$ & $-55.210$ \\
\end{tabular*}
\rule{.6\textwidth}{0.4pt}
\end{table}The theoretical features of the ZANIM and ZANIDM distributions highlight their flexibility for modelling
count-compositional data with an excess of zeros, while also accommodating overdispersion and covariance
structures that can capture both positive and negative dependence.
The fact that both distributions can be represented as finite mixtures allows them to be flexible in this regard, even if their constituent component distributions, by themselves, are not. In the Supplementary Material, we show how moment generating functions can also be derived for both distributions, again using the properties of finite mixtures. \section{Bayesian inference for ZANIM and ZANIDM}\label{sec:inference}

We develop Bayesian inference frameworks for estimating the parameters of the ZANIM and ZANIDM distributions.
Their respective inference schemes are based on the likelihood functions defined in
\eqref{eq:zanim_pmf__no_zero__mixture}--\eqref{eq:zanim_pmf__all_zeros__mixture} and
\eqref{eq:zanidm_pmf__no_zero__mixture}--\eqref{eq:zanidm_pmf__all_zeros__mixture}.
These functions involve complex mixture likelihoods where the mixing proportion
$\boldsymbol{\eta}$ depend on
the zero-inflation parameters $\zeta_1,\ldots,\zeta_d$. As the dimension $d$ increases, computing the likelihood
becomes computationally intensive.
To address this, we exploit
the stochastic representations of the distributions and consider
data augmentation strategies, thereby simplifying the posterior distributions and enabling efficient sampling. In each case, we assume access to an i.i.d.~random sample of size $n$ denoted by
$\mathbf{y} = (\mathbf{y}_1, \ldots, \mathbf{y}_n)$, where $\mathbf{y}_i = (y_{i1}, \ldots, y_{id})$.
We allow the number of trials, a fixed and known parameter given by $N_i=\sum_{j=1}^d y_{ij}$, to be observation-specific, such that $\mathbf{Y}_i \sim \operatorname{ZANIM}_d\lbrack N_i, \bm{\theta}, \bm{\zeta}\rbrack$ or $\mathbf{Y}_i \sim \operatorname{ZANIDM}_d\lbrack N_i, \bm{\alpha}, \bm{\zeta}\rbrack$.

\subsection{ZANIM}

Inference for the ZANIM parameters $\bm{\theta}$ and $\bm{\zeta}$ is based on
the stochastic representation given in \autoref{prop:stochastic_representation_zanim}. In the Supplementary Material, we show that augmenting the ZANIM distribution
with the latent variables $z_{1},\ldots,z_{d}$ and $(\phi \mid \mathbf{y}, \mathbf{z}) \sim \operatorname{Gamma}\lbrack N, \sum_{j=1}^d\lambda_j z_{j}\rbrack$ enables recovery of the zero-inflated
augmented likelihood in \eqref{eq:pmf_zanim_augmented} from which the
ZANIM distribution was initially derived, and give the full MCMC algorithm.
Based on a random sample $\mathbf{y}$, the further augmented ZANIM likelihood using the latent variables
$\bm{\phi} = (\phi_1, \ldots, \phi_n)$ and $\mathbf{z} = (\mathbf{z}_1, \ldots, \mathbf{z}_n)$, with
$\mathbf{z}_i = (z_{i1}, \ldots, z_{id})$, is given by
\begin{align*}
\mathcal{L}(\bm{\lambda}, \bm{\zeta}; \mathbf{y}, \bm{\phi}, \mathbf{z})
&\propto
\prod_{i=1}^n\prod_{j=1}^d\left\{\left\lbrack(1 - z_{ij})\zeta_j + z_{ij}(1-\zeta_j)e^{-\phi_i\lambda_j}\right\rbrack^{\mathds{1}_0(y_{ij})} \times\left\lbrack z_{ij}(1-\zeta_j) \lambda_j^{y_{ij}}e^{-\phi_i\lambda_j}\right\rbrack^{1-\mathds{1}_0(y_{ij})}\right\}\\
&=\prod_{j=1}^d\left\{\zeta_j^{n - t_j}(1 - \zeta_j)^{t_j} \times \lambda_j^{r_{j}} e^{-s_j\lambda_j}\right\},
\end{align*}
where
$t_j = \sum_{i=1}^{n}z_{ij}, r_{j} = \sum_{i=1}^ny_{ij}z_{ij}$, and
$s_j = \sum_{i=1}^n\phi_iz_{ij}$ play the role of conditional sufficient statistics for category $j$.
We can now see that the augmented likelihood factors into a product of beta and gamma terms,
and that the category-specific parameters are independent. This implies that inference procedures
can be performed independently for each category. Thus, we can consider a joint prior for
$(\zeta_j, \lambda_j)$ as a product of two independent priors which exhibit conjugacy properties, i.e,
$\zeta_j \sim \operatorname{Beta}\lbrack a_j, b_j\rbrack$ and
$\lambda_j \sim \operatorname{Gamma}\lbrack c_j, d_j\rbrack,$
for $j \in \{1,\ldots, d\}$.
Thus, given the augmented data $(\mathbf{y}, \bm{\phi}, \mathbf{z})$, the full conditional distribution of $\zeta_j$ is given by
\begin{equation}\label{eq:posterior_zeta}
(\zeta_j \mid \mathbf{y}, \bm{\phi}, \mathbf{z}) \sim \operatorname{Beta}\lbrack n - t_j + a_j,  t_j + b_j\rbrack,
\end{equation}
while the full conditional distribution of $\lambda_j$ is given by
\begin{equation}\label{eq:posterior_lambda}
(\lambda_j \mid \mathbf{y}, \bm{\phi}, \mathbf{z}) \sim \operatorname{Gamma}\lbrack r_j + c_j, s_j + d_j\rbrack.
\end{equation}Finally, we note that the full conditional distribution of $z_{ij}$ is given by
\begin{equation}\label{eq:bernoulli_augmented}
(z_{ij} \mid y_{ij}, \phi_i) \sim
\begin{cases}
\operatorname{Bernoulli}\left\lbrack \dfrac{(1-\zeta_j)e^{-\phi_i\lambda_j}}{\zeta_j + (1-\zeta_j)e^{-\phi_i\lambda_j}}\right\rbrack, & \textrm{if} \: y_{ij} = 0,\\
1, & \textrm{otherwise}.
\end{cases}\end{equation}

\subsection{ZANIDM}\label{sec:zanidm_inference}

As per \citet{Koslovsky2023}, we exploit the stochastic representation of the ZANIDM distribution given in \autoref{def:zanidm_stochastic_representation}.
We further introduce the latent variables
\[(\phi_i \mid \mathbf{y}_i, \bm{\lambda}_i) \sim \operatorname{Gamma}\left \lbrack N_i, \sum_{j=1}^d\lambda_{ij}\right \rbrack,\:i \in \{1,\ldots, n\},\quad N_i = \sum_{j=1}^d y_{ij}.\]
Given the latent variables $\lambda_{ij}$, $z_{ij}$, and $\phi_i$, along with the observed
vector $\mathbf{y}_i$, the augmented likelihood of the $i$-th observation factors into $d$ independent terms, as per ZANIM, as follows
\begin{align*}
p(\mathbf{y}_i, \bm{\lambda}_i, \mathbf{z}_i, \phi_{i} \mid \bm{\alpha}, \bm{\zeta})
&=
p(\mathbf{y}_i \mid \bm{\lambda}_i)
p(\phi_i \mid \mathbf{y}_i, \bm{\lambda}_i)
\prod_{j=1}^d\left\lbrack p(\lambda_{ij} \mid z_{ij})p(z_{ij})
\right\rbrack \\
&=\dfrac{\phi_i^{N_i -1}}{\Gamma(N_i)}\binom{N}{y_1, \ldots, y_d}
\prod_{j=1}^d\left\lbrack
(1 - \zeta_j)^{z_{ij}}\zeta_{j}^{1 - z_{ij}}
(1 - z_{ij})^{\mathds{1}_0(\lambda_{ij})}z_{ij}^{1 - \mathds{1}_0(\lambda_{ij})}
\right\rbrack\prod_{j=1}^d \left\lbrack
\left(\dfrac{\lambda_{ij}^{y_{ij}+\alpha_j - 1}e^{-\lambda_{ij}(1 + \phi_i) }}{\Gamma(\alpha_j)}\right)^{1 - \mathds{1}_0(\lambda_{ij})}
\right\rbrack.
\end{align*}Thus, the inference over the parameters $\bm{\zeta}$ and $\bm{\alpha}$ can be performed independently.

\citet{Koslovsky2023} proposed a method to perform Bayesian inference when the parameters $\bm{\alpha}$ and
$\bm{\zeta}$ can depend on covariates (which we do not consider here), which relies on so-called
`expand and contract' moves (effectively transdimensional Metropolis-Hastings (MH) steps) to jointly update the
latent variables $z_{ij}$ and $\lambda_{ij}$. We propose an alternative collapsed Gibbs sampling approach that improves the efficiency by enabling fast conjugate updates for both quantities. By avoiding joint updates, we obviate the need to change the dimension of the parameter space as the MCMC algorithm proceeds.
Specifically, we note that it is
easy to obtain the distribution of
$z_{ij}$ unconditional on $\lambda_{ij}$, then update $\lambda_{ij}$ conditional on $z_{ij}$. We sketch our proposals below, but provide more details in  the Supplementary Material, including the full MCMC algorithm.

We first discuss the updates of $\lambda_{ij}$ and $z_{ij}$.
Note that the joint distribution of $\lambda_{ij}$ and $z_{ij}$ given the observed data $y_{ij}$ and
the latent variable $\phi_i$ is
\[
p(\lambda_{ij}, z_{ij} \mid y_{ij}, \phi_i) \propto (1 - \zeta_j)^{z_{ij}}\zeta_{j}^{1 - z_{ij}}
\left\lbrack(1 - z_{ij})\delta_0(\lambda_{ij}) + z_{ij}
\dfrac{\lambda_{ij}^{y_{ij} + \alpha_j - 1}e^{-\lambda_{ij}(1 + \phi_i) }}{\Gamma(\alpha_j)}\right\rbrack.
\]
An easy away to avoid the complicated expand and contract approach of \citet{Koslovsky2023} when updating
$\lambda_{ij}$ and $z_{ij}$ is to take advantage of the marginalisation of the joint distribution
$p(\lambda_{ij}, z_{ij} \mid y_{ij}, \phi_i)$ over $\lambda_{ij}$.
In doing so, we obtain
\[
p(z_{ij} \mid y_{ij}, \phi_i) \propto (1 - z_{ij})\zeta_{j} + z_{ij}(1 - \zeta_j)
\dfrac{(1 +\phi_i)^{-(y_{ij} + \alpha_j)}\Gamma(y_{ij} + \alpha_j)}{\Gamma(\alpha_j)}.
\]
For a given $j$, we know that $z_{ij}=1$ with probability $1$ when $y_{ij} > 0$. Conversely, when $y_{ij} = 0$, we have that
\[p(z_{ij} \mid y_{ij}=0, \phi_i) \propto \left\lbrack(1 - \zeta_j)(1 +\phi_i)^{-\alpha_j}\right\rbrack^{z_{ij}}
\zeta_{j}^{1 - z_{ij}}.\]
Therefore, the collapsed conditional distribution of $z_{ij}$ is
\begin{equation}\label{eq:zanidm_z_update}
(z_{ij} \mid y_{ij}, \phi_i) \sim
\begin{cases}
\operatorname{Bernoulli}\left\lbrack
\dfrac{(1 - \zeta_j) (1 +\phi_i)^{-\alpha_j}}{\zeta_j + (1 - \zeta_j) (1 +\phi_i)^{-\alpha_j}}
\right\rbrack, & \textrm{if} \: y_{ij} = 0,\\
1, & \textrm{otherwise},
\end{cases}
\end{equation}
and the full conditional distribution of $\lambda_{ij}$ is
\begin{equation}\label{eq:zanidm_lambda_update}
(\lambda_{ij} \mid y_{ij}, z_{ij}, \phi_i) \sim \begin{cases}
\operatorname{Gamma}\lbrack \alpha_j + y_{ij}, 1 + \phi_i \rbrack, & \textrm{if} \quad z_{ij}=1,\\
0, & \textrm{if} \quad  z_{ij}=0,
\end{cases}
\end{equation}
which is recognisable as a zero-augmented gamma distribution. Thus, straightforward Gibbs updates are available for both $\lambda_{ij}$ and $z_{ij}$, without requiring the joint expand and contract updates performed by \citet{Koslovsky2023}.

As regards the parameter $\zeta_j$, its full conditional distribution is given by
\[
p(\zeta_j \mid \mathbf{y},  \mathbf{z}, \bm{\phi})
\propto (1 - \zeta_j)^{t_{j}}\,\zeta_{j}^{n - t_{j}} \times p(\zeta_j),
\]
where $t_j = \sum_{i=1}^n z_{ij}$, as per ZANIM.
However, we stress that the updates for the latent $z_{ij}$ under ZANIM and ZANIDM, in \eqref{eq:bernoulli_augmented} and \eqref{eq:zanidm_z_update}, respectively, are distinct.
We clearly have the kernel of a Bernoulli distribution, as before; by assuming its conjugate
prior, $\zeta_j \sim \operatorname{Beta}\lbrack a_j, b_j \rbrack$, we have that
the full conditional distribution for $\zeta_j$ is
\begin{equation}\label{eq:posterior_zeta_zanidm}
(\zeta_j \mid \mathbf{y}, \mathbf{z}, \bm{\phi}) \sim \operatorname{Beta}\lbrack n - t_j + a_j, t_j + b_j \rbrack.
\end{equation}

The full conditional distribution for $\alpha_j$ can be written as
\begin{equation}\label{eq:alpha_j_full_conditional}
p(\alpha_j \mid \mathbf{y}, \bm{\lambda}, \mathbf{z}, \bm{\phi}) \propto
\dfrac{1}{\Gamma(\alpha_j)^{t_j}}\,\exp\left\lbrack \alpha_j\,\sum_{i\colon \lambda_{ij}>0} \ln \lambda_{ij} \right\rbrack \times p(\alpha_j),
\end{equation}
However, this does not resemble the kernel of a known distribution.
To overcome this, we consider and evaluate the performance of three approaches.
The first assumes a gamma prior for
$\alpha_j \sim \operatorname{Gamma}\lbrack c_j, d_j\rbrack$ in conjunction with \eqref{eq:alpha_j_full_conditional}, which results in the full conditional target
\begin{equation}\label{eq:target_alpha_da_ptn}
\pi(\alpha_j)  \coloneqq p(\alpha_j \mid \mathbf{y}, \bm{\lambda}, \mathbf{z}, \bm{\phi}) \times p(\alpha_j) \propto \dfrac{\alpha_j^{c_j - 1}}{\Gamma(\alpha_j)^{t_j}}
\exp\left\lbrack -\alpha_j \left(d_j - \sum_{i\colon \lambda_{ij}>0} \ln \lambda_{ij} \right)\right\rbrack.
\end{equation}
We then apply a data augmentation scheme proposed by \citet{Hamura2023}, which performs a MH step
with independent power-truncated-normal (PTN) proposals.

The second and third approaches consider the re-parameterisation $\ln \alpha_j = \beta_j$
and assume a Normal prior, $\beta_j \sim \operatorname{Normal}\lbrack m_j, s_j^2 \rbrack$,
resulting in the full conditional target
\begin{equation}\label{eq:target_beta_mh_ss}
\pi(\beta_j)
\propto \dfrac{1}{\Gamma(e^{\beta_j})^{t_j}} \exp\left\lbrack e^{\beta_j} \sum_{i\colon \lambda_{ij}>0} \ln \lambda_{ij} + \beta_js_j^{-2} (m_j - 0.5\beta_j)\right\rbrack.
\end{equation}
To sample from $\pi(\beta_j)$, we consider two well-known general schemes; a MH algorithm where the proposals
follow a Gaussian random walk, as used by \citet{Koslovsky2023}, and slice sampling with the stepping-out and shrinkage procedures as proposed by
\citet{Neal2003}. Note that our use of random walk MH still differs from \citet{Koslovsky2023} by virtue of our novel updates for $\lambda_{ij}$ and $z_{ij}$. Further details on all proposed sampling schemes for $\bm{\alpha}$ are provided in the Supplementary Material. \section{Simulation studies}\label{sec:simulation_study}

Our simulation experiments first compare the MCMC schemes described in Section \ref{sec:zanidm_inference} for inferring ZANIDM's parameters and secondly illustrate the practical utility of both the ZANIM and ZANIDM distributions when dealing with zero-inflation in count-compositional data.

\subsection{Comparison of MCMC algorithms for ZANIDM}\label{sec:simstudy1}

In this simulation exercise, we compare the MCMC schemes discussed in Section \ref{sec:zanidm_inference}.
Our goals are: (i) to demonstrate that
our collapsed Gibbs sampling approach for updating
$z_{ij}$ and $\lambda_{ij}$ offers superior efficiency and inferential performance compared to the
joint updates proposed by \citet{Koslovsky2023}; and
(ii) to evaluate different approaches for sampling the $\bm{\alpha}$ parameters.
To this end, we consider four approaches: the algorithm by \citet{Koslovsky2023},
which is available via the \textsf{R} package \texttt{ZIDM} on the author's GitHub
repository at \url{https://github.com/mkoslovsky/ZIDM},
and our three proposed methods discussed in Section \ref{sec:zanidm_inference}.
In brief, these variations differ in how they sample $\bm{\alpha}$ as follows:
DA-PTN utilises data-augmentation and MH with PTN proposals
introduced by \citet{Hamura2023},
MH-RW employs random-walk MH for $\ln \alpha_j$,
SS implements slice sampling using stepping-out and shrinkage procedures for $\ln \alpha_j$.
For all but the DA-PTN approach, we consider the prior $\ln \alpha_j \sim \operatorname{Normal}\lbrack 0, 5\rbrack$. For DA-PTN, we match the hyper-parameters of the $\operatorname{Gamma}\lbrack c_j, d_j\rbrack$ prior such that $\mathbb{E}\lbrack \ln \alpha_j\rbrack \approx 0$ and
$\operatorname{Var}\lbrack \ln \alpha_j\rbrack \approx 5$.
Regarding $\zeta_j$, we use the $\operatorname{Beta}\lbrack 1, 1\rbrack$ prior for our DA-PTN, MH-RW, and SS implementations.
As regards \texttt{ZIDM}, we recall that this implementation samples from
$\ln\,((1 - \zeta_j)/{\zeta_j})$ with a $\operatorname{Normal}\lbrack 0 , 5\rbrack$ prior, by default.

We consider a scenario with $d = 20$ categories. We simulate $R = 50$ replicates from the ZANIDM distribution, varying the
sample sizes $(n)$ and numbers of trials $(N)$ as $ n, N \in \{50, 200, 500\}$.
Our setup closely mirrors the one considered by \citet{Koslovsky2023}, where the zero-inflation parameters,
$\bm{\zeta}=(\zeta_1, \ldots, \zeta_{20})$, were randomly drawn from $\operatorname{Uniform}\lbrack 0.0, 0.5\rbrack$ and $\bm{\alpha}=(\alpha_1, \ldots, \alpha_{20})$ were
randomly drawn via $\ln \alpha_j \sim \operatorname{Uniform}\lbrack -2.3, 2.3\rbrack$. Here, the true values of the $\bm{\zeta}$ and $\bm{\alpha}$ parameters range from $0.006$ to $0.490$ and $0.149$ to $8.031$, respectively. For all MCMC algorithms, we use $51{,}000$ iterations, discard the first $1{,}000$ draws, and thin every $50$-th draw to reduce the dependency between them. We thereby obtain $1{,}000$ valid posterior samples.

To measure efficiency, we compute the average effective sample size (ESS) ratio --- which is
obtained by dividing the ESS by the number of valid posterior samples ---
for both the $\bm{\alpha}$ and $\bm{\zeta}$ parameters across the $R=50$ replicates. The results are displayed using box-plots
in \autoref{fig:frame_ess_bias_cp_zanidm_N_200} (panel A).
We quantify parameter recovery for $\bm{\alpha}$ and $\bm{\zeta}$ using overall relative bias based on the posterior mean, and the overall coverage probability
of the $95\%$ credible interval. Letting $\bm{\vartheta}=(\vartheta_1, \ldots, \vartheta_{d})$ denote the
true values of the parameter vector of interest, we compute these metrics as follows:
\[
\operatorname{Bias}(\bm{\vartheta}) = \dfrac{1}{Rd}\sum_{r=1}^{R} \sum_{j=1}^{d}
\left(\dfrac{\mathbb{E}\lbrack \vartheta_j \mid \mathbf{y}^{(r)}\rbrack}{\vartheta_j}-1\right),\quad
\operatorname{CP}_{95\%}(\bm{\vartheta}) = \dfrac{1}{Rd}\sum_{r=1}^{R}\sum_{j=1}^{d}
\mathds{1}\left(\vartheta_j \in \operatorname{CI}_{95\%}\lbrack \vartheta_j \mid \mathbf{y}^{(r)}\rbrack\right),
\]
where $\mathbb{E}\lbrack\vartheta_j \mid \mathbf{y}^{(r)}\rbrack$ and
$\operatorname{CI}_{95\%}\lbrack \vartheta_j \mid \mathbf{y}^{(r)}\rbrack$
are the posterior mean and $95\%$ credible interval, respectively, for the $\vartheta_j$ parameter on the $r$-th replicate.

\autoref{fig:frame_ess_bias_cp_zanidm_N_200} shows the results for $N = 200$ across the various sample sizes.
Panel A displays box-plots of the ESS ratios for $\bm{\alpha}$ and $\bm{\zeta}$. It is evident that the approaches
which incorporate Gibbs updates for $z_{ij}$ and $\lambda_{ij}$ have consistently higher ESS ratios than \texttt{ZIDM}, with the exception of
the MH-RW method when the sample size is $50$. This provides strong evidence that our proposal enhances efficiency.
Panels B and C present the bias, $\operatorname{Bias}(\bm{\vartheta})$, and the $95\%$ coverage probability,
$\operatorname{CP}_{95\%}(\bm{\vartheta})$, respectively. The results show that \texttt{ZIDM}
exhibits the highest bias and lowest coverage for both parameters.
While the DA-PTN approach achieves the lowest bias, it is accompanied by a notably large coverage probability.
In contrast, the MH-RW and SS methods display intermediate bias levels and coverage probabilities close to the nominal value.
\begin{figure}[H]
    \centering
    \includegraphics[width=1\linewidth]{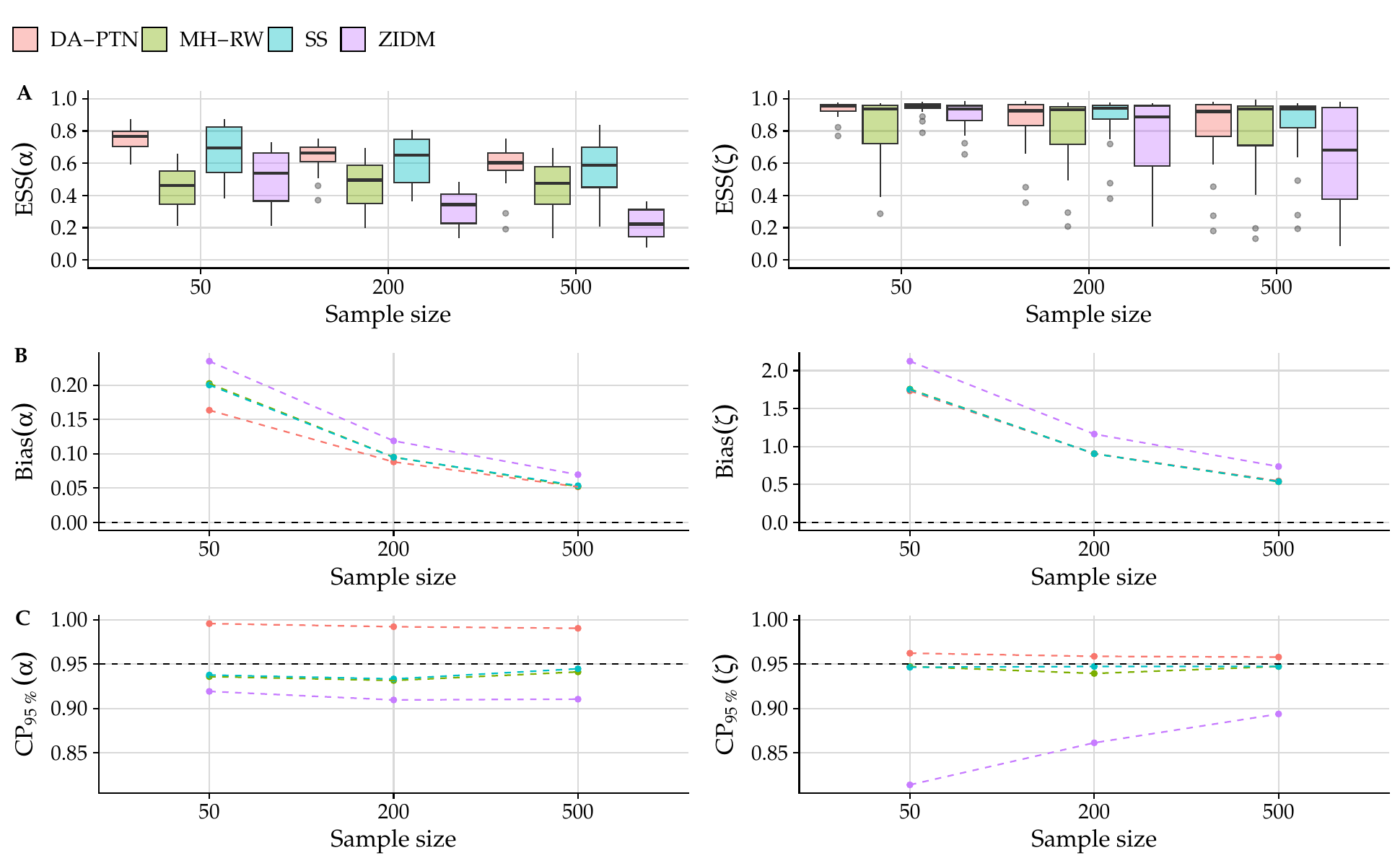}
    \caption{
    Comparison of efficiency and parameter recovery of the $\bm{\alpha}$ (left) and $\bm{\zeta}$ (right) parameters for different ZANIDM inference schemes. All metrics are averaged over the $d=20$ categories and $R=50$ replicates simulated from ZANIDM with $N = 200$ trials and
    varying sample size $\{50, 200, 500\}$.
    \textbf{A}: effective sample size ratio; \textbf{B}: overall relative bias based on the posterior mean;
    \textbf{C}: overall coverage probability of the $95\%$ credible interval.}
    \label{fig:frame_ess_bias_cp_zanidm_N_200}
\end{figure}
The results with lower ($N=50$) and higher ($N=500$) numbers of trials are omitted for brevity.
The conclusions about the performance of each inference scheme across all three metrics are
broadly in line with those drawn from \autoref{fig:frame_ess_bias_cp_zanidm_N_200}.
As $N$ varies, only the magnitude of the bias changes; the other metrics are stable and the relative rankings of each approach are unchanged.

\subsection{Simulated data analysis examples}\label{sec:simstudy2}

This simulation exercise shows the utility of both distributions
for addressing zero-inflation in count-compositional data. We simulate two data sets, each
containing $500$ observations. The data-generating processes (DGPs) are based on the ZANIM
and ZANIDM distributions, following their respective stochastic representations given in
\autoref{prop:stochastic_representation_zanim} and \autoref{def:zanidm_stochastic_representation}.
As the parameter values used here match the imbalanced configurations used for \autoref{fig:comparison_zanim_zanidm_pmf}, \autoref{tab:theoretical_moments}, and \autoref{tab:theoretical_covariance}, they are particularly challenging.
Specifically, we have $\bm{\theta} = (0.05, 0.70, 0.25)$ under ZANIM and $\bm{\alpha} = (2.0, 28.0, 10.0)$ under ZANIDM, with $\bm{\zeta} = (0.05, 0.15, 0.10)$ and $N = 30$ trials in each case.

When fitting ZANIM, we run our MCMC scheme
for $11{,}000$ iterations, with the first $1{,}000$ discarded as burn-in and a thinning interval of $10$ applied, and specify the following priors: $\lambda_j \sim \operatorname{Gamma}\lbrack 0.1, 0.1\rbrack$ and
$\zeta_j \sim \operatorname{Beta}\lbrack 1, 1\rbrack$.
For ZANIDM, we run our MCMC scheme
for $110{,}000$ iterations, with the first $10{,}000$ discarded as burn-in and a thinning interval of $100$ applied.
This setup helps to ensure reliable posterior inference and
reduce autocorrelation in the chains, particularly those for $\bm{\alpha}$.
The prior for $\zeta_j$ is set as per the ZANIM model and we use the DA-PTN approach to infer
$\alpha_j$, with its $\operatorname{Gamma}\lbrack c_j,d_j\rbrack$ prior elicited as per Section \ref{sec:simstudy1}. \autoref{tab:posterior_summaries_sim_2} presents the posterior summaries
for the model parameters.
\begin{table}[H]
\centering
\captionsetup{width=\textwidth}
\caption{
Posterior means, lower (LCI) and upper (UCI) limits of $95\%$ credible intervals, and effective sample size (ESS) ratios for the parameters of the ZANIM and ZANIDM models.
We report the posterior summaries for each model under two data-generating processes (DGPs), which are based on the ZANIM and ZANIDM distributions. For each DGP, $500$ samples are simulated from the corresponding distribution using imbalanced parameter configurations.
}
\label{tab:posterior_summaries_sim_2}
\vskip-0.3cm\hrule\smallskip
\begin{tabular*}{\textwidth}{@{\hspace{0.01\textwidth}\extracolsep{\fill}}llcrrrrr@{\hspace{0.01\textwidth}}}
DGP & Model & Parameter & Mean & $95\%$ LCI & $95\%$ UCI &  ESS ratio \\
\midrule
\multirow{12}{*}{\shortstack[l]{ZANIM:\\$\bm{\theta}\in \{0.05, 0.70, 0.25\},$\\$\bm{\zeta} \in \{0.05, 0.15, 0.10\}$,\\ $N=30$ trials.}} & \multirow{6}{*}{ZANIM} & $\theta_1$ & 0.047 & 0.044 & 0.051 & 0.808 \\
   &  & $\theta_2$ & 0.706 & 0.698 & 0.714 & 1.037 \\
   &  & $\theta_3$ & 0.246 & 0.239 & 0.253 & 1.039 \\
   &  & $\zeta_1$ & 0.025 & 0.001 & 0.070 & 0.489 \\
   &  & $\zeta_2$ & 0.140 & 0.111 & 0.172 & 0.930 \\
   &  & $\zeta_3$ & 0.122 & 0.096 & 0.151 & 0.833 \\
\cmidrule{2-8} & \multirow{6}{*}{ZANIDM} & $\alpha_1$ & 3.859 & 1.456 & 13.026 & 0.054 \\
   &  & $\alpha_2$ & 56.607 & 18.809 & 219.522 & 0.055 \\
   &  & $\alpha_3$ & 19.734 & 6.735 & 72.113 & 0.054 \\
   &  & $\zeta_1$ & 0.011 & 0.000 & 0.050 & 0.581 \\
   &  & $\zeta_2$ & 0.140 & 0.112 & 0.171 & 0.933 \\
   &  & $\zeta_3$ & 0.120 & 0.094 & 0.151 & 0.865 \\
\hline
\multirow{12}{*}{\shortstack[l]{ZANIDM:\\$\bm{\alpha}\in \{2.0, 28.0, 10.0\},$\\$\bm{\zeta} \in \{0.05, 0.15, 0.10\}$,\\$N=30$ trials.}} & \multirow{6}{*}{ZANIM} & $\theta_1$ & 0.053 & 0.049 & 0.058 & 0.989 \\
   &  & $\theta_2$ & 0.693 & 0.684 & 0.702 & 1.140 \\
   &  & $\theta_3$ & 0.254 & 0.247 & 0.262 & 1.064 \\
   &  & $\zeta_1$ & 0.205 & 0.148 & 0.258 & 0.915 \\
   &  & $\zeta_2$ & 0.127 & 0.100 & 0.156 & 0.812 \\
   &  & $\zeta_3$ & 0.096 & 0.073 & 0.122 & 1.032 \\
\cmidrule{2-8}
& \multirow{6}{*}{ZANIDM} & $\alpha_1$ & 1.241 & 0.787 & 2.301 & 0.148 \\
   &  & $\alpha_2$ & 18.822 & 11.284 & 35.420 & 0.145 \\
   &  & $\alpha_3$ & 6.829 & 4.130 & 12.985 & 0.133 \\
   &  & $\zeta_1$ & 0.025 & 0.001 & 0.095 & 0.476 \\
   &  & $\zeta_2$ & 0.129 & 0.101 & 0.160 & 0.902 \\
   &  & $\zeta_3$ & 0.093 & 0.068 & 0.120 & 0.827 \\
\end{tabular*}
\hrule
\end{table}Notably, both models closely recover the true values of the $\bm{\zeta}$ parameters when the data are generated from the ZANIM distribution. However, while the inference for $\bm{\theta}$ under the ZANIM model is satisfactory, the inference for $\bm{\alpha}$ under the ZANIDM model is poor, as indicated by wide credible intervals and low ESS ratios. Conversely, when the ZANIDM distribution is used as the DGP, the inference for $\bm{\zeta}$ under the ZANIM model is poor. The overestimation of $\zeta_1$ is especially notable and can be attributed to the overdispersed nature of the ZANIDM distribution.
The inference for $\bm{\alpha}$ under the ZANIDM model is improved in this case; the true values now fall within the $95\%$ credible intervals and the ESS ratios are larger. We note that the alternative MH-RW and slice sampling schemes yield similar results for the ZANIDM model, though \texttt{ZIDM} differs more substantially. For brevity, we defer these results, and those for additional simulations with balanced parameter settings, to the Supplementary Material.

To further compare the models, we compute the expected log-predictive density (ELPD)
which requires evaluating the log-PMFs of ZANIM and ZANIDM derived in Section \ref{sec:model_derivation}.
We stress that such likelihood-based model-selection criteria would not be feasible without first deriving
these finite mixture PMFs.
We use the Pareto smoothed importance sampling (PSIS) introduced by \citet{Vehtari2017} and available via the \textsf{R} package \texttt{loo}.
Given the posterior draws of the model parameters, denoted by $\{\bm{\vartheta}^{(m)}\}_{m=1}^{M}$, where $m\in\{1,\ldots,M\}$ indexes the number of valid posterior samples, the estimate of ELPD based on PSIS is defined by
\[
\widehat{\operatorname{elpd}} = \sum_{i=1}^n \ln \left(\dfrac{\sum_{m=1}^M w_i^{(m)} p(y_i \mid \bm{\vartheta}^{(m)}) }{\sum_{m=1}^M w_i^{(m)}} \right),
\]
where $w_i^{(m)}$ are the PSIS weights and $p(y_i \mid \bm{\vartheta}^{(m)})$ is
the model likelihood evaluated at the observation $y_i$.
The higher the ELPD, the better the model.

\autoref{tab:elpd_sim2} gives the ELPD results for different models under both DGPs. We
also include the multinomial and DM distributions for comparison purposes, for which we use Stan
via the \textsf{R} package \texttt{cmdstanr} \citep{cmdstanr} in each case.
As expected, the ELPD favours the distribution used to generate the data,
although the ZANIM and ZANIDM models obtain a similar ELPD when the data are generated from the ZANIM distribution.
Interestingly, when the data are generated from the ZANIDM distribution, we observe that the ZANIM model outperforms the DM model,
suggesting that accounting for zero-inflation improves the fit more than the overdispersion which
distinguishes the DM and multinomial distributions. However, we recall that the $\mathrm{DI}\lbrack Y_j\rbrack$ indices are similar for both DGPs (see \autoref{tab:theoretical_moments}).

\begin{table}[H]
\centering
\captionsetup{width=0.6\textwidth}
\caption{
Bayesian model evaluation metrics for different models with data simulated under two data-generating processes (DGPs) based on the ZANIM and ZANIDM distributions.
We report the expected log-predictive density $(\widehat{\operatorname{elpd}})$ and its standard error ($\operatorname{se}(\widehat{\operatorname{elpd}})$).
For each DGP, $500$ samples are simulated from the corresponding distribution using imbalanced parameter configurations.}
\label{tab:elpd_sim2}
\vskip-0.3cm\rule{.6\textwidth}{0.4pt}\smallskip

\begin{tabular*}{.6\textwidth}{@{\hspace{0.01\textwidth}\extracolsep{\fill}}llrr@{\hspace{0.01\textwidth}}}
DGP & Model & $\widehat{\operatorname{elpd}}$ & $\operatorname{se}(\widehat{\operatorname{elpd}})$
\\
  \midrule
\multirow{4}{*}{ZANIM} & ZANIM & $-2055.124$ & 20.778 \\
   & ZANIDM & $-2124.029$ & 16.156 \\
   & DM & $-2754.835$ & 26.139 \\
   & Multinomial & $-4751.385$ & 264.647 \\
  \hline
  \multirow{4}{*}{ZANIDM} & ZANIDM & $-2225.018$ & 20.995 \\
   & ZANIM & $-2306.174$ & 35.110 \\
   & DM & $-2662.285$ & 27.945 \\
   & Multinomial & $-4764.473$ & 266.561 \\
\end{tabular*}
\rule{.6\textwidth}{0.4pt}
\end{table}
For both the ZANIM and ZANIDM models, each panel in \autoref{fig:ppc_sim}
illustrates the mean and the $95\%$ CI of the posterior predictive distribution
(represented by red and blue error bars) compared with the empirical distribution of the observed count
$Y_j$ (depicted by grey bars) for each category $j\in \{1,2,3\}$.
To enhance the visualisation, we report the relative frequency and compare the empirical and posterior estimates thereof.
In Panel A, the data are generated from the ZANIM distribution and, as expected, the fitted ZANIM model closely
aligns with the observed data. In contrast, the ZANIDM model fails to capture certain patterns, particularly
for the component $j=2$. Conversely, when the ZANIDM distribution is used as the data-generating process, as shown
in Panel B, the ZANIDM model provides a better fit, effectively capturing the behaviour of the observed data, while the ZANIM model markedly deviates from the observed data, particularly when $\alpha_j$ is large.
\begin{figure}[H]
    \centering
    \includegraphics[width=\linewidth]{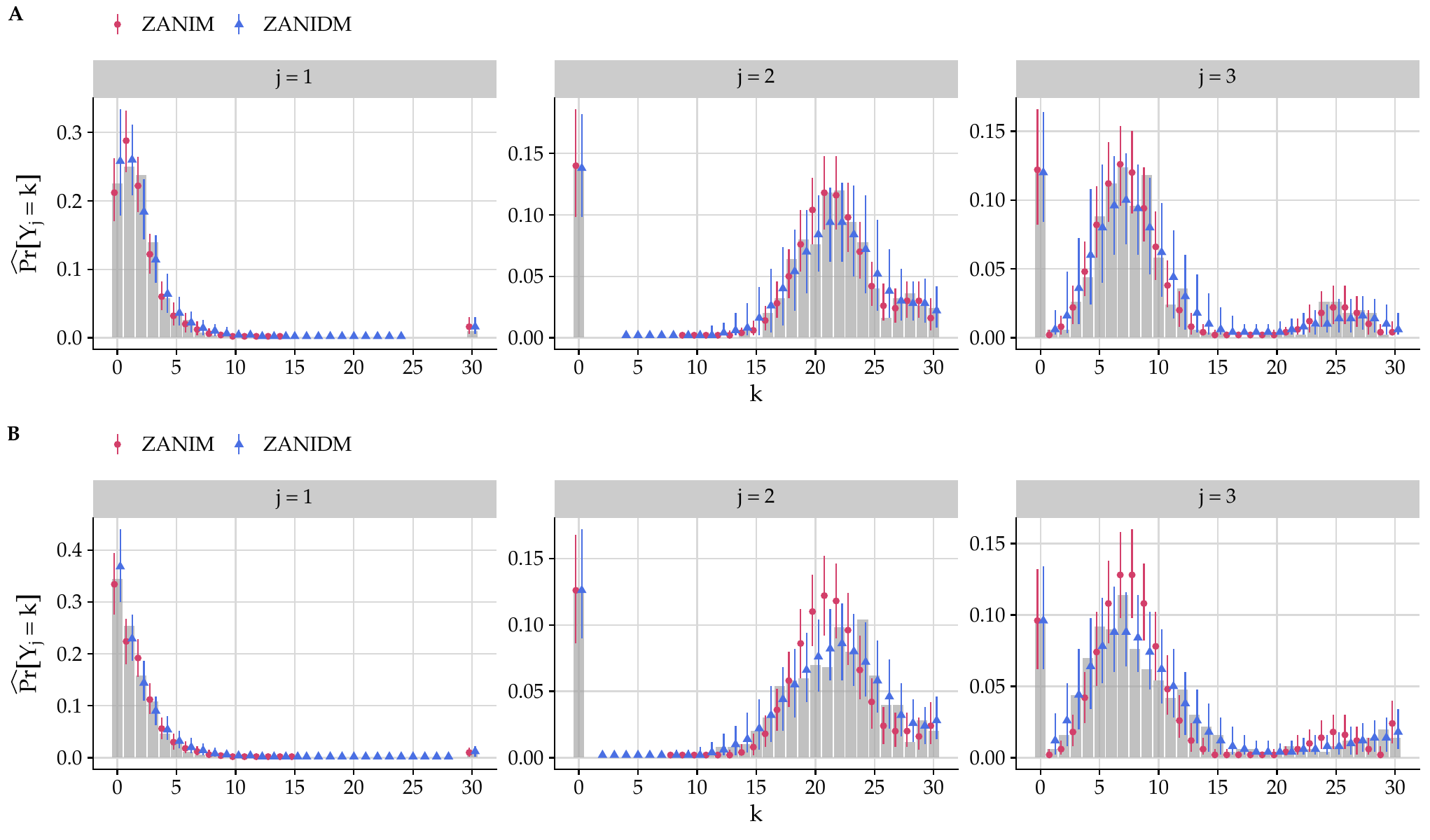}
    \caption{Empirical relative frequency estimates (grey bars) of the observed categories $y_j$, with the corresponding model estimates (where available) from the posterior predictive distributions of ZANIM (red circles) and ZANIDM (blue triangles). The points represent the means and the error-bars represent the corresponding $95\%$ prediction intervals. \textbf{A}: DGP from ZANIM. \textbf{B}: DGP from ZANIDM.}
    \label{fig:ppc_sim}
\end{figure} \section{Analyses of microbiome data}\label{sec:microbiome}

As a further demonstration of the ZANIM and ZANIDM distributions, we use them both to model a publicly-available
human gut microbiome dataset first studied by \citet{Wu2011}.
The data contain counts for $d=28$ genera-level operational taxonomic units obtained from 16S rRNA sequencing on $n=98$ individuals.
Across the entire dataset, $32.6\%$ of the observed counts are zero and the observation-specific numbers of trials $N_i$
range from $1{,}183$ to $15{,}447$. Notably, all taxa exhibit varying degrees of overdispersion, with the empirical
$\operatorname{DI}\lbrack Y_j\rbrack$ indices ranging from $2.92$ (\textit{Actinomycineae}) to $4418.20$ (\textit{Prevotella}).

We run our MCMC schemes for $50{,}000$ iterations, with the first $40{,}000$ discarded as burn-in and a thinning interval of $10$
applied. All prior distributions are specified as per Section \ref{sec:simstudy2}.
Our C\texttt{++} implementations of the ZANIM and ZANIDM inference schemes take approximately $1.5$ and $16.5$ seconds, respectively.
Furthermore, the ESS \textit{per second} is indicative of rapid mixing: when averaged across the taxa, we obtain $757.49$ and $766.10$ for the ZANIM parameters $\bm{\theta}$ and $\bm{\zeta}$,  respectively, and values of $34.65$ and $47.80$ for the respective ZANIDM parameters $\bm{\alpha}$ and $\bm{\zeta}$. Allowing for the difference in runtimes, the lower values for $\bm{\alpha}$ can be explained by the difficulties, outlined in Section \ref{sec:zanidm_inference}, in sampling ZANIDM's concentration parameters.

\autoref{fig:posterior_chosen_taxa} shows trace plots of all parameters of interest under both models for the aforementioned \textit{Actinomycinae} and \textit{Prevotella} taxa. These trace plots --- along with those for the remaining taxa, which we defer to the Supplementary Material --- are indicative of satisfactory convergence and affirm that our MCMC schemes remain stable, even in the presence of widely-varying numbers of trials and imbalanced levels of zero-inflation and overdispersion.

\begin{figure}[H]
    \centering
    \includegraphics[width=1\linewidth]{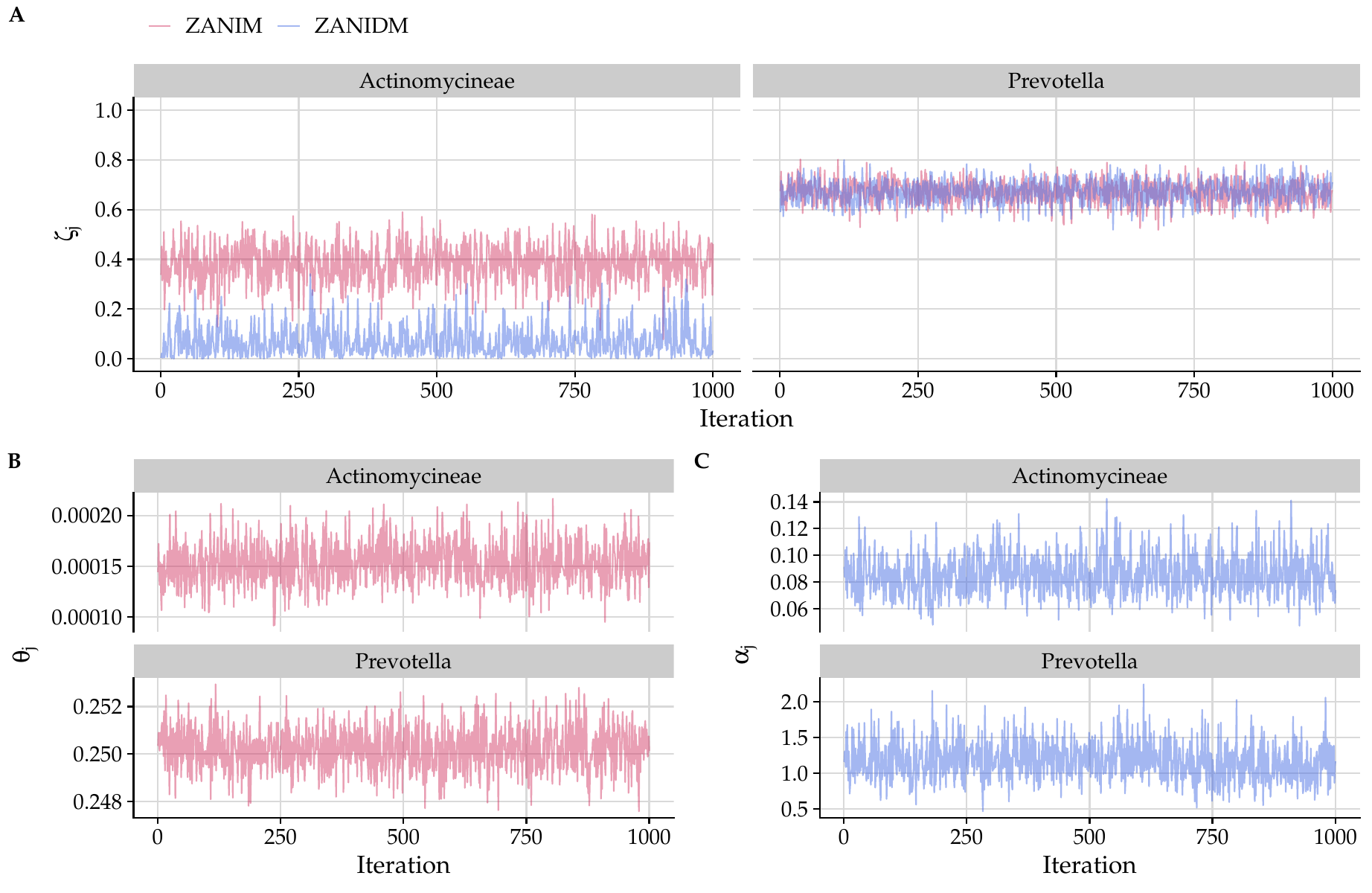}
    \caption{Trace plots for the parameters of the ZANIM and ZANIDM models for the \textit{Actinomycineae} and \textit{Prevotella} taxa. \textbf{A}: $\zeta_j$ under both models. \textbf{B}: ZANIM's success probability parameter $\theta_j$. \textbf{C}: ZANIDM's concentration parameter $\alpha_j$.}
    \label{fig:posterior_chosen_taxa}
\end{figure}

For the \textit{Actinomycinae} taxon depicted in \autoref{fig:posterior_chosen_taxa}, it is notable that the posterior mean of the zero-inflation parameter $\zeta_j$ differs significantly between the two models, with $\zeta_j$ being notably higher under the ZANIM model (see Panel \textbf{A}). This, coupled with the relatively low posterior mean for $\alpha_j$ in Panel \textbf{C}, highlights the extent to which the DM components in ZANIDM can account for the observed zeros ($56$ out of $98$ observations, for this taxon) through overdispersion. Conversely, under the ZANIM model, the low posterior mean for $\theta_j$  (see Panel \textbf{B}), coupled with a higher posterior mean for $\zeta_j$, shows that the zeros are largely attributed to structural zero-inflation by ZANIM.

The \textit{Prevotella} taxon, also depicted in \autoref{fig:posterior_chosen_taxa}, further illustrates the importance of modelling structural zeros in these data. For this taxon, $66$ of the $98$ observations are zeros, which corresponds to roughly $67\%$ of the individuals. According to Panel \textbf{A}, the ZANIM and ZANIDM models both appear to largely attribute this to structural zero-inflation, since the posterior distribution of $\zeta_j$ is concentrated around $0.67$ in each case.

The remaining trace plots in the Supplementary Material show similar behaviour, in that the posterior mean of the $\zeta_j$ parameter is consistently no lower under the ZANIDM model than under the ZANIM model. This is consistent with the aforementioned overdispersion exhibited by all taxa. In light of this, it is not surprising that the ZANIDM model provides a better fit to these data, with its estimated ELPD being significantly higher than that of the ZANIM model. Overall, although covariates are also available for these data, and accounting for them within our framework could improve the analysis, the results for both models clearly highlight the need to account for the co-occurrence of zero-inflation across taxa in these data.
 \section{Discussion}\label{sec:discussion}

The main contributions of this paper have been the novel probabilistic insights provided for the ZANIM and ZANIDM
distributions, which are suitable for addressing zero-inflation in count-compositional data.
We provided a proper probabilistic characterisation of the ZANIDM distribution, first introduced by
\citet{Koslovsky2023}, and developed the more parsimonious ZANIM distribution.
We demonstrated that both distributions belong to a unifying framework and can be represented as finite mixtures.
We derived their key properties, including moments and the corresponding marginal distributions.
We showed that the distributions can accommodate overdispersion and positive correlations, which can be attributed
to their mixture structure and zero-inflation properties.

We subsequently developed Bayesian inference frameworks for both distributions. Specifically, for ZANIDM, we
showed through simulation studies that our collapsed Gibbs sampling approach for updating the latent parameters is more efficient than the
algorithm of \citet{Koslovsky2023}. Our extensive simulations and our application to a human gut microbiome dataset also showed that both distributions are effective when
data exhibit zero-inflation across multiple categories.
It is worth noting that if exclusively non-zero counts are observed in one or more categories, or if contextual information gives sufficient reason to believe that the observed zeros for a given $Y_j$ are not structural in nature, it is possible to simplify the distributions by removing the corresponding zero-inflation parameters $\zeta_j$ from the model.

Our unifying framework that characterises both distributions as finite mixtures can be extended to incorporate other component distributions suitable for modelling count-compositional data, e.g., the Conway-Maxwell-multinomial distribution \citep{Kadane2018,Morris2020}.
In doing so, novel zero-and-$N$-inflated counterparts for such distributions could be developed, and theoretical insights similar to those for the ZANIM and ZANIDM distributions could be provided, though inference would remain a challenge.
It is also of interest to extend our framework to incorporate compositional component distributions which handle proportions, such as the Dirichlet distribution, for settings where information is available on relative abundances rather than raw counts.

A general limitation of the proposed ZANIM and ZANIDM distributions is our assumption of independence for the latent binary indicators $\mathbf{z}$ under both distributions.
This is explicitly reflected in their respective stochastic representations in \autoref{prop:stochastic_representation_zanim} and \autoref{def:zanidm_stochastic_representation}, which both proceed from the assumption that $(z_j\mid\zeta_j)\sim\operatorname{Bernouilli}\lbrack1-\zeta_j\rbrack$ are independent for $j \in \{1, \ldots, d\}$. This assumption may be unrealistic in some real data settings, as zeros may be liable to co-occur in two or more categories because of dependencies or shared latent factors.
In the context of a multivariate zero-inflated Poisson model, \citet{Lee2018} address this issue by assuming a multivariate normal prior for the latent binary indicators and use a multivariate probit regression model. An alternative way to relax this assumption in our setting relies on the fact that the augmented likelihoods for both distributions factor into a product of independent, category-specific terms --- where $\zeta_j$ is in turn conditionally independent of the $\lambda_j$ and $\alpha_j$ parameters of the ZANIM and ZANIDM distributions, respectively --- and the likelihood contribution for $\zeta_j$ has a Bernoulli form. Thus, it is possible to introduce shared latent factors across the categories via
$g(\zeta_j)=\beta_j + u_j$, with $\mathbf{u} \sim \operatorname{N}_d\lbrack\mathbf{0}_d, \mathbf{R}\rbrack$,
where $g(\cdot)$ is a suitable link function and $\mathbf{R}$ is a non-diagonal covariance matrix.
Extending our approach in this direction is of great interest and may improve the practical utility of our proposed models. We stress that the theoretical properties derived for both distributions would continue to hold, conditional on the shared latent factors, although the inference procedures would be more involved.

A related future research direction would be to explore non-parametric regression approaches for incorporating covariates into both distributions' category-specific parameters. Indeed, covariates are available for the
microbiome dataset presented in Section \ref{sec:microbiome} and accounting for them within our framework would improve the analyses. For the ZANIM model, this would extend the approach of \citet{Zeng2023}, who allow only the success probabilities, and not the $\zeta_{j}$ parameters, to depend on covariates. For the ZANIDM model, this would provide additional flexibility over the log-linear and logistic regressions employed by \citet{Koslovsky2023} for the $\alpha_{j}$ and $\zeta_{j}$ parameters, respectively. We also stress that, even in regression settings, our Gibbs updates of $\lambda_{ij}$ and $z_{ij}$ under ZANIDM would still be advantageous.

Overall, we envisage that the novel theoretical insights we provide for the ZANIM and ZANIDM distributions will be of interest to researchers and applied practitioners working with either distribution or with zero-inflated multivariate data more broadly.

\section*{Acknowledgments}
{\small
This publication has emanated from research conducted with the financial support of Taighde {\'{E}}ireann -- Research Ireland under Grant number 18/CRT/6049.
Andrew Parnell’s work was supported by: the UCD-Met {\'{E}}ireann Research Professorship Programme (28-UCDNWPAI); a Research Ireland Research Centre award 12/RC/2289\_P2; the Research Ireland Centre for Research Training 18/CRT/6049; and Research Ireland Co-Centre Climate+ in Climate Biodiversity and Water award 22/CC/11103. For the purpose of Open Access, the author has applied a CC BY public copyright licence to any Author Accepted Manuscript version arising from this submission.}

\appendix
\section*{Appendices}
\renewcommand{\thetable}{\Alph{section}.\arabic{table}}
\renewcommand{\thefigure}{\Alph{section}.\arabic{figure}}
\renewcommand{\theequation}{\Alph{section}.\arabic{equation}}
\renewcommand{\thealgocf}{\Alph{section}.\arabic{algocf}}
\counterwithin*{equation}{section}
\counterwithin*{table}{section}
\counterwithin*{figure}{section}
\counterwithin*{algocf}{section}
\SetFuncSty{textit}

\section{Derivation of the ZANIM PMF}\label{app:zanim_pmf}
\begin{proof}[\textbf{\upshape Proof of \autoref{theo:zanim_pmf}:}]

The goal is to marginalise out the latent variable $\phi$ from \eqref{eq:pmf_zanim_augmented}
and ensure that the function $\int p(\mathbf{y}, \phi; \bm{\lambda}, \bm{\zeta}) \dd \phi$
will be a proper PMF.
We shall denote $p(\mathbf{y}, \phi; \bm{\lambda}, \bm{\zeta})=p(\mathbf{y}, \phi)$, for brevity.
We begin with
\[
p(\mathbf{y}, \phi)= \dfrac{N!\phi^{N - 1}}{\Gamma(N)}\prod_{j=1}^d \left\lbrack\zeta_j\mathds{1}_0(y_{j}) + (1 - \zeta_j)\dfrac{\lambda_j^{y_{j}}e^{-\lambda_j\phi}}{y_{j}!}\right\rbrack,
\]
and note that the product will have $2^d$ terms, as a result of the binomial theorem.
However, due to the indicator function $\mathds{1}_0(y_{j})$, we can simplify some terms.
We shall consider four groups of terms, corresponding to the four types of mixture component in the ZANIM PMF.

$\bullet$ Standard multinomial$\colon$
\begin{align*}
\int p(\mathbf{y}, \phi)\dd \phi &=
\displaystyle\binom{N}{y_{1} \dots y_{d}}\prod_{j=1}^d(1-\zeta_j)
\left(\dfrac{\lambda_j}{\sum_{k=1}^d\lambda_k}\right)^{y_{j}} =\displaystyle\binom{N}{y_{1} \dots y_{d}}\prod_{j=1}^d(1-\zeta_j)\theta_j^{y_{j}}.
\end{align*}

$\bullet$ $\delta_{\mathbf{0}_d}$ component$\colon$
\begin{align*}
\int p(\mathbf{y}, \phi)\dd \phi &=
\left\lbrack\prod_{j=1}^d\zeta_j\mathds{1}_0(y_{j})\right\rbrack N!
\int \dfrac{1}{\Gamma(N)}\phi^{N-1}\dd \phi=
\left\lbrack\prod_{j=1}^d\zeta_j\mathds{1}_0(y_{j})\right\rbrack N!0^{-N}
\int \dfrac{0^{N}}{\Gamma(N)}\phi^{N-1}e^{-0\phi}\dd \phi\\
&=\prod_{j=1}^d\zeta_j\mathds{1}_0(y_{j}).
\end{align*}
The simplification arises due to the fact that $N = 0$ when $y_j=0\:\forall\:j$, such that  $N!=0!=1$ and $0^0=1$, by convention. Consequently, the integral above can be seen as an abstract representation of a $\operatorname{Gamma}\lbrack 0, 0\rbrack$ distribution,
which in practice is not well-defined.
Under the convention that $\operatorname{Gamma}\lbrack 0, 0\rbrack \overset{d}{=} \delta_0(\cdot)$, the integral evaluates to $1$.

$\bullet$ N$\mathbf{e}_d^{(j)}$ components$\colon$

We have $d$ terms with this constraint, which can be written as follows:
\[
\int p(\mathbf{y}, \phi)\dd \phi =
\sum_{j=1}^d
\left\{\mathds{1}_0\left(\sum_{k\colon k\neq j} y_k \right)(1 - \zeta_j)\prod_{k\colon k\neq j} \zeta_k
\int \dfrac{N!\phi^{N - 1}}{\Gamma(N)}\dfrac{\lambda_j^{y_j}e^{-\lambda_j\phi}}{y_j!}\dd\phi\right\}=
\sum_{j=1}^d
\left\{\mathds{1}_0\left(\sum_{k\colon k\neq j} y_k \right)(1 - \zeta_j)\prod_{k\colon k\neq j} \zeta_k\right\},
\]
where the simplification comes from the fact that $N=y_j$
when $y_j > 0$ and $y_k = 0$ $\forall \: k \neq j$.

$\bullet$ Reduced multinomials$\colon$

The remaining terms represent cases where at most $d-2$ categories exhibit zero-inflation.
To write the $2^{d}-d-2$ such terms compactly, we recall the corresponding set
$\mathfrak{K}= \{\mathcal{K} \subseteq \{1,\ldots,d\}; 1 \leq \lvert\mathcal{K}\rvert \leq d-2\}$
described in \autoref{def:set_K}.
Using this, we can write
\[
p(\mathbf{y}, \phi) = \dfrac{N!\phi^{N - 1}}{\Gamma(N)}\sum_{\mathcal{K} \in \mathfrak{K}}
\left\{\mathds{1}_0\left(\sum_{k \in \mathcal{K}} y_k \right)
\prod_{k \in \mathcal{K}}\zeta_k\prod_{j \notin \mathcal{K}}
\left\lbrack (1-\zeta_j)\dfrac{\lambda_j^{y_j}e^{-\lambda_j\phi}}{y_j!} \right\rbrack\right\}.
\]
Then, for a generic set $\mathcal{K}$, we have
\begin{align*}
\int p(\mathbf{y}, \phi)\dd\phi &= \mathds{1}_0\left(\sum_{k \in \mathcal{K}} y_k \right)\prod_{k \in \mathcal{K}}\zeta_k
\prod_{j \notin \mathcal{K}}\left\lbrack (1-\zeta_j)\dfrac{\lambda_j^{y_j}}{y_j!} \right\rbrack
N!\int \dfrac{e^{-\phi\sum_{j \notin \mathcal{K}}\lambda_j}\phi^{N - 1}}{\Gamma(N)}\dd\phi \\
&=\mathds{1}_0\left(\sum_{k \in \mathcal{K}} y_k \right)\prod_{k \in \mathcal{K}}\zeta_k
\prod_{j \notin \mathcal{K}}\left\lbrack (1-\zeta_j)\right\rbrack
\binom{N}{\{y_j\}_{j \notin \mathcal{K}}}
\prod_{j \notin \mathcal{K}}\left\lbrack \left(\theta_j^{\mathcal{K}}\right)^{y_j}\right\rbrack,
\end{align*}
where
$\theta_j^{\mathcal{K}} = \lambda_j/\sum_{k \notin \mathcal{K}} \lambda_k =
\theta_j/(1 - \sum_{k \in \mathcal{K}} \theta_k)$.
The simplification arises from the fact that we have $y_k = 0$ for the indices $k \in \mathcal{K}$, such that $N = \sum_{j=1}^dy_j = \sum_{j \notin \mathcal{K}} y_j$.

Collecting the terms and using \autoref{def:etas} leads to \autoref{theo:zanim_pmf}.
\end{proof} \section{Derivation of the ZANIDM PMF}\label{app:zanidm_pmf}

\begin{proof}[\textbf{\upshape Proof of \autoref{theo:zanidm_pmf}:}]
From the stochastic representation in \autoref{def:zanidm_stochastic_representation},
we highlight a key fact that will be used extensively below,
which is that $\lambda_j=0$ (i.e., $z_{j}=0$) implies $y_j = 0$.
Next, we note that we can marginalise out $z_{j}$ in the distribution of the latent $\lambda_{j}\in\{0,\mathbb{R}^{+}\}$ via
\[
p(\lambda_{j} \mid \alpha_{j}, \zeta_j)
= \sum_{k=0}^1 p(\lambda_{j} \mid z_{j} = k, \alpha_{j})p(z_{j} = k \mid \zeta_j) =
\zeta_j\delta_0(\lambda_{j}) + (1-\zeta_j) \frac{\lambda_{j}^{\alpha_j - 1} e^{-\lambda_{j}}}{\Gamma(\alpha_j)}.
\]
Consequently, we note that
$(\lambda_{j} \mid \alpha_j, \zeta_j) \sim \operatorname{ZAG}\lbrack 1-\zeta_j, \alpha_{j}, 1\rbrack$,
i.e., unconditional on $z_{j}$, the latent variable $\lambda_{j}$ follows a zero-augmented gamma
distribution with shape $\alpha_j$, rate $1$, and $\zeta_j$ being the probability that $\lambda_{j}=0$. Thus, the probability density function of $(\lambda_{j} \mid \alpha_j, \zeta_j)$ is given by

\[
p(\lambda_j \mid \zeta_j, \alpha_j) =
\zeta_j^{\mathds{1}_0(\lambda_j)}(1 - \zeta_j)^{1 - \mathds{1}_0(\lambda_j)}
\left(
\frac{\lambda_j^{\alpha_j - 1}e^{-\lambda_j}}{\Gamma(\alpha_j)}
\right)^{1 - \mathds{1}_0(\lambda_j)}, \quad \lambda_j \in \{0,\mathbb{R}^{+}\}.
\]

Note that the augmented likelihood for the ZANIDM distribution can be written as
\begin{align*}
    \mathcal{L}(\mathbf{y},\bm{\lambda}; \bm{\alpha},\bm{\zeta})
    &=
    \binom{N}{y_1, \ldots, y_d}\prod_{j=1}^d\left(\frac{\lambda_j}{\sum_{k=1}^d\lambda_k}\right)^{y_j}
    \left\lbrack
    \zeta_j^{\mathds{1}_0(\lambda_j)} (1-\zeta_j)^{1 - \mathds{1}_0(\lambda_j)}
    \left(\frac{\lambda_j^{\alpha_j - 1}e^{-\lambda_j}}{\Gamma(\alpha_j)}\right)^{1 - \mathds{1}_0(\lambda_j)}
    \right\rbrack \\
&=
    c
    \prod_{j=1}^d
    \left\lbrack
    \zeta_j^{\mathds{1}_0(\lambda_j)}(1-\zeta_j)^{1 - \mathds{1}_0(\lambda_j)}
    \right\rbrack
    \prod_{j=1}^d
    \left\lbrack
    \left(\frac{1}{\sum_{k=1}^d\lambda_k}\right)^{y_j}
    \left(\frac{\lambda_j^{y_j + \alpha_j - 1}e^{-\lambda_j}}{\Gamma(\alpha_j)}\right)
    \right\rbrack^{1 - \mathds{1}_0(\lambda_j)},
\end{align*}
where we denote the constant $c = \binom{N}{y_1, \ldots, y_d}=\Gamma(N + 1) / \prod_{j=1}^d \Gamma(y_j + 1)$, for brevity, and the simplification relies on $\theta_j^{y_j}$ being non-zero when $\lambda_j > 0$.
In light of this, the marginal PMF of $\mathbf{Y}$ is obtained by integrating out the latent variables $\lambda_j$:
\[
\Pr\left\lbrack \mathbf{Y} = \mathbf{y}; \bm{\alpha},\bm{\zeta} \right\rbrack =
\int\cdots\int \mathcal{L}(\mathbf{y},\bm{\lambda};\bm{\alpha},\bm{\zeta})
\dd\lambda_1\ldots\dd\lambda_d = \int_{\mathbb{R}^d} \mathcal{L}(\mathbf{y},\bm{\lambda};\bm{\alpha},\bm{\zeta})\dd\bm{\lambda}
\]
Before we proceed with the integration, we state the integral result;
consider the change of variables $s = \sum_{j=1}^d\lambda_j$ and $\lambda_j = \theta_j s$, which leads to $\dd\lambda_1\ldots \dd \lambda_d = s^{d-1}\dd s\prod_{j=1}^{d-1}\dd\theta_j$. Note that the vector $\bm{\theta} = (\theta_1, \ldots \theta_d)$ belongs to the simplex $\mathcal{S}^d = \{\bm{\theta}: \theta_j > 0, j \in \{1,\ldots, d\}, \sum_{j=1}^d\theta_j = 1\}$, which leads to the following multivariate beta integral result:
\[
\int_{\mathcal{S}^d} \prod_{j=1}^d \theta_j^{k_j-1} \dd \bm{\theta} =
\int_{0}^{1}\int_{0}^{1- \theta_1} \cdots\int_{0}^{1-\theta_1-\ldots- \theta_{d-2}}
\theta_1^{k_1-1}\cdots \theta_{d}^{k_d-1} \dd \theta_1 \ldots \dd \theta_{d-1} =
\frac{\prod_{j=1}^d\Gamma(k_j)}{\Gamma\left(\sum_{j=1}^dk_j\right)},
\]
where we do not need to integrate with respect to $\theta_d$ explicitly, since $\theta_d$ is fully determined by the simplex constraint $\sum_{j=1}^d\theta_j=1$.

Since $\lambda_{j} \in \{0, \mathbb{R}^{+}\}\:\forall\:j \in \{1,\ldots, d\}$, the integration should be performed considering all $2^d$ combinations of $\lambda_{j}$ being $0$ or non-zero. We consider four different groups of terms, as per  \ref{app:zanim_pmf}, and adopt the notation $\alpha_s = \sum_{j=1}^d\alpha_j$ and $\alpha^{\mathcal{K}}_s = \sum_{j \notin \mathcal{K}} \alpha_j$.

$\bullet$ No inflation$\colon$
When $\lambda_j>0 \: \forall\:j$, the integral becomes
\begin{align*}
\int_{\mathbb{R}^d} \mathcal{L}(\mathbf{y},\bm{\lambda};\bm{\alpha},\bm{\zeta})\dd\bm{\lambda} &=
c\prod_{j=1}^d(1-\zeta_j)\int\cdots\int
\prod_{j=1}^d
\left\lbrack
\left(\frac{1}{\sum_{k=1}^d\lambda_k}\right)^{y_j}
\left(\frac{\lambda_j^{y_j + \alpha_j - 1}e^{-\lambda_j}}{\Gamma(\alpha_j)}\right)
\right\rbrack
\dd \lambda_1\ldots\dd\lambda_d\\
&=\frac{\Gamma(N + 1)\Gamma(\alpha_s)}{\Gamma(N + \alpha_s)}
\prod_{j=1}^d(1-\zeta_j)\frac{\Gamma(y_j + \alpha_j)}{\Gamma(\alpha_j)\Gamma(y_j + 1)}.
\end{align*}

$\bullet$ `All'-inflation$\colon$ When $\lambda_{j} = 0 \: \forall\:j$, then $y_j = 0 \: \forall\:j$ and $c=1$, such that the integral becomes
\[
\int_{\mathbb{R}^d} \mathcal{L}(\mathbf{y},\bm{\lambda};\bm{\alpha},\bm{\zeta})\dd\bm{\lambda}
=\prod_{j=1}^d\zeta_j.
\]

$\bullet$ $N$-inflation$\colon$ When $\lambda_{k} = 0 \: \forall\:k \neq j$, we know that $y_k = 0 \: \forall\: k\neq j$, such that $c = 1$, since $N = y_j$.
Note that we have $d$ terms with this constraint. We can write these terms as follows

\[
\int_{\mathbb{R}^d} \mathcal{L}(\mathbf{y},\bm{\lambda};\bm{\alpha},\bm{\zeta})\dd\bm{\lambda}=
c\prod_{k\neq j}\zeta_k(1 - \zeta_j)\int \frac{\lambda_j^{\alpha_j - 1}e^{-\lambda_j}}{\Gamma(\alpha_j)}\dd \lambda_j =(1 - \zeta_j)\prod_{k\neq j}\zeta_k.
\]

$\bullet$ Sets of inflation$\colon$
As per the derivation of the ZANIM PMF in \ref{app:zanim_pmf},
the remaining terms represent cases where at most $d-2$ categories exhibit zero-inflation and can be written compactly using the set notation introduced in \autoref{def:set_K}. For a given $\mathcal{K} \in \mathfrak{K}$, we know when $\lambda_k = 0$ for all $k \in \mathcal{K}$ that $y_k = 0 \: \forall\:k \in \mathcal{K}$ and that $\lambda_j = 0$ for all $j \notin \mathcal{K}$. Hence, we have
\begin{align*}
\int_{\mathbb{R}^d} \mathcal{L}(\mathbf{y},\bm{\lambda};\bm{\alpha},\bm{\zeta})\dd\bm{\lambda}
&=
c\prod_{k \in \mathcal{K}}\zeta_k
\prod_{j \notin \mathcal{K}}(1-\zeta_j)
\int_{\mathbb{R}^{\lvert\mathcal{K}\rvert}}
\prod_{j \notin \mathcal{K}}
\left\lbrack
\left(\frac{1}{\sum_{\ell \notin \mathcal{K}}\lambda_\ell}\right)^{y_j}
\frac{\lambda_j^{y_j + \alpha_j - 1}e^{-\lambda_j}}{\Gamma(\alpha_j)}
\right\rbrack\bm{\dd}\bm{\lambda}^{(\mathcal{K})} \\
&=
\frac{\Gamma(\alpha^{\mathcal{K}}_s)\Gamma(N+1)}{\Gamma(N + \alpha^{\mathcal{K}}_s)}
\prod_{k \in \mathcal{K}}\zeta_h
\prod_{j \notin \mathcal{K}}(1-\zeta_j) \frac{\Gamma(y_j + \alpha_j)}{\Gamma(\alpha_j)\Gamma(y_j + 1)},
\end{align*}
where the multivariate integral is over the set $\bm{\lambda}^{(\mathcal{K})}=\{\lambda_j\colon \lambda_j \notin \mathcal{K}\}$.

Collecting the terms while accounting for the restriction that $y_j = 0$ when $\lambda_j = 0$
and using \autoref{def:etas} leads to \autoref{theo:zanidm_pmf}.
\end{proof}
\bibliographystyle{myjmva}
\bibliography{references}

\section*{Supplementary Material}
\renewcommand{\thetable}{S.\arabic{table}}
\renewcommand{\thefigure}{S.\arabic{figure}}
\renewcommand{\thesubsection}{S.\arabic{subsection}}
\renewcommand{\theequation}{S.\arabic{equation}}
\renewcommand{\thealgocf}{S.\arabic{algocf}}
\renewcommand{\thesubsection}{Supp. Mat.~\Alph{subsection}}

In the Supplementary Material,
we present detailed derivations of the ZANIM and ZANIDM inference schemes in
\ref{app:bernoulli_gamma_da} and \ref{app:zanidm_inference_details}, respectively,
the derivation of the moment generating functions via mixture properties in \ref{app:mgf},
posterior summaries for alternative ZANIDM inference schemes in \ref{app:posterior_summaries_zanidm},
additional simulation results with balanced parameter settings in \ref{app:balanced_experiments},
and additional results for the microbiome data analysis in \ref{app:add_microbiome}.

\subsection{ZANIM inference via data augmentation}\label{app:bernoulli_gamma_da}

Inference for the ZANIM parameters is based on the augmented likelihood $p(\mathbf{y}, \phi, \mathbf{z}) = p(\phi \mid \mathbf{y}, \mathbf{z})p(\mathbf{y} \mid \mathbf{z})p(\mathbf{z})$. We establish the validity of this approach by showing how the augmented likelihood in  \eqref{eq:pmf_zanim_augmented}, from which the ZANIM distribution was initially derived, can be recovered from this expression. For this derivation, we adopt the notation $c = \binom{N}{y_1\cdots y_d}$ and drop the subscript $i$, for simplicity.

Let $\mathbf{z} = (z_1, \ldots, z_d)$, where $z_j=0$ corresponds to a structural zero count and $z_j = 1$ represents a count obtained from a sampling distribution (which may also be zero). Assuming $z_j \sim \operatorname{Bernoulli}\left\lbrack 1 - \zeta_j \right\rbrack$ and independence over $j\in\{1,\ldots,d\}$, we have
$p(\mathbf{z}) = \prod_{j=1}^d (1-\zeta_j)^{z_j}\zeta_j^{1-z_j}$.
Conditional on $\mathbf{z}$, we can fully determine which one of the $2^d$ mixture components
from ZANIM that $\mathbf{y}$ belongs to.
This is important because we do not need to use the common mixture model data augmentation
which requires $K=2^d$ latent variables.
Instead, we introduce a latent variable, conditional on $\mathbf{z}$, given by
$(\phi \mid \mathbf{y}, \mathbf{z}) \sim \operatorname{Gamma}\lbrack N, \sum_{j=1}^d \lambda_jz_j\rbrack$.
This is similar to the data augmentation given in  \eqref{eq:gamma_trick},
though here the contributions of structural zeros are removed from the calculation of the rate parameter.
As per \ref{app:zanim_pmf}, we consider
four groups of terms, corresponding to the four types of mixture component in the ZANIM PMF.

$\bullet$ Standard multinomial component$\colon$
For this component, we have that $\mathbf{z}=\mathbf{1}_d$ and
\begin{align*}
&c\prod_{j=1}^d \left\{\left(\dfrac{\lambda_j}{\sum_{k=1}^d\lambda_k} \right)^{y_j}\right\}\dfrac{\phi^{N-1}}{\Gamma(N)}
\exp\left\lbrack -\phi \sum_{j=1}^d\lambda_jz_j\right\rbrack\left(\sum_{j=1}^d\lambda_jz_j\right)^N
\prod_{j=1}^d\Pr\lbrack z_j=1\rbrack \mathds{1}(z_j=1)\\
&=c\dfrac{\phi^{N-1}}{\Gamma(N)}\prod_{j=1}^d\left\{(1-\zeta_j) \lambda_j^{y_j}e^{-\phi\lambda_j}\mathds{1}(z_j=1)\right\},
\end{align*}
where some simplification arises from the fact that $N=\sum_{j=1}^d y_j$.

$\bullet$ $\delta_{\mathbf{0}_d}$ component$\colon$
For this component, we have  $\mathbf{z}=\mathbf{0}_d$ and, subject to some simplifications,
\[
c\prod_{j=1}^{d}\mathds{1}_0(y_j) \times \dfrac{\phi^{N-1}}{\Gamma(N)}e^{-0\phi}0^{0}\times \Pr\lbrack z_j=0\rbrack \mathds{1}_0(z_j)=c\dfrac{\phi^{N-1}}{\Gamma(N)}\prod_{j=1}^{d}\zeta_j\mathds{1}_0(y_j)\mathds{1}_0(z_j).
\]

$\bullet$ $N\mathbf{e}_d^{(j)}$ components$\colon$
For these components, the vector $\mathbf{z}$ has the value $1$ in one entry only.
Suppose the $j$-th entry is $1$, such that $N=y_j$ and $y_k=0\:\forall\:k\neq j$.
We then have
\begin{align*}
&c\prod_{k\colon k\neq j}\left\{\mathds{1}_0(y_k)\times \dfrac{\phi^{N-1}}{\Gamma(N)}
\exp\left\lbrack -\phi \sum_{j=1}^d\lambda_jz_j
\right\rbrack\left(\sum_{j=1}^d\lambda_jz_j\right)^N\times
\Pr\lbrack z_k=0\rbrack \mathds{1}_0(z_k)
\Pr\lbrack z_j=1\rbrack \mathds{1}(z_j=1)
\right\}\\
&=c\dfrac{\phi^{N-1}}{\Gamma(N)}\prod_{k\colon k\neq j}\left\{\zeta_k \mathds{1}_0(y_k)\mathds{1}_0(z_k) \right\}\times (1 - \zeta_j)\lambda_j^{y_j}e^{-\phi \lambda_j}\mathds{1}(z_j=1).
\end{align*}

$\bullet$ Reduced multinomial components$\colon$
For these components, the vector $\mathbf{z}$ contains $0$ at the entries
$k \in \mathcal{K}$ and $1$ at the entries $j \notin \mathcal{K}$,
such that $y_k=0\:\forall\:k\in\mathcal{K}$.
We then have
\begin{align*}
&c\prod_{k \in \mathcal{K}} \mathds{1}_0(y_k)\prod_{j \notin \mathcal{K}}\left\lbrack \left(\dfrac{\lambda_j}{\sum_{\ell \notin \mathcal{K}}\lambda_\ell}\right)^{y_j}\right\rbrack
\times \dfrac{\phi^{N-1}}{\Gamma(N)}\exp\left\lbrack -\phi \sum_{j=1}^d\lambda_jz_j\right\rbrack\left(\sum_{j=1}^d\lambda_jz_j\right)^N\times \prod_{k \in \mathcal{K}}\Pr\lbrack z_k=0\rbrack \mathds{1}_0(z_k)
\prod_{j \notin \mathcal{K}}\Pr\lbrack z_j=1\rbrack \mathds{1}(z_j=1) \\
&=c\prod_{k \in \mathcal{K}} \mathds{1}_0(y_k)\mathds{1}_0(z_k)\zeta_k
\prod_{j \notin \mathcal{K}}\left\lbrack(1-\zeta_j) \mathds{1}(z_j=1)\left(\dfrac{\lambda_j}{\sum_{\ell \notin \mathcal{K}}\lambda_\ell}\right)^{y_j}\right\rbrack\times \dfrac{\phi^{N-1}}{\Gamma(N)}\exp\left\lbrack -\phi \sum_{j\notin \mathcal{K}}\lambda_jz_j\right\rbrack\left(\sum_{j\notin\mathcal{K}}\lambda_jz_j\right)^N\\
&=c\dfrac{\phi^{N-1}}{\Gamma(N)}\prod_{k \in \mathcal{K}} \mathds{1}_0(y_k)\mathds{1}_0(z_k)\zeta_k
\prod_{j \notin \mathcal{K}}\left\lbrack(1-\zeta_j) \lambda_j^{y_j}e^{-\phi\lambda_j} \mathds{1}(z_j=1)
\right\rbrack.
\end{align*}
By summing over all terms above, where $c^\prime=(c\phi^{N-1})/\Gamma(N)$ is a common factor, we obtain
\begin{align*}
p(\mathbf{y}, \phi, \mathbf{z}) &=
c^\prime
\prod_{j=1}^d\left\{(1-\zeta_j) \lambda_j^{y_j}e^{-\phi\lambda_j}\mathds{1}(z_j=1)\right\} +
c^\prime\prod_{j=1}^{d}\zeta_j\mathds{1}_0(y_j)\mathds{1}_0(z_j) \\
&\phantom{=}~+
c^\prime\sum_{j=1}^d\left\lbrack
\prod_{k\colon k\neq j}\left\{\zeta_k \mathds{1}_0(y_k)\mathds{1}_0(z_k) \right\}\times (1 - \zeta_j)\lambda_j^{y_j}e^{-\phi \lambda_j}\mathds{1}(z_j=1)\right\rbrack \\
&\phantom{=}~+c^\prime\sum_{\mathcal{K} \in \mathfrak{K}}\left\lbrack
\prod_{k \in \mathcal{K}} \mathds{1}_0(y_k)\mathds{1}_0(z_k)\zeta_k\prod_{j \notin \mathcal{K}}\left\lbrack(1-\zeta_j) \lambda_j^{y_j}e^{-\phi\lambda_j} \mathds{1}(z_j=1)\right\rbrack\right\rbrack.
\end{align*}
We can also factor out the terms
$
\mathds{1}_0(z_{j})\mathds{1}_0(y_{j})\zeta_j + \mathds{1}(z_{j}=1)(1 - \zeta_j)\lambda_j^{y_{j}}e^{-\phi\lambda_j}$. Then, by noting that $\mathds{1}_0(z_{j}) = 1 - z_j$ and $\mathds{1}(z_{j}=1) = z_j$, we can express the
above sum as
\begin{align}
p(\mathbf{y}, \phi, \mathbf{z}) \label{eq:augmented_likelihood_zanim_phi_z_y}
&=c^\prime\prod_{j=1}^d\left\{\left\lbrack(1 - z_j)\zeta_j + z_j(1-\zeta_j)e^{-\phi\lambda_j}\right\rbrack^{\mathds{1}_0(y_j)}
\times\left\lbrack z_j(1-\zeta_j) \lambda_j^{y_j}e^{-\phi\lambda_j}\right\rbrack^{1-\mathds{1}_0(y_j)}\right\}.
\end{align}
Importantly, the augmented likelihood factors into separate terms for each category after conditioning on $\mathbf{z}$ and $\phi$. Furthermore, we note that the likelihood contribution within a given category is a
product of two terms; one for when $y_j = 0$ and one for when $y_j>0$.

To derive $(\mathbf{z} \mid \mathbf{y}, \phi)$, we first note that
$
p(\mathbf{z} \mid \mathbf{y}, \phi) = \prod_{j=1}^d p(z_j \mid y_j, \phi)$,
since the categories are conditionally independent, as seen by  \eqref{eq:augmented_likelihood_zanim_phi_z_y}. For a given category $j$ when $y_j > 0$, we have that
$p(z_j \mid y_j > 0 , \phi) \propto z_j(1-\zeta_j) \lambda_j^{y_j}e^{-\phi\lambda_j}$.
It is evident that $p(z_j = 1\mid y_j >0 , \phi) = 1$, hence $(z_j \mid y_j > 0, \phi)$ is a degenerate distribution at $1$ when $y_j > 0$. On the other hand, when $y_j = 0$, we have that
$p(z_j \mid y_j = 0 , \phi) \propto (1 - z_j)\zeta_j + z_j(1-\zeta_j)e^{-\phi\lambda_j}$.
Since $z_j \in \{0, 1\}$, we obtain
\[
p(z_j \mid y_j = 0 , \phi) = \dfrac{(1 - z_j)\zeta_j + z_j(1-\zeta_j)e^{-\phi\lambda_j}}{\zeta_j + (1-\zeta_j)e^{-\phi\lambda_j}},
\]
and can therefore characterise the distribution of $(z_j \mid y_j, \phi)$ as per  \eqref{eq:bernoulli_augmented}.
Finally, it is easy to see from  \eqref{eq:augmented_likelihood_zanim_phi_z_y} that summing over $\mathbf{z}$ yields the desired result
\begin{align*}
p(\mathbf{y}, \phi) &=c^\prime\prod_{j=1}^d\left\{
\sum_{z_j=0}^1
\left\lbrack(1 - z_j)\zeta_j + z_j(1-\zeta_j)e^{-\phi\lambda_j}\right\rbrack^{\mathds{1}_0(y_j)}
\times\left\lbrack z_j(1-\zeta_j) \lambda_j^{y_j}e^{-\phi\lambda_j}\right\rbrack^{1-\mathds{1}_0(y_j)}\right\} \nonumber\\\nonumber
&=\dfrac{N!\phi^{N-1}}{\Gamma(N)} \prod_{j=1}^d\left\{\zeta_j\mathds{1}_0(y_j)  + (1-\zeta_j) \dfrac{\lambda_j^{y_j}e^{-\phi\lambda_j}}{y_j!} \right\}.
\end{align*}

The overall Bayesian inference scheme under this data augmentation strategy is presented in \autoref{alg:bernoulli_gamma_data_augmentation}.\medskip

\begin{algorithm}[H]
\caption{MCMC inference algorithm for the ZANIM distribution.}
\label{alg:bernoulli_gamma_data_augmentation}
\SetAlgoLined
\SetAlgoItemize
\SetKwInput{KwInput}{Input}
\SetKwInput{KwOutput}{Output}
\SetKwInput{KwInit}{Initialise}
\KwInput{Data $\{\mathbf{y}_i; 1 \leq i \leq n\}$ and number of Monte Carlo iterations $R_{\mathrm{MCMC}}$.}
\KwInit{$\phi^{(0)}_i$, $\lambda^{(0)}_j$, and $z^{(0)}_{ij}$ for $i \in\{1,\ldots, n\}$ and $j \in \{1,\ldots, d\}$.}
\For{iterations $t$ from $1$ to $R_{\mathrm{MCMC}}$}{
    \For{categories $j$ from $1$ to $d$}{
        \begin{enumerate}
            \item Sample $(\zeta^{(t)}_j \mid  \mathbf{y}, \bm{\phi}^{(t-1)}, \mathbf{z}^{(t-1)})$ from its full conditional in \eqref{eq:posterior_zeta}.
            \item Sample $(\lambda^{(t)}_j \mid \mathbf{y}, \bm{\phi}^{(t-1)}, \mathbf{z}^{(t-1)})$ from its full conditional in \eqref{eq:posterior_lambda}.
            \item Update $(z_{ij}^{(t)} \mid y_{ij}, \phi_i^{(t-1)})$ from its full conditional in  \eqref{eq:bernoulli_augmented}.
        \end{enumerate}}$\bullet$ Update $(\phi^{(t)}_i \mid \mathbf{y}_i, \mathbf{z}_{i}^{(t)}) \sim \operatorname{Gamma}\left\lbrack N_i, \sum_{j=1}^{d}\lambda^{(t)}_jz_{ij}^{(t)}\right\rbrack, \quad \mathrm{for} \quad i\in\{1, \ldots, n\}$.
}
\end{algorithm}
 \subsection{ZANIDM inference via data augmentation}\label{app:zanidm_inference_details}
Here, we provide more details on the derivations presented in Section \ref{sec:zanidm_inference}. First, note that the probability density function of $\lambda_{ij}$ can be written as
\[
p(\lambda_{ij} \mid z_{ij}) =(1 - z_{ij})^{\mathds{1}_0(\lambda_{ij})}z_{ij}^{1 - \mathds{1}_0(\lambda_{ij})} \left( \dfrac{\lambda_{ij}^{\alpha_j - 1}e^{-\lambda_{ij}}}{\Gamma(\alpha_j)}\right)^{1 - \mathds{1}_0(\lambda_{ij})}.
\]
Clearly, for $z_{ij}$, we have $p(z_{ij}) = (1 - \zeta_j)^{z_{ij}}\zeta_{j}^{1 - z_{ij}}$.
Then, from the augmented likelihood given in Section \ref{sec:zanidm_inference}, the full joint distribution of $\lambda_{ij}$ and $z_{ij}$ given the observed data $y_{ij}$ is
\begin{align*}
p(\lambda_{ij}, z_{ij} \mid y_{ij}, \phi_i) &\propto
(1 - \zeta_j)^{z_{ij}}\zeta_{j}^{1 - z_{ij}}
(1 - z_{ij})^{\mathds{1}_0(\lambda_{ij})}z_{ij}^{1 - \mathds{1}_0(\lambda_{ij})}
\times \left(\dfrac{\lambda_{ij}^{y_{ij}+\alpha_j - 1}e^{-\lambda_{ij}(1 + \phi_i) }}{\Gamma(\alpha_j)}\right)^{1 - \mathds{1}_0(\lambda_{ij})} \\
&\propto (1 - \zeta_j)^{z_{ij}}\zeta_{j}^{1 - z_{ij}}
\left\lbrack (1 - z_{ij})\delta_0(\lambda_{ij}) + z_{ij} \dfrac{\lambda_{ij}^{y_{ij} + \alpha_j - 1}e^{-\lambda_{ij}(1 + \phi_i) }}{\Gamma(\alpha_j)} \right\rbrack.
\end{align*}
The marginalisation of the joint distribution $p(\lambda_{ij}, z_{ij} \mid y_{ij}, \phi_i)$ with respect to
$\lambda_{ij}$ is given by
\begin{align*}
p(z_{ij} \mid y_{ij}, \phi_i) &=\int p(\lambda_{ij}, z_{ij} \mid y_{ij}, \phi_i) \dd \lambda_{ij} \\
&\propto (1 - \zeta_j)^{z_{ij}}\zeta_{j}^{1 - z_{ij}}
\left\lbrack (1 - z_{ij})\int \delta_0(\lambda_{ij}) \dd \lambda_{ij} +
z_{ij} \int \dfrac{\lambda_{ij}^{y_{ij} + \alpha_j - 1}e^{-\lambda_{ij}(1 + \phi_i) }}{\Gamma(\alpha_j)} \dd \lambda_{ij} \right\rbrack \\
&\propto(1 - z_{ij})\zeta_{j} + z_{ij}(1 - \zeta_j)\dfrac{(1 +\phi_i)^{-(y_{ij} + \alpha_j)}\Gamma(y_{ij} + \alpha_j)}{\Gamma(\alpha_j)}.\end{align*}
When $y_{ij} > 0$, we know that $z_{ij}=1$ almost surely.
Conversely, when $y_{ij} = 0$, we have that
$p(z_{ij} \mid y_{ij}=0, \phi_i) \propto \left\lbrack(1 - \zeta_j) (1 +\phi_i)^{-\alpha_j}\right\rbrack^{z_{ij}}
\zeta_{j}^{1 - z_{ij}}$. Since $z_{ij} \in \{0, 1\}$, it is easy to obtain the normalising constant and write the distribution of $(z_{ij} \mid y_{ij}, \phi_i)$ as per  \eqref{eq:zanidm_z_update}.
On the other hand, the distribution of $(\lambda_{ij} \mid  y_{ij}, z_{ij}, \phi_i)$, which yields  \eqref{eq:zanidm_lambda_update} when normalised, is
\[
p(\lambda_{ij} \mid y_{ij}, z_{ij}, \phi_i) \propto (1 - z_{ij})\delta_0(\lambda_{ij}) + z_{ij}\lambda_{ij}^{y_{ij} + \alpha_j - 1}e^{-\lambda_{ij}(1 + \phi_i) }.
\]

Finally, we recall that we consider several approaches in Section \ref{sec:zanidm_inference} for updating $\alpha_j$, whose full conditional distribution is given in  \eqref{eq:alpha_j_full_conditional}. Two of these approaches --- namely, MH with a Gaussian random walk and the slice sampler of \citet{Neal2003} --- work by updating $\beta_j$ according to the re-parameterisation $\ln \alpha_j=\beta_j$ and the prior $\beta_j \sim \operatorname{Normal}\lbrack m_j, s_j^2\rbrack$. As these approaches are quite standard, we do not describe them further here. Instead, we provide some details on the data augmentation strategies proposed by \citet{Hamura2023}, who present a general scheme for cases where the parameter of interest appears as the argument of a gamma function, as occurs with the $\bm{\alpha}$ parameter in the Dirichlet, Dirichlet-multinomial, and indeed ZANIDM distributions. Recall that under the prior $\alpha_j \sim \operatorname{Gamma}\lbrack c_j, d_j\rbrack$,
where the category-specific hyper-parameters $c_j$ and $d_j$ are known, our target $\pi(\alpha_j)$, given by  \eqref{eq:target_alpha_da_ptn}, is not straightforward to sample from. The main idea of \citet{Hamura2023} is to introduce auxiliary variables, such that the target can be approximated
by proposing from an independent power-truncated-normal (PTN) distribution and conducting a simple MH step.
These strategies lead to a three-step process, which we adapt to the ZANIDM setting as follows below.

$\bullet$ First step$\colon$ Beta data augmentation for dealing with the term $1/\Gamma(\alpha_j)^{t_j}$ in
\eqref{eq:target_alpha_da_ptn}.\\
Consider $\rho_{kj} \sim \operatorname{Beta}\lbrack \alpha_j + (k - 1)/t_j, (t_j-k+1) / t_j \rbrack,~ k\in\{2,\ldots, t_j\}$.
Then, the target, conditional now on
the auxiliary variables $\bm{\rho}_j = (\rho_{2j}, \ldots, \rho_{t_{j}})$, is given by
\begin{align*}
\pi(\alpha_j \mid \bm{\rho}_j) &\propto
\alpha_j^{c_j + t_j - 1/2}
\exp\left\lbrack -\alpha_j \left(d_j - \sum_{i\colon \lambda_{ij}>0} \ln \lambda_{ij}
-\sum_{k=2}^{t_j} \ln \rho_{kj} - t_j  \right)\right\rbrack\times
\dfrac{1}{\alpha_j^{t_j\alpha_j}}C(\alpha_j),
\end{align*}
where $C(\alpha_j)=\dfrac{(t_j\alpha_j)^{t_j\alpha_j - 1/2}}{\Gamma(t_j\alpha_j)e^{t_j\alpha_j}}$.

$\bullet$ Second step$\colon$ Gamma data augmentation for dealing with the term $1/\alpha_j^{t_j\alpha_j}$.\\
By introducing the auxiliary variable
$w_j \sim \operatorname{Gamma}\lbrack t_j\alpha_j, t_j\alpha_j^2\rbrack$ and defining $p_j^\star = t_{j} + c_j$, $a_j^\star = t_{j} w_j$,
and $b_j^\star = t_{j}\ln w_j + 2 t_{j} + \sum_{i\colon \lambda_{ij} > 0}\ln \lambda_{ij} + \sum_{k=2}^{t_{j}} \ln \rho_{jk} - d_j$, we obtain
\begin{equation}\label{eq:target_da_ptn_rho_w}
\pi(\alpha_j \mid \bm{\rho}_j, w_j)
\propto \alpha_j^{p_j^\star - 1} \exp\left\lbrack-a_j^\star\alpha_j^2 + b_j^\star\alpha_j \right\rbrack
C(\alpha_j)^2.
\end{equation}

$\bullet$ Third step$\colon$ Metropolis-Hastings with independent PTN proposals.\\
The target in  \eqref{eq:target_da_ptn_rho_w}, now conditioned on the
auxiliary variables $\bm{\rho}_j$ and $w_j$, can be written as
$
\pi(\alpha_j \mid \bm{\rho}_j, w_j) \propto f_{\operatorname{PTN}}(\alpha_j; p_j^\star, a_j^\star, b_j^\star)C(\alpha_j)^2,
$
where $f_{\operatorname{PTN}}(x; p, a, b)$ denotes the probability density function of a PTN random variable\footnote{If $X \sim \operatorname{PTN}\lbrack p, a, b\rbrack$, then $f(x) \propto x^{p-1}e^{-ax^2 + bx}$ for $x>0$, $a>0$, $p>0$, and $b\neq0$. We note that this unnormalised density is of the same form as that of the modified-half-normal distribution \citep{Sun2023}.}. Following \citet{Hamura2023}, we consider independent PTN proposals with the same parameters, i.e.,
$\alpha_j^{(t)} \sim \operatorname{PTN}\lbrack p_j^\star, a_j^\star, b_j^\star \rbrack$, for which we use rejection sampling.
Then, due to the proposals being independent and of the same form as the target, the MH acceptance probability to move from $\alpha_j^{(t-1)}$ to $\alpha_j^{(t)}$ simplifies to $\min\{1,C(\alpha_j^{(t)})^2/C(\alpha_j^{(t-1)})^2\}$.
As shown by \citet{Hamura2023}, the factor $C(\alpha_j)$ is almost constant when $\alpha_j$ is not extremely small, and the acceptance probability is close to $1$.

The overall Bayesian inference scheme for the parameters of the ZANIDM distribution is presented in \autoref{alg:zanidm_inference}.\medskip

\begin{algorithm}[H]
\caption{MCMC inference algorithm for the ZANIDM distribution.}
\label{alg:zanidm_inference}
\SetAlgoLined
\SetAlgoItemize
\SetKwInput{KwInput}{Input}
\SetKwInput{KwOutput}{Output}
\SetKwInput{KwInit}{Initialise}
\KwInput{Data $\{\mathbf{y}_i; 1 \leq i \leq n\}$ and number of Monte Carlo iterations $R_{\mathrm{MCMC}}$.}
\KwInit{$\phi^{(0)}_i$, $\lambda^{(0)}_{ij}$, $z^{(0)}_{ij}$, $\zeta_j^{(0)}$, and $\alpha_j^{(0)}$ for $i \in\{1,\ldots, n\}$ and $j \in\{1,\ldots, d\}$.}
\For{iterations $t$ from $1$ to $R_{\mathrm{MCMC}}$}{
\For{categories $j$ from $1$ to $d$}{
\begin{enumerate}
\item Sample $(\zeta_j^{(t)} \mid \mathbf{y}, \bm{\phi}^{(t-1)},
\mathbf{z}^{(t-1)})$ from its full conditional in  \eqref{eq:posterior_zeta_zanidm}.

\item Sample either from $\pi(\alpha_j)$ in  \eqref{eq:target_alpha_da_ptn} or
from $\pi(\beta_j)$ in  \eqref{eq:target_beta_mh_ss}.

\item Update $(z_{ij}^{(t)} \mid y_{ij}, \phi_i^{(t-1)})$ from its collapsed conditional distribution in  \eqref{eq:zanidm_z_update}.

\item Update $(\lambda_{ij}^{(t)} \mid y_{ij}, z_{ij}^{(t)}, \phi_i^{(t-1)})$ from its full conditional in  \eqref{eq:zanidm_lambda_update}.
\end{enumerate}}
$\bullet$ Update $(\phi_{i}^{(t)} \mid \mathbf{y}_i, \bm{\lambda}^{(t)}_i) \sim \operatorname{Gamma}\left\lbrack N_i, \sum_{j=1}^n\lambda^{(t)}_{ij}\right\rbrack, \quad \mathrm{for} \quad i\in\{1, \ldots, n\}$.
}
\end{algorithm}
 \subsection{Moment generating functions via mixture properties}\label{app:mgf}

We can find the moment generating function (MGF) for both distributions using the moment properties of mixtures with $g(\mathbf{Y}) = e^{\mathbf{t}\cdot \mathbf{Y}}$.
By identifying the component-specific distributions based on our novel stochastic representations of the
ZANIM and ZANIDM distributions in terms of finite mixtures,
we note that the first two terms $\delta_{\mathbf{0}_d}(\cdot)$ and $N\mathbf{e}_d^{(j)}$ are degenerate random vectors, while the
remaining terms follow multinomial and DM distributions, respectively. Thus,
\begin{equation}\label{eq:zanim_mgf_deriv}
M_{\mathbf{Y}}(\mathbf{t}) = \mathbb{E}\lbrack e^{\mathbf{t}\cdot \mathbf{Y}} \rbrack =
\eta_0M_{\delta_{\mathbf{0}_d}(\cdot)}(\mathbf{t}) +
\sum_{j=1}^d\eta_N^{(j)}M_{N\mathbf{e}_d^{(j)}}(\mathbf{t}) +
\eta_dM_{\mathbf{X}}(\mathbf{t}) +
\sum_{\mathcal{K} \in \mathfrak{K}}\eta_\mathcal{K}M_{\mathbf{X}^\mathcal{K}}(\mathbf{t}),
\end{equation}
where $\mathbf{t} = (t_1, \ldots, t_d)$, and the random vectors $\mathbf{X}$ and $\mathbf{X}^\mathcal{K}$ have
multinomial
or DM distributions with
appropriate dimension and parameters. Expanding the sum in  \eqref{eq:zanim_mgf_deriv} with the
MGFs of the corresponding components
trivially yields the ZANIM MGF as follows
\[
M_{\mathbf{Y}}(\mathbf{t})
=
\eta_0 + \sum_{j=1}^d\eta_N^{(j)} \exp\,(Nt_j)
+ \eta_d\left(\sum_{j=1}^d\theta_j\exp\,(t_j)\right)^N
+ \sum_{\mathcal{K} \in \mathfrak{K}}
\eta_\mathcal{K}\left(
\sum_{j\notin\mathcal{K}}\theta_j^\mathcal{K}\exp\,(t_j)
\right)^N.
\]
By way of verification, we have that the first partial derivative of $M_{\mathbf{Y}}(\mathbf{t})$ w.r.t. $t_j$ is
\begin{align*}
\dfrac{\partial M_\mathbf{Y}(\mathbf{t})}{\partial t_j} &=
N\eta_N^{(j)}\exp\,(Nt_j) +
N\eta_d\theta_j\exp\,(t_j) \left(\sum_{j=1}^d\theta_j\exp\,(t_j)\right)^{N-1} + N
\sum_{\mathcal{S}_j \in \mathfrak{S}_j} \eta_{\mathcal{S}_j}\theta_j^{\mathcal{S}_j}\exp\,(t_j)
\left\lbrack \sum_{j \notin \mathcal{S}_j}\theta_j^{\mathcal{S}_j}\exp\,(t_j) \right\rbrack^{N-1},
\end{align*}
where the partial derivative of the last term w.r.t. $t_j$ vanishes for sets outside the defined $\mathfrak{S}_j$. As both $\sum_{j=1}^d\theta_j = 1$ and $\sum_{j \notin \mathcal{S}_j}\theta_j^{\mathcal{S}_j}=1$,
it is trivial to show that the expression derived for $\mathbb{E}\lbrack Y_j \rbrack$ in \autoref{prop:zanim_moments} is recovered by setting $t_j=0$ here. Using the usual argumentation for MGFs also recovers $\mathrm{Var}\lbrack Y_j \rbrack$ under ZANIM and yields similar results for ZANIDM.
 \subsection{Posterior summaries for alternative ZANIDM inference schemes}\label{app:posterior_summaries_zanidm}

In Section \ref{sec:simstudy2}, the DA-PTN approach was used to infer the concentration parameters $\bm{\alpha}$ when fitting the ZANIDM model to data sets containing $500$ observations generated from the ZANIM and ZANIDM distributions. For completeness, we report equivalent results under the slice sampling (SS) and MH-RW approaches, along with the \texttt{ZIDM} \textsf{R} package of \citet{Koslovsky2023} in  \autoref{tab:posterior_summaries_zanidm_dgp_zanim} (for data generated from the ZANIM distribution) and \autoref{tab:posterior_summaries_zanidm_dgp_zanidm} (for data generated from the ZANIDM distribution). The performance of each approach is broadly in line with the insights gleaned from the comparative simulations in Section \ref{sec:simstudy1}. The DA-PTN results shown here are exact reproductions of the corresponding rows of \autoref{tab:posterior_summaries_sim_2}.

\begin{table}[H]
\centering
\captionsetup{width=.6\textwidth}
\caption{
Posterior means, lower (LCI) and upper (UCI) limits of $95\%$ credible intervals, and effective sample size (ESS) ratios for the parameters of the ZANIDM distribution.
We report the posterior summaries of four fits which use different sampling schemes to infer $\bm{\alpha}$.
The data are generated from the ZANIM distribution with a sample size of $500$,
using the following true parameter values: $\bm{\theta} \in \{0.05, 0.70, 0.25\}$, $\bm{\zeta} \in \{0.05, 0.15, 0.10\}$, and $N = 30$.}
\label{tab:posterior_summaries_zanidm_dgp_zanim}
\vskip-0.3cm\rule{.6\textwidth}{0.4pt}\smallskip

\begin{tabular*}{.6\textwidth}{@{\hspace{0.01\textwidth}\extracolsep{\fill}}lcrrrrr@{\hspace{0.01\textwidth}}}
Method & Parameter & Mean & $95\%$ LCI & $95\%$ UCI &  ESS ratio \\
\midrule
\multirow{6}{*}{DA-PTN} & $\alpha_1$ & 3.859 & 1.456 & 13.026 & 0.054 \\
   & $\alpha_2$ & 56.607 & 18.809 & 219.522 & 0.055 \\
   & $\alpha_3$ & 19.734 & 6.735 & 72.113 & 0.054 \\
   & $\zeta_1$ & 0.011 & 0.000 & 0.050 & 0.581 \\
   & $\zeta_2$ & 0.140 & 0.112 & 0.171 & 0.933 \\
   & $\zeta_3$ & 0.120 & 0.094 & 0.151 & 0.865 \\
   \hline
\multirow{6}{*}{SS} & $\alpha_1$ & 15.959 & 9.438 & 22.193 & 0.008 \\
   & $\alpha_2$ & 239.714 & 142.033 & 327.461 & 0.008 \\
   & $\alpha_3$ & 83.471 & 49.378 & 114.833 & 0.008 \\
   & $\zeta_1$ & 0.016 & 0.001 & 0.047 & 1.034 \\
   & $\zeta_2$ & 0.140 & 0.112 & 0.173 & 1.144 \\
   & $\zeta_3$ & 0.121 & 0.094 & 0.149 & 0.994 \\
   \hline
\multirow{6}{*}{MH-RW} & $\alpha_1$ & 7.286 & 5.646 & 9.739 & 0.009 \\
   & $\alpha_2$ & 108.572 & 84.138 & 143.077 & 0.008 \\
   & $\alpha_3$ & 37.811 & 29.346 & 49.899 & 0.008 \\
   & $\zeta_1$ & 0.014 & 0.000 & 0.042 & 1.047 \\
   & $\zeta_2$ & 0.139 & 0.110 & 0.171 & 1.005 \\
   & $\zeta_3$ & 0.122 & 0.094 & 0.152 & 1.084 \\
   \hline
\multirow{6}{*}{ZIDM} & $\alpha_1$ & 4.786 & 2.809 & 7.157 & 0.001 \\
   & $\alpha_2$ & 70.084 & 40.726 & 104.699 & 0.001 \\
   & $\alpha_3$ & 24.492 & 14.390 & 36.863 & 0.001 \\
   & $\zeta_1$ & 0.019 & 0.004 & 0.048 & 0.466 \\
   & $\zeta_2$ & 0.139 & 0.108 & 0.171 & 1.019 \\
   & $\zeta_3$ & 0.121 & 0.093 & 0.152 & 1.103 \\
\end{tabular*}
\rule{.6\textwidth}{0.4pt}
\end{table}

In \autoref{tab:posterior_summaries_zanidm_dgp_zanim}, DA-PTN, MH-RW, slice sampling, and \texttt{ZIDM} perform similarly in terms of parameter recovery, although only DA-PTN has credible intervals which contain the true values of $\bm{\zeta}$ in each case. As per Section \ref{sec:simstudy2}, inference for $\bm{\alpha}$ is poor, under all approaches, in this scenario with ZANIM as the data-generating process. In \autoref{tab:posterior_summaries_zanidm_dgp_zanidm}, where ZANIDM is the data-generating process, DA-PTN, MH-RW, and slice sampling again perform similarly, though \texttt{ZIDM} is now notably worse. Only DA-PTN has credible intervals which contain the true values of all parameters, and the ESS ratios for the $\bm{\alpha}$ parameters under \texttt{ZIDM} are unacceptably low. We conjecture that this is attributable to the joint update of $\lambda_{ij}$ and $z_{ij}$ performed by \texttt{ZIDM}.

\begin{table}[H]
\centering
\captionsetup{width=.6\textwidth}
\caption{
Posterior means, lower (LCI) and upper (UCI) limits of $95\%$ credible intervals, and effective sample size (ESS) ratios for the parameters of ZANIDM distribution.
We report the posterior summaries using different sampling schemes to infer $\bm{\alpha}$.
The data are generated from the ZANIDM distribution with a sample size of $500$,
using the following true parameter values: $\bm{\alpha} \in \{2, 28, 10\}$, $\bm{\zeta} \in \{0.05, 0.15, 0.10\}$, and $N = 30$.
}
\label{tab:posterior_summaries_zanidm_dgp_zanidm}
\vskip-0.3cm\rule{.6\textwidth}{0.4pt}\smallskip

\begin{tabular*}{.6\textwidth}{@{\hspace{0.01\textwidth}\extracolsep{\fill}}lcrrrrr@{\hspace{0.01\textwidth}}}
Method & Parameter & Mean & $95\%$ LCI & $95\%$ UCI &  ESS ratio \\
 \midrule
  \multirow{6}{*}{DA-PTN} & $\alpha_1$ & 1.241 & 0.787 & 2.301 & 0.148 \\
   & $\alpha_2$ & 18.822 & 11.284 & 35.420 & 0.145 \\
   & $\alpha_3$ & 6.829 & 4.130 & 12.985 & 0.133 \\
   & $\zeta_1$ & 0.025 & 0.001 & 0.095 & 0.476 \\
   & $\zeta_2$ & 0.129 & 0.101 & 0.160 & 0.902 \\
   & $\zeta_3$ & 0.093 & 0.068 & 0.120 & 0.827 \\
   \hline
\multirow{6}{*}{SS} & $\alpha_1$ & 1.497 & 1.191 & 1.880 & 0.186 \\
   & $\alpha_2$ & 23.389 & 19.012 & 28.844 & 0.151 \\
   & $\alpha_3$ & 8.486 & 6.890 & 10.470 & 0.153 \\
   & $\zeta_1$ & 0.032 & 0.001 & 0.093 & 0.907 \\
   & $\zeta_2$ & 0.128 & 0.100 & 0.158 & 0.863 \\
   & $\zeta_3$ & 0.094 & 0.070 & 0.120 & 1.052 \\
   \hline
\multirow{6}{*}{MH-RW} & $\alpha_1$ & 1.499 & 1.173 & 1.933 & 0.034 \\
   & $\alpha_2$ & 23.616 & 18.841 & 29.927 & 0.025 \\
   & $\alpha_3$ & 8.566 & 6.805 & 10.934 & 0.030 \\
   & $\zeta_1$ & 0.031 & 0.001 & 0.092 & 0.350 \\
   & $\zeta_2$ & 0.127 & 0.100 & 0.157 & 1.050 \\
   & $\zeta_3$ & 0.094 & 0.070 & 0.122 & 1.012 \\
   \hline
\multirow{6}{*}{ZIDM} & $\alpha_1$ & 1.481 & 1.204 & 1.770 & 0.006 \\
   & $\alpha_2$ & 22.933 & 18.851 & 26.058 & 0.006 \\
   & $\alpha_3$ & 8.336 & 6.879 & 9.574 & 0.006 \\
   & $\zeta_1$ & 0.041 & 0.007 & 0.094 & 0.556 \\
   & $\zeta_2$ & 0.128 & 0.100 & 0.159 & 0.961 \\
   & $\zeta_3$ & 0.094 & 0.069 & 0.122 & 1.110 \\
\end{tabular*}
\rule{.6\textwidth}{0.4pt}
\end{table}
 \subsection{Additional simulation results with balanced parameter settings}\label{app:balanced_experiments}

The simulation design in Section \ref{sec:simstudy2} was particularly challenging by virtue of matching the data-generating processes to the parameter settings used in \autoref{fig:comparison_zanim_zanidm_pmf}, in the sense that the ZANIM parameters $\bm{\theta}$ and ZANIDM parameters $\bm{\alpha}$ were heavily imbalanced. For completeness, we conduct additional simulation experiments with data sets containing $500$ observations generated from both distributions using balanced values for the $\bm{\theta}$ and $\bm{\alpha}$ parameters of the
ZANIM and ZANIDM distributions, respectively. Specifically, we keep the same number of $d=3$ categories, the same number of trials $N=30$, and the same $\bm{\zeta}=(0.05, 0.15, 0.10)$ configuration for the zero-inflation parameters in each case, with $\bm{\theta}=(1/3, 1/3, 1/3)$ and $\bm{\alpha}=(1.0, 1.0, 1.0)$ under the respective DGPs. It is important to stress that the total concentration $\alpha_s=\sum_{j=1}^d\alpha_j=3$ is quite low. As per Section \ref{sec:simstudy2}, we consider only the DA-PTN approach to infer the $\bm{\alpha}$ parameters when fitting the ZANIDM model.

The posterior summaries in \autoref{tab:balanced_zanim} show that the true values of the $\bm{\zeta}$ parameters are within the $95\%$ credible intervals throughout, with the exception of $\zeta_2$ for the ZANIM model fitted to data generated from the ZANIDM distribution. Furthermore, inference for the $\bm{\theta}$ parameters under the ZANIM model and the $\bm{\alpha}$ parameters under the ZANIDM model are satisfactory when the DGP matches the model. Notably, the ESS ratios for the $\bm{\alpha}$ parameters of ZANIDM are much improved in these balanced cases, particularly when the data are generated from the ZANIDM distribution, compared to the corresponding values in \autoref{tab:posterior_summaries_sim_2}. It is also notable that the posterior mean estimates of the $\bm{\theta}$ parameters when the ZANIM model is fitted to data generated from the ZANIDM distribution are all approximately $1/3$.
\begin{table}[H]
\centering
\captionsetup{width=\textwidth}
\caption{
Posterior means, lower (LCI) and upper (UCI) limits of $95\%$ credible intervals, and effective sample size (ESS) ratios for the parameters of the ZANIM and ZANIDM models.
We report the posterior summaries for each model under two data-generating processes (DGPs), which are based on the ZANIM and ZANIDM distributions. For each DGP, $500$ samples are simulated from the corresponding distribution using balanced parameter configurations.}
\label{tab:balanced_zanim}
\vskip-0.3cm\hrule\smallskip
\begin{tabular*}{\textwidth}{@{\hspace{0.01\textwidth}\extracolsep{\fill}}llcrrrrr@{\hspace{0.01\textwidth}}}
DGP & Model & Parameter & Mean & $95\%$ LCI & $95\%$ UCI &  ESS ratio \\
  \midrule
\multirow{12}{*}{\shortstack[l]{ZANIM:\\$\bm{\theta} \in \{0.05, 0.70, 0.25\}$,\\$\bm{\zeta} \in \{0.05, 0.15, 0.10\}$,\\$N=30$ trials.}} & \multirow{6}{*}{ZANIDM} & $\alpha_1$ & 9.590 & 6.773 & 14.080 & 0.276 \\
   &  & $\alpha_2$ & 9.633 & 6.759 & 13.932 & 0.295 \\
   &  & $\alpha_3$ & 10.086 & 7.139 & 14.842 & 0.267 \\
   &  & $\zeta_1$ & 0.036 & 0.021 & 0.056 & 0.857 \\
   &  & $\zeta_2$ & 0.140 & 0.109 & 0.170 & 0.967 \\
   &  & $\zeta_3$ & 0.121 & 0.095 & 0.151 & 0.846 \\
   \cmidrule{2-8} & \multirow{6}{*}{ZANIM} & $\theta_1$ & 0.328 & 0.321 & 0.335 & 0.864 \\
   &  & $\theta_2$ & 0.333 & 0.326 & 0.341 & 0.969 \\
   &  & $\theta_3$ & 0.339 & 0.331 & 0.346 & 1.005 \\
   &  & $\zeta_1$ & 0.036 & 0.022 & 0.054 & 1.010 \\
   &  & $\zeta_2$ & 0.140 & 0.111 & 0.173 & 0.937 \\
   &  & $\zeta_3$ & 0.120 & 0.094 & 0.149 & 0.947 \\
   \hline
\multirow{12}{*}{\shortstack[l]{ZANIDM:\\$\bm{\alpha}\in \{2.0, 28.0, 10.0\}$,\\$\bm{\zeta}\in \{0.05, 0.15, 0.10\}$,\\$N=30$ trials.}} & \multirow{6}{*}{ZANIDM} & $\alpha_1$ & 0.885 & 0.718 & 1.068 & 0.537 \\
   &  & $\alpha_2$ & 0.992 & 0.790 & 1.227 & 0.528 \\
   &  & $\alpha_3$ & 0.966 & 0.784 & 1.193 & 0.522 \\
   &  & $\zeta_1$ & 0.038 & 0.004 & 0.076 & 0.718 \\
   &  & $\zeta_2$ & 0.182 & 0.134 & 0.224 & 0.923 \\
   &  & $\zeta_3$ & 0.119 & 0.081 & 0.158 & 0.882 \\
   \cmidrule{2-8}
 & \multirow{6}{*}{ZANIM} & $\theta_1$ & 0.312 & 0.305 & 0.321 & 0.975 \\
   &  & $\theta_2$ & 0.352 & 0.343 & 0.360 & 1.032 \\
   &  & $\theta_3$ & 0.336 & 0.327 & 0.345 & 1.123 \\
   &  & $\zeta_1$ & 0.108 & 0.082 & 0.136 & 1.024 \\
   &  & $\zeta_2$ & 0.227 & 0.192 & 0.264 & 1.027 \\
   &  & $\zeta_3$ & 0.171 & 0.140 & 0.206 & 1.050 \\
\end{tabular*}
\hrule
\end{table}
Finally, \autoref{tab:elpd_balanced} gives the ELPD results for both models under both DGPs. As per \autoref{tab:elpd_sim2}, the ELPD favours the distribution used to generate the data. As regards the DM model included in this comparison, we note that it outperforms the ZANIM model under the ZANIDM DGP. This was not the case in \autoref{tab:elpd_sim2}, which is likely due to the similarity of the $\mathrm{DI}\lbrack Y_j \rbrack$ indices of both distributions under the parameter settings used in Section \ref{sec:simstudy2} (see \autoref{tab:theoretical_moments}). Under the balanced parameter settings used here to generate the data, these indices are much higher under ZANIDM than ZANIM, which indicates that the $\bm{\alpha}$ parameters contribute more to the overdispersion in the data than the $\bm{\zeta}$ parameters.
\begin{table}[H]
\centering
\captionsetup{width=.6\textwidth}
\caption{
Bayesian model evaluation metrics for different models with data simulated under two data-generating processes (DGPs) based on the ZANIM and ZANIDM distributions.
We report the expected log-predictive density $(\widehat{\operatorname{elpd}})$ and its standard error ($\operatorname{se}(\widehat{\operatorname{elpd}})$).
For each DGP, $500$ samples are simulated from the corresponding distribution using balanced parameter configurations.}
\label{tab:elpd_balanced}
\vskip-0.3cm\rule{.6\textwidth}{0.4pt}\smallskip

\begin{tabular*}{.6\textwidth}{@{\hspace{0.01\textwidth}\extracolsep{\fill}}llrr@{\hspace{0.01\textwidth}}}
DGP & Model & $\widehat{\operatorname{elpd}}$ & $\operatorname{se}(\widehat{\operatorname{elpd}})$
\\
 \midrule
\multirow{4}{*}{ZANIM} & ZANIM & $-2469.693$ & 22.395 \\
   & DA-PTN & $-2565.440$ & 11.877 \\
   & DM & $-3018.176$ & 18.528 \\
   & Multinomial & $-3809.590$ & 114.270 \\
\hline
  \multirow{4}{*}{ZANIDM} & DA-PTN & $-2986.729$ & 18.857 \\
   & DM & $-3011.296$ & 17.112 \\
   & ZANIM & $-4800.192$ & 112.813 \\
   & Multinomial & $-7481.003$ & 179.037 \\
\end{tabular*}
\rule{.6\textwidth}{0.4pt}
\end{table} \subsection{Additional results for the microbiome data analyses}\label{app:add_microbiome}

In Section \ref{sec:microbiome}, both the ZANIM and ZANIDM models were applied to a real human gut microbiome dataset from \citet{Wu2011}. \autoref{fig:posterior_chosen_taxa} showed trace plots for certain parameters of interest for the \textit{Actinomycinae} and \textit{Prevotella} taxa.
We now provide complementary results for the remaining taxa.
\autoref{fig:trace_plot_zetas} shows the trace plots of the $\zeta_j$ parameters for all taxa under both the ZANIM and ZANIDM models, while \autoref{fig:trace_plot_thetas} and \autoref{fig:trace_plot_alphas} show the
trace plots of all $\theta_j$ and $\alpha_j$ parameters, under the ZANIM and ZANIDM models, respectively.
\begin{figure}[H]
    \centering
    \includegraphics[width=1.0\linewidth]{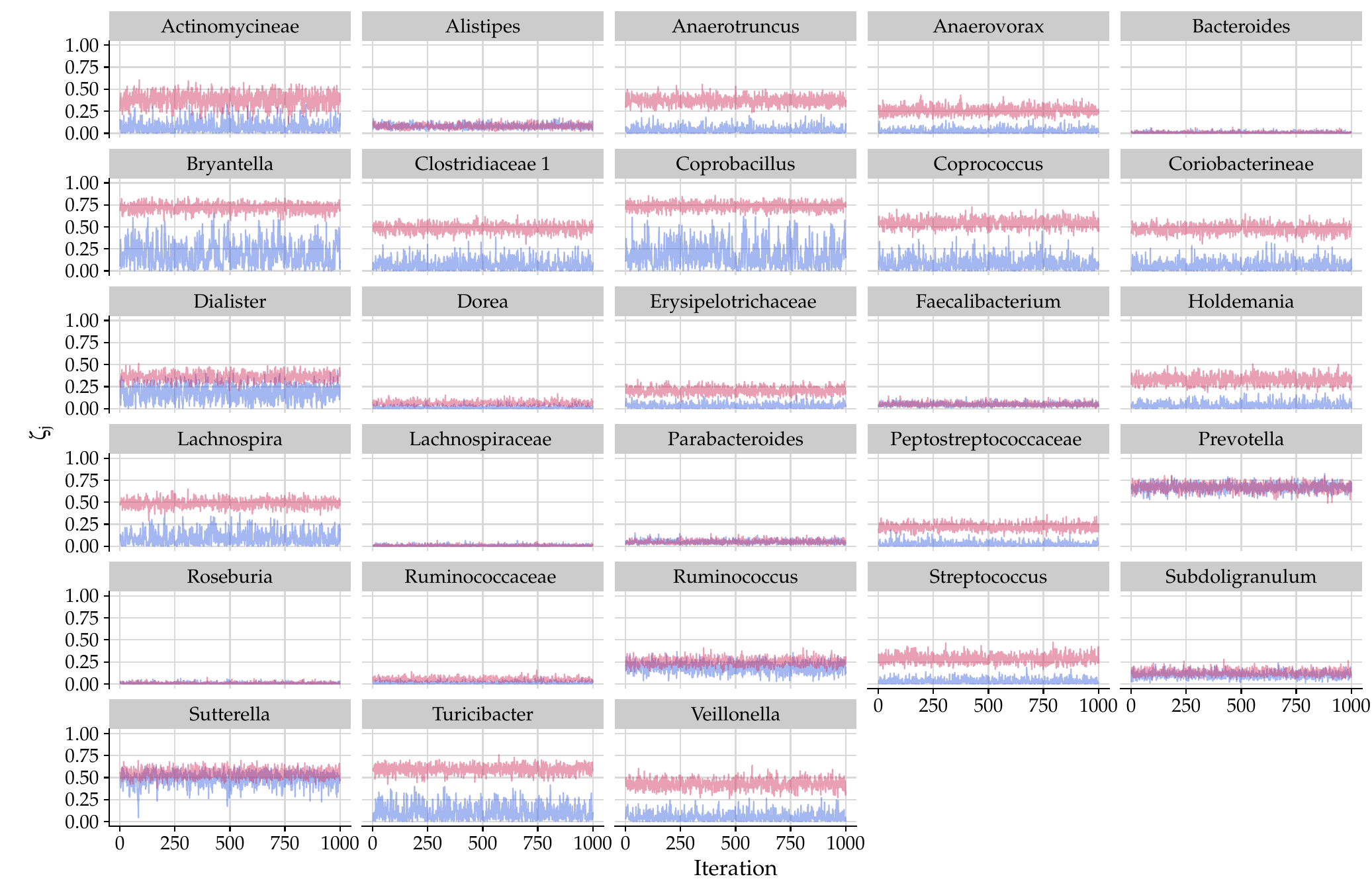}
    \caption{Trace plots of the posterior draws of $\bm{\zeta}$  for all taxa under ZANIM (red) and ZANIDM (blue) after burn-in and thinning.}
    \label{fig:trace_plot_zetas}
\end{figure}
\begin{figure}[H]
    \centering
    \includegraphics[width=1.0\linewidth]{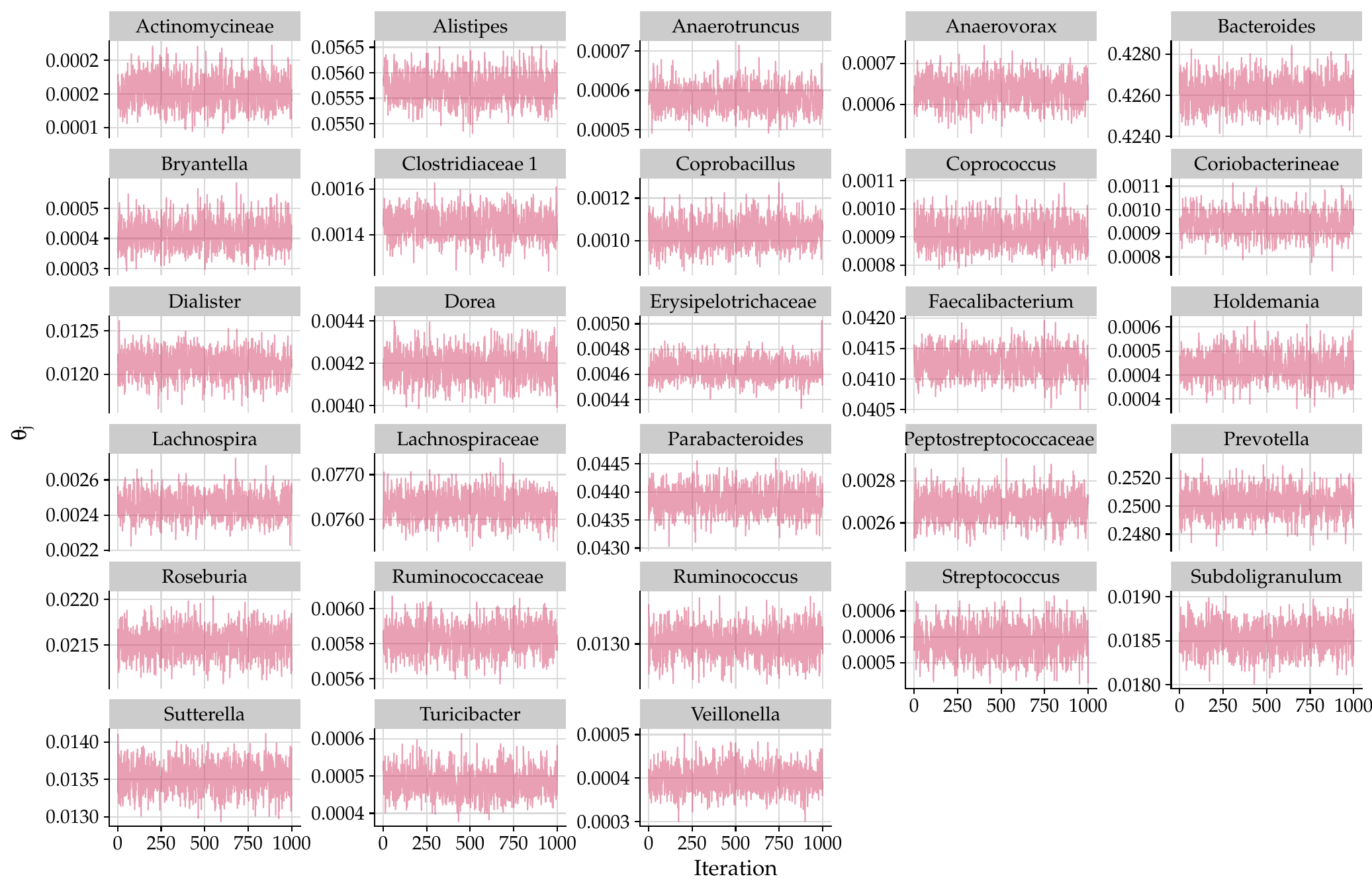}
    \caption{Trace plots of the posterior draws of $\bm{\theta}$ for all taxa under ZANIM after burn-in and thinning.}
    \label{fig:trace_plot_thetas}
\end{figure}
\begin{figure}[H]
    \centering
    \includegraphics[width=1.0\linewidth]{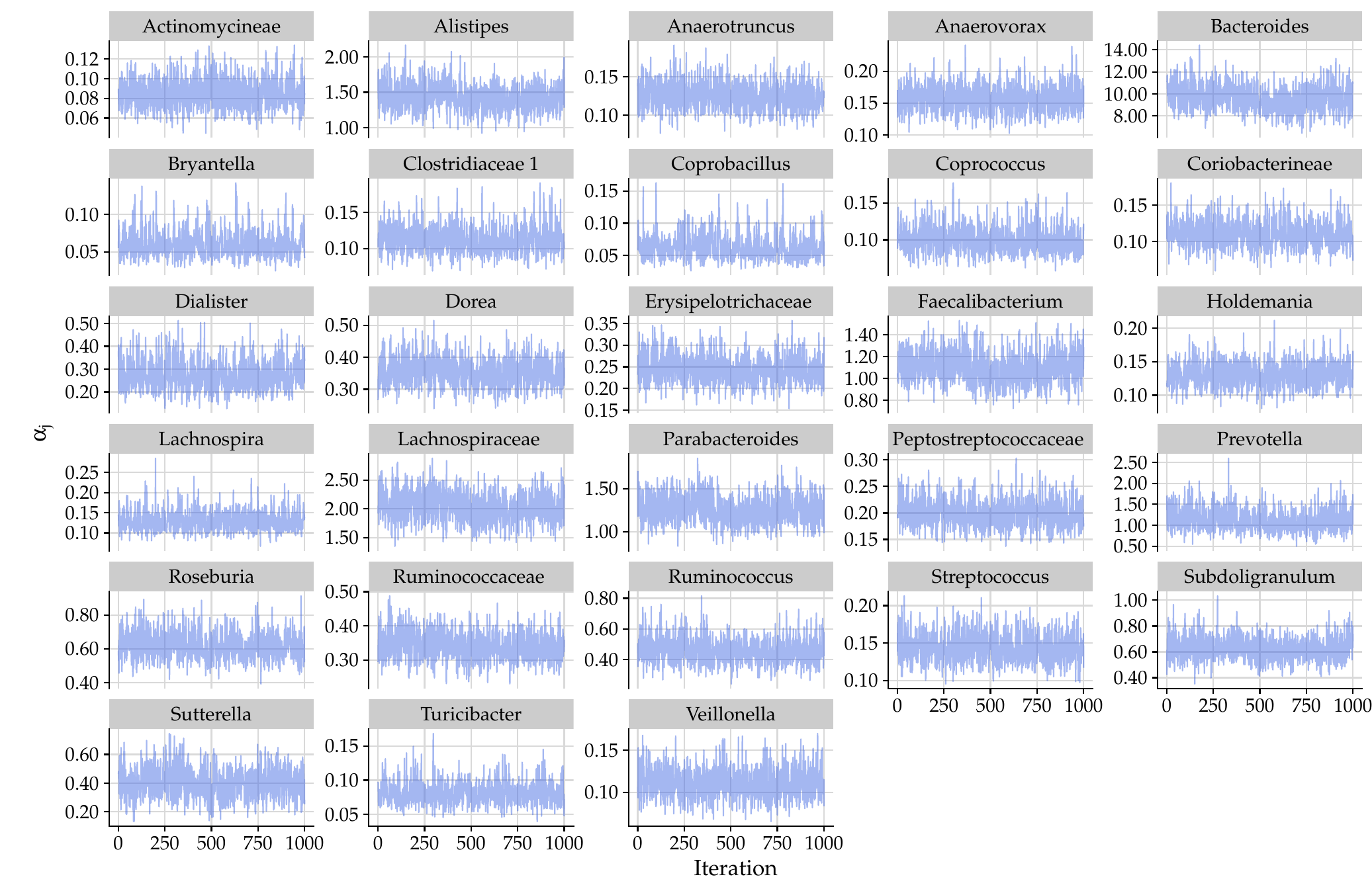}
    \caption{Trace plots of the posterior draws of $\bm{\alpha}$ for all taxa under ZANIDM after burn-in and thinning.}
    \label{fig:trace_plot_alphas}
\end{figure}

We conclude the analyses of the microbiome data by reporting
the posterior means of the relative abundances.
We denote the relative abundances by $\varphi_{ij}$ for all $i\in\{1, \ldots, 98\}$
observations and $j\in\{1, \ldots, 28\}$ taxa.
Under the ZANIDM model, these quantities are easily obtained by
normalising the posterior draws of the latent parameter $\lambda^{(m)}_{ij}$, i.e,
$\varphi^{(m)}_{ij} = \lambda^{(m)}_{ij} / \sum_{k=1}^d\lambda^{(m)}_{ik}$, where
$m\in\{1\ldots, M\}$ indexes the number of valid posterior samples.
Similar relative abundance estimates can also be derived under the ZANIM model using the $\lambda_j$ parameters and the Bernoulli latent variables $z_{ij}$ via $\varphi_{ij}^{(m)} = z^{(m)}_{ij}\lambda^{(m)}_{j} / \sum_{k=1}^d z^{(m)}_{ik}\lambda^{(m)}_{k}$.
We refer the reader to Section \ref{sec:inference} to recall the details of the parameters and latent variables employed in the inference schemes for the ZANIM and ZANIDM models.

Following \citet{Koslovsky2023}, \autoref{fig:posterior_abundance} presents heat maps of the posterior means of
$\varphi_{ij}$ under both models. In each case, \textit{Bacteroides} appears to be the most abundant taxon overall. However, this pattern is not consistent across
individuals, with some showing low to moderate levels, particularly under the ZANIDM model. In any case, several other taxa exhibit notable variation across individuals and these individual-level differences could be further investigated in future work through the incorporation of covariates to better
explain the observed variability.
\begin{figure}[H]
    \centering
    \includegraphics[width=1.0\linewidth]{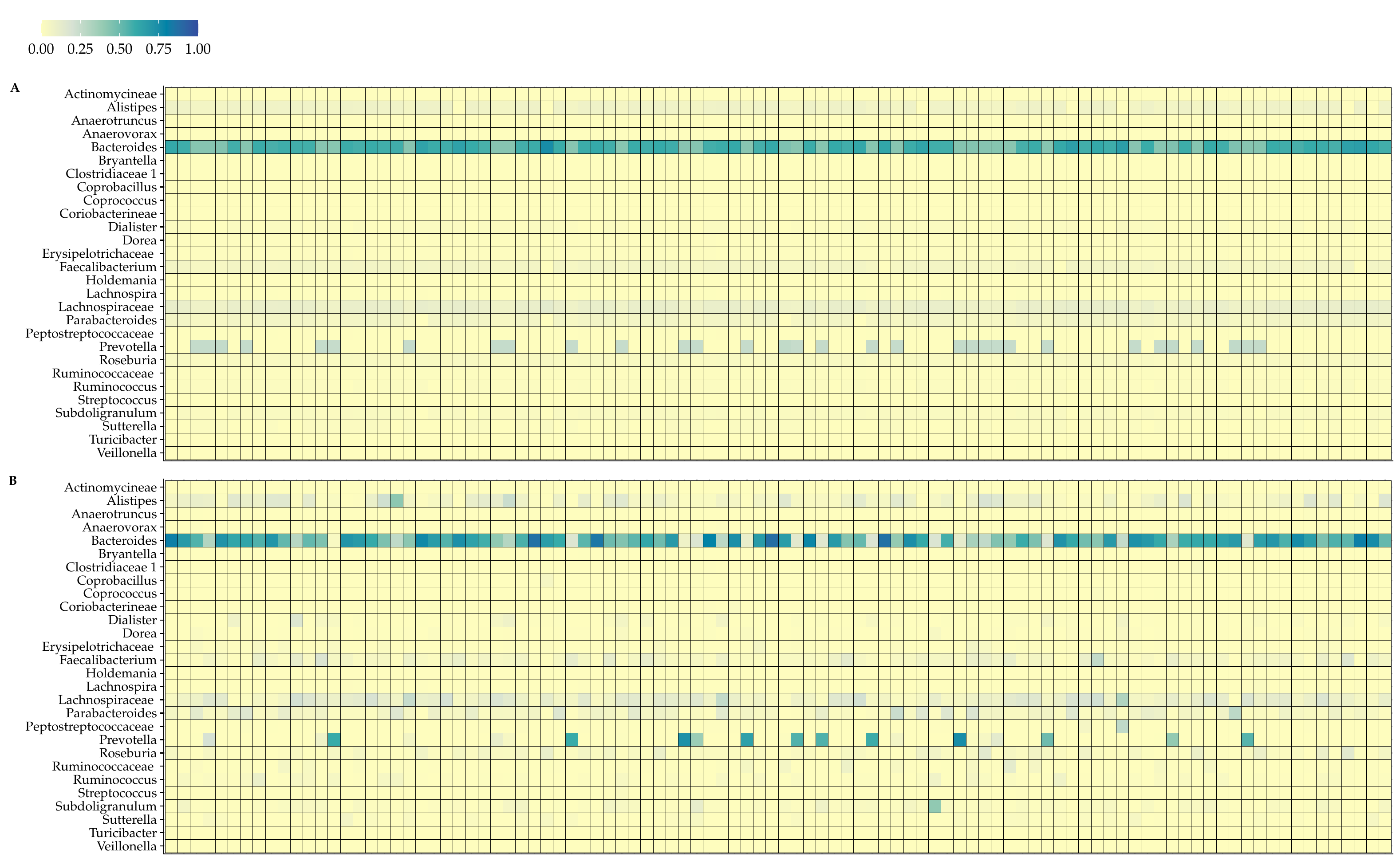}
    \caption{Heat maps of the posterior means of the relative abundances. The rows are the taxa and the columns are the individual units. \textbf{A}: ZANIM. \textbf{B}: ZANIDM.}
    \label{fig:posterior_abundance}
\end{figure}
\end{document}